\keywords{Sustainable computing, online algorithms, resource scaling}
\newcommand{\algmargin}{\the\ALG@thistlm}
\newlength{\whilewidth}
\algnewcommand{\parState}[1]{\State%
	\parbox[t]{\dimexpr\linewidth-\algmargin}{\strut #1\strut}}
\DeclareMathAlphabet{\mathmybb}{U}{bbold}{m}{n}
\newtheorem{theorem}{Theorem}[section]
\newtheorem{lemma}[theorem]{Lemma}
\newtheorem*{lemma*}{Lemma}
\newtheorem{definition}[theorem]{Definition}
\newcommand{\cali}{\mathcal{I}}
\newcommand{\OCS}{\ensuremath{\mathsf{OCS}}\xspace}
\newcommand{\OCSU}{\ensuremath{\mathsf{OCSU}}\xspace}
\newcommand{\OCSUmin}{\ensuremath{\mathsf{OCSU}}\xspace}
\newcommand{\RORO}{\ensuremath{\texttt{RORO}}\xspace}
\newcommand{\ROROcmax}{\ensuremath{\texttt{RORO}_{\text{cmax}}}\xspace}
\newcommand{\ROROcmin}{\ensuremath{\texttt{RORO}_{\text{cmin}}}\xspace}
\newcommand{\ROROpred}{\ensuremath{\texttt{RORO}_{\text{pred}}}\xspace}
\newcommand{\cmax}{\ensuremath{c_{\text{max}}}\xspace}
\newcommand{\cmin}{\ensuremath{c_{\text{min}}}\xspace}
\newcommand{\tmpalgg}{\ensuremath{\hat{\texttt{ALG}}_2}\xspace}
\newcommand{\combalg}{\ensuremath{\texttt{RORO}_{\text{robust}}}\xspace}
\newcommand{\aug}{\ensuremath{\texttt{LACS}}\xspace}
\newcommand{\augd}{\ensuremath{\texttt{D-LACS}}\xspace}
\newcommand{\OPT}{\ensuremath{\texttt{OPT}}\xspace}
\newcommand{\lacs}{\ensuremath{\texttt{LACS}}\xspace}
\newcommand{\OWT}{\ensuremath{\texttt{OWT}_{\text{pred}}}\xspace}
\newcommand{\threshold}{\texttt{Single Threshold}\xspace}
\newcommand{\agnostic}{\texttt{Carbon Agnostic}\xspace}
\newcommand{\carbonscaler}{\texttt{CarbonScaler}\xspace}
\newcommand{\sref}[2]{\hyperref[#2]{#1 \ref{#2}}}
\DeclareMathOperator*{\argmin}{arg\,min}
\patchcmd{\hyper@makecurrent}{%
    \ifx\Hy@param\Hy@chapterstring
        \let\Hy@param\Hy@chapapp
    \fi
}{%
    \iftoggle{inappendix}{
        \@checkappendixparam{chapter}%
        \@checkappendixparam{section}%
        \@checkappendixparam{subsection}%
        \@checkappendixparam{subsubsection}%
        \@checkappendixparam{paragraph}%
        \@checkappendixparam{subparagraph}%
    }{}%
}{}{\errmessage{failed to patch}}
\newcommand*{\@checkappendixparam}[1]{%
    \def\@checkappendixparamtmp{#1}%
    \ifx\Hy@param\@checkappendixparamtmp
        \let\Hy@param\Hy@appendixstring
    \fi
}
\apptocmd{\appendix}{\toggletrue{inappendix}}{}{\errmessage{failed to patch}}
\begin{document}


\author{Roozbeh Bostandoost}
\orcid{0000-0003-0959-6763}
\affiliation{%
  \institution{University of Massachusetts Amherst}
  \country{USA}
  }

\author{Adam Lechowicz}
\orcid{0000-0002-7774-9939}
\affiliation{%
  \institution{University of Massachusetts Amherst}
  \country{USA}
  }

\author{Walid A. Hanafy}
\orcid{0000-0001-5765-8194} 
\affiliation{%
  \institution{University of Massachusetts Amherst}
  \country{USA}
  }

\author{Noman Bashir}
\orcid{0000-0001-9304-910X}
\affiliation{%
  \institution{Massachusetts Institute of Technology}
  \country{USA}
  }

\author{Prashant Shenoy}
\orcid{0000-0002-5435-1901}
\affiliation{%
  \institution{University of Massachusetts Amherst}
  \country{USA}
  }

\author{Mohammad Hajiesmaili}
\orcid{0000-0001-9278-2254}
\affiliation{%
  \institution{University of Massachusetts Amherst}
  \country{USA}
  }

\title{\texttt{LACS}: Learning-Augmented Algorithms for Carbon-Aware Resource Scaling with Uncertain Demand}

\renewcommand{\shortauthors}{Bostandoost et al.}

\begin{abstract}
Motivated by an imperative to reduce the carbon emissions of cloud data centers,
this paper studies the online carbon-aware resource scaling problem with unknown job lengths (\OCSU) and applies it to carbon-aware resource scaling for executing computing workloads.
The task is to dynamically scale resources (e.g., the number of servers) assigned to a job of unknown length such that it is completed before a deadline, with the objective of reducing the carbon emissions of executing the workload.
The total carbon emissions of executing a job originate from the emissions of running the job and excess carbon emitted while switching between different scales (e.g., due to checkpoint and resume). 
Prior work on carbon-aware resource scaling has assumed accurate job length information, while other approaches have ignored switching losses and require carbon intensity forecasts. 
These assumptions prohibit the practical deployment of prior work for online carbon-aware execution of scalable computing workload. 

We propose \lacs, a theoretically robust, learning-augmented algorithm that solves \OCSU. To achieve improved practical average-case performance, \lacs integrates machine-learned predictions of job length. To achieve solid theoretical performance, \lacs extends the recent theoretical advances on online conversion with switching costs to handle a scenario where the job length is unknown. 
Our experimental evaluations demonstrate that, on average, the carbon footprint of \lacs lies within 1.2\% of the online baseline that assumes perfect job length information and within 16\% of the offline baseline that, in addition to the job length, also requires accurate carbon intensity forecasts. Furthermore, \lacs achieves a 32\% reduction in carbon footprint compared to the deadline-aware carbon-agnostic execution of the job.   

\end{abstract}

\maketitle

\section{Introduction} \label{sec:intro}

The exponential growth in computing demand and the resulting energy consumption has enhanced focus on its climate and sustainability implications~\cite{NSF:2022:DESC, Bashir:2022, Monserrate:2022, Strubell:2020}. 
The focus has been magnified since the widespread adoption of generative artificial intelligence tools, e.g., ChatGPT~\cite{Luccioni:2023}. 
Key stakeholders, including policymakers and end users, are trying to create direct incentives, through caps or taxes~\cite{EU:2024:CBAM} on carbon emissions, and indirect incentives, through social pressure, to curb the climate impact of this unprecedented demand. 
In response, researchers and other stakeholders in computing are trying to reduce the carbon footprint during its various lifecycle stages, including manufacturing~\cite{Tannu:2023, Gupta:2022:Chasing}, operations~\cite{Wiesner:21, Souza:23, Hanafy:23:CarbonScaler, Arora:2023}, and end-of-life~\cite{Gupta:2022:Act, Switzer:2023}. 
While computing's carbon footprint at all stages are important to address, this paper focuses on the operational carbon footprint arising from using electricity to run computing workloads, as it contributes significantly to computing's total carbon footprint~\cite{Gupta:2022:Chasing}.

Beyond improving the algorithmic efficiency of computing workloads and the energy efficiency of its hardware, computing's operational footprint can be reduced by enhancing the carbon efficiency of grid-supplied electricity (kilowatt-hours of energy produced per unit of carbon emissions)~\cite{Bashir:2022}. 
One approach is to use low-carbon energy sources, such as solar, wind, and nuclear, for electricity generation. However, as pathways to 100\% renewable energy adoption remain challenging and costly~\cite{Arent:2022, Cole:2021}, this approach may not entirely eliminate electricity's carbon emissions for all the locations in the near future~\cite{Holland:2022}. 
A complementary approach is to improve the \emph{effective} carbon efficiency of the energy \emph{used} for computing by running flexible computing workloads when and where low-carbon energy is available.
Prior work has proposed leveraging computing workloads' spatiotemporal flexibility~\cite{Wiesner:21, Lechowicz:23, sukprasert2023quantifying, Kim:2023, Dodge:2022} and resource elasticity~\cite{Hanafy:23:CarbonScaler, Lechowicz:24, Hanafy:23:War, Souza:23} to reduce their carbon footprint. 

In this work, we focus on exploiting resource elasticity for carbon-aware execution of scalable computing workloads with unknown job lengths and future carbon intensity. The carbon-aware resource scaling work requires determining the scale factor, i.e., the number of cores or servers, at each time step before the job deadline while considering the scalability properties of the job. Initial work on carbon-aware resource scaling by \citet{Hanafy:23:CarbonScaler} leverages carbon intensity forecasts and develops an offline optimal approach to determine the best scale factor for a job at each time step before the deadline. The authors ignore switching overhead and assume the deadline is provided at job submission time. \citet{Lechowicz:24} introduce and study an online class of problems motivated by carbon-aware resource scaling and electric vehicle (EV) charging applications. The proposed algorithms can be used to determine the optimal scale factor without requiring carbon intensity forecasts while considering the energy inefficiencies in resource scaling through a convex cost function, which is revealed online. However, a key drawback for both studies is that they assume each job's length (i.e., the amount of work to be done) is known.

Estimating the duration of a job remains a challenging problem in cluster and cloud computing. The unavailability of data on job attributes, lack of diversity in the available data, variations in the characteristics of the jobs submitted to the cluster over time, and skewed distribution of users submitting the jobs means that job length predictions, even when using machine learning techniques, remain highly inaccurate and cannot be used for scheduling purposes~\cite{Kuchnik:2019}.
\citet{Ambati:2021} showed that job length estimates at job submission time can have more than 140\% mean absolute percentage error. The inaccuracy can be further amplified as the properties of the job or the hardware it runs on change across different runs. As a result, practical algorithms need to work without assuming that accurate job length is available. In this work, we assume that the job lengths are unknown and only the lower- and upper-bounds on job length are available. This is a reasonable assumption as classifying a job as short or long tends to be highly accurate~\cite{Ambati:2021, Zrigui:2022}. 


The existing theoretical literature that studies similar problems crucially does not consider uncertainty in the job length~\cite{ElYaniv:01, Lorenz:08, SunZeynali:20, Lechowicz:23, Lechowicz:24} (i.e., the job length is precisely known to the algorithm).  In perhaps the closest setting to carbon-aware resource scaling, \citet{Lechowicz:24} presents the online conversion with switching costs (\OCS) problem.  Uncertainty about the job length breaks many of the assumptions in \OCS; under a deadline and without precise information on job length, the algorithm will either run too little of the job at a low carbon intensity or run too much of the job at a high carbon intensity.
Without even approximate knowledge of job length (i.e., lower- and upper-bounds on job lengths) and under a deadline, the existing algorithms, e.g., for \OCS, may not complete the job by the deadline and thus fail to provide worst-case guarantees.

It is worth noting that providing worst-case guarantees is an important consideration for algorithms solving this problem. In-production resource managers such as Borg~\cite{Verma:2015} and Resource Central~\cite{Coretz:2017}, prefer deploying techniques with safety guarantees~\cite{Rzadca:2020}.  Techniques which fully rely on machine learning (ML) perform well in the average-case, but can result in extremely poor outcomes in the worst-case (e.g., when presented with out of distribution data), making them undesirable for production deployment~\cite{Kuchnik:2019}.


\noindent\textbf{Contributions.}
This paper proposes \lacs, a learning-augmented algorithm for online carbon-aware resource scaling with unknown job lengths (\OCSU) that uses ML predictions of job lengths, which are potentially inaccurate, for resource scaling.  We then analyze the theoretical performance of \lacs using the framework of competitive analysis~\cite{Borodin:2005}, and its learning-augmented variants~\cite{Lykouris:18, Purohit:18}. We also evaluate the practical performance of \lacs using real-world data traces on an extensive set of experimental scenarios.


\vspace{-0.08cm}
\begin{enumerate}[leftmargin=*]
    \item \textit{Theoretical analysis of \kern0.1em \lacs: Bounded robustness and consistency.} The theoretical analysis of \lacs leverages and advances the emerging framework of robustness-consistency for learning-augmented online algorithms~\cite{Lykouris:18, Purohit:18} and the recent competitive results for \OCS~\cite{Lechowicz:24}, which tackle a simplified \OCSU with known job lengths. 
    We take multiple algorithmic steps to achieve bounded competitive guarantees for \lacs. First, we consider two extreme scenarios of the ``Ramp-On Ramp-Off'' (\texttt{RORO}) framework proposed by~\citet{Lechowicz:24} to design \textit{robust} baseline algorithms of assuming actual job lengths equal to the given lower and upper bound values of job lengths. 
    Then, using these two extremes as the baseline, we introduce an additional layer by integrating a job length predictor, and finally, \lacs leverages these robust baseline algorithms, combined with potentially inaccurate predictions of the job length, to achieve improved consistency in the average case (i.e., when predictions are of high quality), while retaining worst-case guarantees (i.e., robustness given by the baselines).

    \item \textit{Extensive trace-driven experiments.} We then evaluate the performance of \lacs against state-of-the-art methods in carbon-aware scheduling and online scheduling literature using three years of real carbon intensity traces from Electricity Maps~\cite{electricity-map}, using an extensive range of experimental scenarios. 
    In a set of representative experiments, we demonstrate that, on average, the carbon footprint of \lacs lies within 1.2\% of the online baseline that assumes perfect job length information and within 16\% of the offline baseline that also requires accurate carbon intensity forecasts. \lacs achieves a 32\% reduction in carbon footprint compared to the deadline-aware carbon-agnostic job execution. 
\end{enumerate}



\section{Problem Statement} \label{sec:prob}

In the following, we will consider a server cluster that is used to run batch jobs. 
We introduce the online carbon-aware resource scaling problem with unknown job lengths (\OCSUmin), where the goal is to complete a job with total length $c$ (where $c$ is unknown) while minimizing its overall carbon emissions to complete the job.  
At each time step $t \in [T]$, a convex cost function $g_t(\cdot)$ arrives, which is a combination of both the time-varying carbon intensity (e.g., of the electricity grid) and the job's time-varying scaling profile (e.g., how parallelizable the job is).  
In response, the algorithm must choose the amount of server resources $x_t$ that will be given to the job in time step $t$, where  ${x_t \in [0,d_t]}$, which produces carbon emissions given by $g_t(x_t)$.  Here, $d_t$ is the maximum amount of job that can be scheduled for the job at time $t$ (rate constraint).  Intuitively, we assume that $g_t(0) = 0$ for any cost function (i.e., completing none of the job does not emit any carbon), and $g_t(x) \geq 0$ for any valid $x > 0$.  

Whenever the allocation decision changes in adjacent time steps, it incurs extra carbon emissions caused by switching denoted by $\beta \lvert x_t - x_{t-1} \rvert$.  For the model, we let $x_0 = 0$ and $x^{T+1} = 0$, which require the algorithm to incur \textit{some} switching carbon emissions to ``turn on'' and ``turn off'', respectively.  The parameter $\beta$ can be interpreted as a linear coefficient that charges the algorithm proportionally to the amount of scaling between consecutive time steps, based on, e.g., the carbon emitted due to overhead of changing the resource allocation or checkpointing/resuming the job.\footnote{
We note that while \OCSU assumes the emissions to ``turn on'' is equivalent to the cost to ``turn off'' (i.e., the switching emissions are symmetric), it can be extended to cases where the switching emissions are time-varying or asymmetric by letting $\beta \lvert x_t - x_{t-1} \rvert$
be an upper bound on the actual switching emissions, as discussed in~\cite{Lechowicz:24}}.
In summary, the offline version of \OCSUmin is formalized as:
\begin{align}
\OCSU: \min_{\{x_t\}_{ t \in [T] }} & \kern-1.5em \underbrace{ \ \sum\nolimits_{t=1}^T g_t( x_t ) }_{\text{Execution carbon emissions}} + \underbrace{ \sum\nolimits_{t=1}^{T+1}\beta \lvert x_t - x_{t-1} \rvert }_{\text{Switching carbon emissions}} \kern-0.5em , \label{align:obj} \\ 
\text{s.t., }  &\underbrace{\sum\nolimits_{t=1}^T x_t \geq c,}_{\text{Job completion constraint}} x_t \in [0, d_t], \ \forall t \in [T]. \label{align:deadline}
\end{align}

In this paper, we focus on designing algorithms for the online version of this problem, where the algorithm must choose an irrevocable $x_t$ at each time step without knowledge of future cost functions or the total job length $c$.  
Each cost function $g_t(\cdot)$ is revealed online at the start of time step $t$, and the actual job length $c$ is revealed when the constraint in \autoref{align:deadline} is satisfied (i.e., the job has been finished). We note that \OCSUmin builds on the existing formulation of \textit{online conversion with switching costs} ($\mathsf{OCS}$), introduced by~\citet{Lechowicz:24}.  The minimization variant of $\mathsf{OCS}$ is a special case of \OCSUmin where the online algorithm has perfect knowledge of the actual job length $c$. The core notations are summarized in \autoref{table:notation_definition}.

\begin{table}[t]
    \centering
    \footnotesize
    \captionof{table}{Summary of Notations}
    \vspace{-0.3cm}
    \label{table:notation_definition}
    \begin{tabular}{|l |l |}
    \hline
    \textbf{Notation} & \textbf{Definition} \\ \hline
    $c$, $c_{\min}$, $c_{\max}$  & Actual job length, minimum job length, maximum job length  \\ \hline    
    $\hat{c}$  & Job length prediction \\ \hline
    $\beta$  & Switching emissions coefficient    \\ \hline
    $\text{CI}_t$  & Carbon intensity (e.g., in gCO$_2$eq./kWh) at time $t$ \\ \hline
    $E$  & Energy used (e.g. in kWh) by one unit of threads/cores/servers \\ \hline
    $T$  & Deadline: Maximum time (e.g., 24 hours) the job is allowed to run  \\
         & after being submitted \\ \hline
    $g_t(\cdot)$  & Convex cost function that arrives at time $t$  \\ \hline
    $x_t$  & Amount of the job that is scheduled to be done at time $t$  \\ \hline
    $w^{(t)}$  & Total amount of the job that has been completed up to time $t$    \\ \hline
    $r_t$  & Maximum resource available at time $t$\\ \hline  
    $d_t$  & Maximum amount of the job that can be scheduled at time $t$    \\ \hline    
    \end{tabular}
    \vspace{-0.4cm}
\end{table}

\noindent\textbf{Details of the cost function.}
To formalize the definition of the cost functions $g_t(\cdot)$, let $h_t(\cdot)$ denote the scaling profile of the job at time $t$. For a given $s$, as the number of allocated resources (e.g., cores or servers), $h_t(s)$ provides the throughput (e.g., amount of work done) $x$ at time $t$.  Naturally, $h_t(\cdot)$ is a concave function since adding resources has diminishing returns even in highly parallel computing workloads. Note that $h_t^{-1}(x)$ (i.e., the inverse of the scaling function) maps a throughput $x$ to the necessary amount of allocated resources $s$ at time $t$.  
There is a \textit{carbon emissions} associated with the resource allocation amount of $s$. Let $E$ denote the energy used (e.g., in kWh) by one unit of resource, and let $\text{CI}_t$ denote the carbon intensity (e.g., in gCO$_2$eq./kWh) at time $t$.  Then, the cost function $g_t(\cdot)$ of \OCSU is defined as:
\[
g_t( x ) = \text{CI}_t \times E \times h_t^{-1} ( x ), \  \ t \in [T].
\]
Since the scaling profile $h_t(\cdot)$ is known, the primary \textit{unknown} quantity which makes $g_t(\cdot)$ an online input is the unknown carbon intensity $\text{CI}_t$. 
Furthermore, considering that the available resources constrain the quantity of job that can be scheduled in each time slot, we introduce $r_t$ to represent the maximum resources available at a given time. Consequently, the maximum amount of the job that can be scheduled at time $t$, denoted as $d_t$ (maximum rate), equals $h_t(r_t)$.

\smallskip

\noindent\textbf{Assumptions.}
We make the following assumptions in the paper. 


$\bullet$ \noindent\textit{Assumption 1.}
Although the job length $c$ is unknown, we assume that the value of $c$ is bounded between a minimum and a maximum length $c_{\min}$ and $c_{\max}$, i.e., $c \in [c_{\min}, c_{\max}]$.  Without loss of generality, we assume $c_{\min} = 1$, which further gives that ${\cmax > \cmin = 1}$.

$\bullet$ \noindent\textit{Assumption 2.}
We assume that the derivatives of the cost function are bounded, i.e., $L \leq d g_t / d x_t \leq U \ \forall t \in [T]$ on the interval $x_t \in [0,d_t]$, where $L$ and $U$ are known positive constants.  This is a necessary assumption for any competitive algorithm, as shown in~\cite{ElYaniv:01, Zhou:08, SunZeynali:20}; otherwise, no online algorithm can achieve a bounded competitive ratio. 

$\bullet$ \noindent\textit{Assumption 3.}
The switching emissions coefficient $\beta$ is known to the algorithm, and is bounded within an interval ($\beta \in [0, \nicefrac{(U-L)}{2})$), as in~\cite{Lechowicz:24}.  If $\beta$ exceeds $\nicefrac{(U-L)}{2}$, any competitive algorithm should only consider the excess emissions due to switching because the overhead is very large; hence the decision-making becomes trivial.    


$\bullet$ \noindent\textit{Assumption 4.}
\OCSUmin requires the algorithm to complete the entire job before the sequence ends at ``deadline'' $T$.
If the scheduler has completed $w^{(j)}$ amount of the job at time $j$, a \emph{compulsory execution} begins whenever $(T - j - 1) < (c - w^{(j)})$ (i.e., when the remaining time steps are barely enough to complete the job).  During this compulsory execution, a carbon-agnostic algorithm takes over and runs the job with the maximum available resources in the remaining time steps.  Although $c$ is unknown to the algorithm, for modeling purposes, we assume that the algorithm will begin this compulsory execution when the remaining steps are sufficient to fulfill the worst-case job length, which is given by $(\cmax - w^{(j)})$.

$\bullet$ \noindent\textit{Assumption 5.}
In an application such as carbon-aware resource scaling, the deadline $T$ is typically known in advance. Our algorithms, however, do not require this assumption to be true.  If $T$ is unknown, we assume that the algorithm is given a signal to indicate that the deadline is coming up and that compulsory execution should begin to finish a job with worst-case size \cmax.

$\bullet$ \noindent\textit{Assumption 6.}
We assume that the job execution time horizon has sufficient \textit{slackness}, i.e., the compulsory execution does not make up a large fraction of the sequence -- otherwise, the problem is trivial.  Formally, we have that the earliest time step $j'$ at which the compulsory execution begins (i.e., the first time step such that $(T - j' - 1) < \cmax$) is $j' \gg 1$, which implies that $T$ is sized appropriately for the job. This assumption is reasonable in practice, since if $T$ is small or $c$ is large, the job's \textit{temporal flexibility} will be low, so even a solution with perfect knowledge of future carbon intensity values will be unable to take advantage of time-varying carbon intensity to reduce emissions.

\noindent\textbf{Competitive analysis.}
We tackle \OCSU from the perspective of competitive analysis, where the objective is to design an online algorithm that maintains a small \textit{competitive ratio}~\cite{Borodin:92}, defined as:
\begin{definition}[Competitive Ratio] \label{def:cr}
We denote $\OPT(\mathcal{I})$ as the offline optimum on the input $\mathcal{I}$, and $\texttt{ALG}(\mathcal{I})$ represents the profit obtained by an online algorithm (\texttt{ALG}) on that input. 
Formally, letting $\Omega$ denote the set of all possible inputs, we say that \texttt{ALG} is $\eta$-competitive if the following holds:
$\texttt{CR} = \max_{\mathcal{I} \in \Omega}\nicefrac{\texttt{ALG}(\mathcal{I})}{\OPT(\mathcal{I})} = \eta.$
Observe that \texttt{CR} is greater than or equal to one. The smaller it is, the closer the algorithm is to the optimal solution.
\end{definition}

\noindent\textbf{Learning-augmented competitive algorithms.}
In the nascent literature on learning-augmented algorithms~\cite{Lykouris:18, Purohit:18}, algorithms are evaluated through the metrics of \textit{consistency} and \textit{robustness}.  Intuitively, these quantities measure how close a learning-augmented algorithm's solution is to that of the offline optimal solution when the prediction is accurate (consistency) and how far an algorithm's solution can be from the optimal solution in the worst case when the prediction is erroneous (robustness). 
\begin{definition}[Consistency and Robustness] \label{def:const-rob}
Formally, an algorithm is $b$-consistent if it is $b$-competitive with respect to an accurate prediction and $r$-robust if it is $r$-competitive regardless of the quality of the prediction.
\end{definition}

\section{Algorithm Descriptions} \label{sec:alg_and_res}
\begin{algorithm}[!t]
    \small
	\caption{Online ramp-on, ramp-off (\RORO) algorithm \cite{Lechowicz:24}}
	\label{alg:roro}
	\begin{algorithmic}[1]
		\State \textbf{input:} pseudo-cost threshold $\phi(w)$
        \State \textbf{initialization:} initial decision $x_0 = 0$, initial progress $w^{(0)} = 0$;
		\While{cost function $g_t(\cdot)$ is revealed and $w^{(t-1)} < c$}
		\State solve \textbf{pseudo-cost minimization problem} to obtain decision $x_t$, 
        \begin{align}
            \quad {x}_t  &= \ \ \argmin_{\mathclap{ \quad x \in [0, \min (1- w^{(t-1)}, d_t) ]}} \ \  g_t( x ) + \beta \lvert x - x_{t-1} \rvert - \int_{w^{(t-1)}}^{w^{(t-1)} + x}\phi(u) du.
        \end{align}
        \State update the progress $w^{(t)} = w^{(t-1)} + x_t$;
        \EndWhile
	\end{algorithmic}
\end{algorithm}

In this section, we introduce \lacs, a \texttt{\textbf{L}}earning-\texttt{\textbf{A}}ugmented \texttt{\textbf{C}}arbon-aware Resource \texttt{\textbf{S}}caling algorithm that solves \OCSU. To achieve the best of both worlds on satisfactory practical performance and theoretical worst-case guarantees,
\lacs integrates \textit{predictions of the job length} into its operation by combining the decisions of an algorithm that assumes the prediction is correct with the decisions of two competitive baselines. By combining these strategies, \aug can improve its performance significantly when the predictions are accurate while maintaining worst-case competitive guarantees.  Below, we start by reviewing approaches from prior work that inform our design of the competitive baselines.





\subsection{Algorithmic Background} \label{subsec:alg_background}

The competitive baselines we consider in the next section build on prior work, specifically the ``ramp-on, ramp-off'' (\RORO) framework proposed by~\cite{Lechowicz:24} that achieves the optimal competitive ratio for \OCS, as a simplified version of \OCSU that assumes job length is known a priori to the online algorithm. 
In the \RORO framework~\cite{Lechowicz:24}, whenever an input arrives online, the algorithm solves a \textit{pseudo-cost minimization} problem to determine the amount of job to run at time $t$ (denoted by $x_t \in [0,d_t]$).  The progress $w^{(t)}$ denotes the fraction of the total job that has been completed up to time~$t$.  This pseudo-cost minimization design generalizes the concept of threshold-based algorithm design -- at each time step, when a cost function arrives, the pseudo-cost of a particular decision $x$ is defined as the actual carbon emissions of running $x$ amount of the job (including both the execution and the switching emissions), minus a threshold value which describes the exact amount which should be allocated to maintain a certain competitive ratio.

This pseudo-cost acts as an incentive to prevent the algorithm from ``waiting too long'' to run the job.  Intuitively, if an algorithm naively minimizes the cost function $g_t$ at each time step (resulting in decisions $x_t = 0$ for all $t \in [T]$), it will be required to complete the entire job during compulsory execution during a potentially bad period for carbon intensity.  The pseudo-cost minimization provides a framework that balances the extreme options of allocating ``too much'' early on and waiting indefinitely.  Whenever the carbon intensity is ``attractive enough,'' the pseudo-cost minimization finds the best decision that allocates just enough resources given the current carbon intensity to maintain competitiveness. In the setting where the job length is known, we summarize the \RORO algorithm in \autoref{alg:roro}.

To define this pseudo-cost minimization problem, the authors in \cite{Lechowicz:24} define a \textit{dynamic threshold function}, which essentially defines the highest carbon intensity deemed acceptable by \RORO at time $t$.  We note that in \OCS with known job lengths, $c$ is defined to be $1$ (without loss of generality). 
According to \cite[Definition 3.1]{Lechowicz:24}, the dynamic threshold for \OCS, for a job with length $c$, and for any progress $w \in [0, c]$ is defined as:
\begin{align}
    \phi_{\texttt{OCS}}(w) = U-\beta + \left( \frac{U}{\alpha}-U+2\beta \right) \exp \left( \frac{w}{c\alpha} \right), \label{eq:ocs-thresh}
\end{align}
where $\alpha$ is the competitive ratio defined as:
\begin{align}
    \alpha &= \left[W \left[ \left( \frac{2\beta}{U} + \frac{L}{U} - 1 \right) \exp \left(\frac{2\beta}{U} - 1 \right) \right] - \frac{2\beta}{U} + 1 \right]^{-1}\kern-1em . \label{eq:ocs-alpha}
\end{align}
In the above equation, $W(\cdot)$ is the Lambert $W$ function, defined as the inverse of $f(y)=ye^y$~\cite{Corless:96LambertW}.  Note that it is well-known that $W(x) \thicksim \ln (x)$~\cite{HoorfarHassani:08}.
Given this definition of $\alpha$, note that $\phi_{\OCS}(\cdot)$ is monotonically decreasing on the interval $w \in [0,c]$.

\subsection{\aug: A Learning-augmented Algorithm for Carbon-aware Resource Scaling}\label{sec:augmentation}

In this section, we describe the design of \aug, which uses predictions of the actual job length to significantly improve average-case performance (consistency) without losing worst-case guarantees (robustness).
We first introduce two baseline competitive algorithms before describing how \lacs leverages predictions to improve average-case performance without losing competitive guarantees.


\noindent\textbf{Competitive baseline algorithms.}
Here we present two adaptations of the \RORO framework detailed in \autoref{subsec:alg_background}, denoted by  \ROROcmax and \ROROcmin.
Since \OCSU introduces job length uncertainty, each of these adaptations considers an opposing extreme case for the job length.  We describe each variant in turn below.

\noindent\textit{\emph{\ROROcmax}: \RORO assuming maximum job lengths.}
\ROROcmax takes an optimistic approach by assuming every job has the maximum length \cmax. This aims to prepare for potentially long jobs by gathering enough resources.
\ROROcmax's assumption of worst-case job sizes makes it less conservative in terms of carbon intensities where it is willing to run the job. This gives it the flexibility to prepare for scenarios where long jobs (e.g., with length \cmax) do arise, although it risks ``overspending'' for shorter jobs.
We see the impact of this assumption in \ROROcmax's threshold function \autoref{eq:alg1-thresh}. Compared to alternatives like \RORO, which knows the exact job length, \ROROcmax's threshold reduces at a slower exponential rate as job progress increases -- intuitively, this is because \ROROcmax plans for a \textit{longer} job, which scales up the threshold function along the axis of $w$. 
Though this strategy may run the job when carbon intensities are ``too high,'' particularly for jobs that are much shorter than \cmax, it can be advantageous when job lengths do approach the maximum. In such situations, \ROROcmax may result in a more favorable outcome, avoiding the last-minute compulsory execution.

\begin{definition}\label{def:alg1-thresh}
The threshold function $\phi_1$ used by \ROROcmax for any progress $w \in [0, \cmax]$ is defined as:
    \begin{align}
        \phi_{1}(w) = U-\beta + \left( \frac{U}{\alpha}-U+2\beta \right) \exp \left(\frac{w}{\cmax\alpha} \right), \label{eq:alg1-thresh} 
    \end{align}
where $\alpha$ is defined in \autoref{eq:ocs-alpha}. 
\end{definition}

This approach captures one of two extreme cases that inform our algorithm design.  Next, we will ``flip'' these assumptions to capture the other extreme case in the \ROROcmin algorithm. 



\noindent\textit{\emph{\ROROcmin}: \RORO assuming minimum job lengths.}
\ROROcmin takes a pessimistic approach by assuming each job is as short as \cmin. This approach is efficient for handling shorter jobs since \ROROcmin is more conservative in choosing which carbon intensities are good enough to run the job. The threshold function decreases faster than \ROROcmax, which assumes the maximum job length.

However, when \ROROcmin encounters a longer job, its conservative nature can become a hindrance. \ROROcmin is, by design, reluctant to allocate resources liberally due to its lower threshold, potentially missing the chance to make significant progress on lengthy jobs early on. This may necessitate costly compulsory execution at the end of the time period.

To mitigate this, we scale the threshold by the ratio $\nicefrac{\cmax}{\cmin}$. This adjustment still allows the threshold to remain more cautious than \ROROcmax, but avoids the worst-case scenario for jobs that may turn out to be longer than \cmin. 
This remains economical for the assumed short jobs while also being flexible enough to accommodate the resource needs of unexpectedly longer jobs without resorting to compulsory executions at the end of the time period.

\begin{definition}\label{def:alg2-thresh}
The threshold function $\phi_2$ used by \ROROcmin for any progress $w \in [0, \cmax]$ is defined as:
    \begin{align}
        \phi_{2}(w) = U-\beta + \left( \frac{U}{\alpha'}-U+2\beta \right) \exp \left( \frac{w}{\cmax\alpha'} \right), \label{eq:alg2-thresh} 
    \end{align}
where $\alpha'$ is defined as follows:
{\small\begin{align}
    \alpha' &= \left[ \frac{\cmax}{\cmin}  W \left[ \frac{\cmin}{\cmax} \left( \frac{2\beta}{U} + \frac{L}{U} - 1 \right) \exp \left( \frac{\cmin}{\cmax} \left(\frac{2\beta}{U} - 1 \right) \right) \right] - \frac{2\beta}{U} + 1 \right]^{-1}\kern-1em . \label{eq:alphaprime}
\end{align}}
\end{definition}

These two approaches comprise our worst-case algorithm design.  While the competitive bounds of each algorithm differ, the empirical performance of each intuitively depends on the actual observed job lengths.  In practice, we may often have a relatively accurate prediction about a given job's length.  In the next section, we consider how this type of job length prediction can be incorporated into an algorithm design without losing worst-case guarantees.


\smallskip
\noindent\textbf{Learning-augmented algorithm design.}
Here we formalize our learning-augmented algorithm, referred to as \aug and outlined in \autoref{alg:aug}. This algorithm integrates insights from two robust algorithms, \ROROcmax and \ROROcmin, alongside predictions from an algorithm named \ROROpred.
\ROROpred is essentially a \RORO algorithm that uses the predicted job length $\hat{c}$ rather than the actual job length $c$. Although \ROROpred operates on predictions, it guarantees that the job is completed before the deadline  (\autoref{align:deadline}) by beginning a compulsory execution when the remaining time steps are enough to complete a job with length \cmax.

The combination of \ROROpred with competitive baselines \ROROcmax and \ROROcmin is designed to enhance average performance. Since \ROROcmax is tailored for longer jobs and \ROROcmin is more effective for shorter ones, we introduce an intermediate algorithm called \combalg which leverages strengths of both competitive baselines.  Let \(k \in [0,1]\) denote a decision factor, which dictates the proportion of the solution to derive from \ROROcmax ($\{x_{1t}\}_{\forall t \in [T]}$) or \ROROcmin ($\{x_{2t}\}_{\forall t \in [T]}$). Then \combalg constructs a solution ($\{\Tilde{x}_t\}_{\forall t \in [T]}$), where each $\Tilde{x}_t$ is defined as $\Tilde{x}_t = kx_{1t} + (1-k)x_{2t}, \forall t \in [T]$.

By unifying the two competitive baseline algorithms, we simplify the expression of \aug, which integrates these competitive decisions $\Tilde{x}_t$ with the decisions of \ROROpred as follows:  We set an \textit{augmentation factor} \(\lambda \in [0,1]\), which determines the influence of each algorithm on the final decision ($\lambda$ from \ROROpred, $(1-\lambda)$ from \combalg).  Intuitively, a larger value of $\lambda$ implies that \aug is closer to the prediction.
The result is a solution that benefits from the predictive strength of \ROROpred while maintaining the robustness provided by the combined solutions of \ROROcmax and \ROROcmin. 

In the following, we formalize our instantiation of \aug (and \combalg as a subroutine) for \OCSUmin, which includes definitions of $k, \lambda$, and the parameters of $\epsilon$ and $\gamma$ that they depend on.  

\begin{definition} \label{def:def-aug-con-rob-params}
   Let $\epsilon \in [0, |\alpha_1 - \alpha_2|]$, where $\alpha_1$ and $\alpha_2$ are the robust competitive ratios defined in~\autoref{eq:alg1-alpha} and \autoref{th:alg2-alpha}. 
   
   We set $k = 1 - \frac{\epsilon}{\alpha_1 - \alpha_2}$ (which is bounded in $[0,1]$) to form the solution ($\{\Tilde{x}_t\}_{\forall t \in [T]}$) obtained by \combalg.

   Let $\gamma \in [0, \alpha_1 - \texttt{sign}(\alpha_1 - \alpha_2)\epsilon - \alpha]$, where $\alpha$ is the robust competitive ratio defined in \autoref{eq:ocs-alpha}, and \texttt{sign}(x) is the \texttt{sign} function. 
   
   \aug sets a augmentation factor of $\lambda = 1 - \frac{\gamma}{\alpha_1 - \texttt{sign}(\alpha_1 - \alpha_2)\epsilon - \alpha}$ which is bounded in $[0,1]$.
\end{definition}

In the next section, we provide the consistency and robustness bounds for \aug.  Intuitively, the primary desiderata for \aug is to be able to nearly match the performance of an online algorithm which knows the exact job length (e.g., \RORO) when the job length predictions are correct, while preserving worst-case performance bounds in line with that of \ROROcmax and \ROROcmin.

\begin{algorithm}[!t]
    \small
	\caption{\aug: A \textit{learning-augmented} algorithm for \OCSUmin}
	\label{alg:aug}
	\begin{algorithmic}[1]
		\State \textbf{input:} The predicted solution $\{\hat{x}_t\}_{\forall t \in [T]}$ given by \ROROpred, competitive solutions $\{x_{1t}\}_{\forall t \in [T]}$ and $\{x_{2t}\}_{\forall t \in [T]}$ given by \ROROcmax and \ROROcmin, decision factor $k$, augmentation factor $\lambda$.
		\While{cost function $g_t(\cdot)$ is revealed and $w^{(t-1)} < c$}
            \State obtain robust decisions $x_{1t}$ and $x_{2t}$;
            \State $\Tilde{x}_t = kx_{1t} + (1-k)x_{2t}$; 
            \State obtain prediction decision $\hat{x}_t$;
            \State set the online decision as $x_t = \lambda \hat{x}_t + (1-\lambda) \Tilde{x}_t$;
        \State update the progress $w^{(t)} = w^{(t-1)} + x_t$;
        \EndWhile
	\end{algorithmic}
\end{algorithm}

\section{Theoretical Results} \label{sec:analysis}



In this section, we state our main theoretical results.  We start with the competitive results for the competitive baseline \ROROcmax and \ROROcmin algorithms before stating the consistency and robustness results for \aug.  We discuss the results and their significance here, while deferring their full proofs to \autoref{apx:analysis}.

\noindent\textbf{Competitive analysis for \ROROcmax.} \label{subsec:alg1-comp-amalysis}
Recall that \ROROcmax assumes each job has the maximum length \cmax.  In the following theorem, we state the competitive result for \ROROcmax and discuss its significance.  The full proof of \autoref{th:alg1-alpha} is in \autoref{apx:comp-alg1}.

\begin{theorem}\label{th:alg1-alpha}
    \ROROcmax for \OCSUmin is $\alpha_1$-competitive when the threshold function is given by $\phi_1(w)$ from \sref{Definition}{def:alg1-thresh}.
    \begin{align}
        \alpha_1 = \frac{U}{\alpha L} + \frac{2 \beta}{L}. \label{eq:alg1-alpha}
    \end{align} 
\end{theorem}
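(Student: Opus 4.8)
The plan is to bound \OPT from below and the total emissions of \ROROcmax from above on an arbitrary instance $\mathcal{I}$ whose (unknown) actual job length is $c \in [\cmin, \cmax]$, and then to take the ratio. The natural template is the pseudo-cost/threshold analysis underlying the \OCS guarantee of~\cite{Lechowicz:24}: because $\phi_1$ is precisely the \OCS threshold of~\autoref{eq:ocs-thresh} with the planned length set to $\cmax$, \ROROcmax behaves exactly like the \RORO algorithm of \autoref{alg:roro} tuned for a length-$\cmax$ job, but is only ever driven to progress $w = c \le \cmax$ before the completion constraint in \autoref{align:deadline} is met. The crux is therefore to quantify how much the length mismatch (planning for \cmax while the true length is only $c$) degrades the $\alpha$-competitiveness of \OCS, and to show the degraded ratio is the quantity $\alpha_1 = \nicefrac{U}{(\alpha L)} + \nicefrac{2\beta}{L}$ uniformly over $c$.

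For the lower bound I would invoke Assumption~2: any feasible schedule satisfies $\sum_t x_t \ge c$, and since $dg_t/dx_t \ge L$ each unit of completed work contributes at least $L$ to the execution emissions, so $\OPT(\mathcal{I}) \ge Lc$ (dropping the nonnegative switching term). This crude bound is all that is needed, since the penalty from the length mismatch is already reflected in the numerator.

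For the upper bound on \ROROcmax I would track its accumulated emissions as a function of progress $w$. By construction of the pseudo-cost minimization, the algorithm advances its progress only when the combined marginal cost --- the execution derivative plus the $\beta$ increment charged for ramping up --- does not exceed the current threshold; since $\phi_1$ is monotonically decreasing with maximum $\phi_1(0) = \nicefrac{U}{\alpha} + \beta$, every unit of work completed voluntarily incurs execution emissions of at most $\phi_1(0) - \beta = \nicefrac{U}{\alpha}$, for a total of at most $\nicefrac{U}{\alpha}\, c$. The remaining switching contribution --- the ramp-up, plus the terminal ramp-down forced by the boundary condition $x^{T+1}=0$ that the threshold test does not anticipate --- I would bound by $2\beta c$ through a total-variation argument on $\{x_t\}$. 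Summing gives total emissions at most $(\nicefrac{U}{\alpha} + 2\beta)c$, and dividing by the lower bound $Lc$ yields exactly $\alpha_1$.

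I expect the main obstacle to be making the upper bound rigorous and uniform in $c$ through the required case analysis. One must separately treat (i) the voluntary regime, where \ROROcmax completes the job via pseudo-cost minimization and the threshold bound above applies directly, and (ii) the compulsory-execution regime of Assumption~4, in which a job whose observed marginal prices all lie above the (high) threshold is finished at marginal cost up to $U$; there one must argue that \OPT is correspondingly expensive --- every price it sees also exceeds the threshold --- so that the ratio remains at most $\alpha_1$. Reconciling the switching bookkeeping with the threshold test so that the extra $\beta$ distinguishing $\alpha_1$'s $\nicefrac{2\beta}{L}$ term from $\phi_1(0)=\nicefrac{U}{\alpha}+\beta$ is precisely the un-amortized terminal ramp-down is the delicate step; the rest reduces to monotonicity of $\phi_1$ and the Lambert-$W$ identity defining $\alpha$ in~\autoref{eq:ocs-alpha}.
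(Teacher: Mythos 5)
Your proposal is correct and follows essentially the same route as the paper's proof: the same two-regime case split (job finished voluntarily versus during compulsory execution), the same crude lower bound $\OPT(\mathcal{I}) \ge Lc$ paired with the per-unit upper bound $\phi_1(0)-\beta = \nicefrac{U}{\alpha}$ on execution plus $2\beta$ on switching (giving the ratio $\alpha_1$ in the voluntary regime), and the same threshold-based lower bound on $\OPT$ in the compulsory regime, where the paper (via Lemmas B.2 and B.3 of~\cite{Lechowicz:24}) shows the ratio is in fact bounded by $\alpha \le \alpha_1$. The only difference is cosmetic: the paper pins the worst case at $c=\cmin$, whereas you argue uniformity over all $c$, which yields the same bound.
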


Intuitively, compared to the competitive bound $\alpha$ shown for \OCS when job lengths are exactly known, $\alpha_1$ is worse.  This captures an edge case where the actual job length is, e.g., \cmin, while \ROROcmax's design assumes the job has length \cmax.  As we discuss in the full proof, this occurs because \ROROcmax's scaled threshold design allows it to complete a job with length \cmin by using the \textit{worst \cmin fraction} of the threshold, while \RORO (where the job length is known) uses the entire domain of the monotonically decreasing threshold function to complete the job.

\noindent\textbf{Competitive analysis for \ROROcmin.} \label{sec:comp-alg2}
Contrary to the assumption of \ROROcmax, the \ROROcmin algorithm is derived to prepare for a job with length \cmin, while acknowledging that the actual job length may be \cmax.  In the following theorem, we state the competitive result for \ROROcmin and discuss both its significance and relation to the existing \RORO algorithm with known job lengths.

\begin{theorem}\label{th:alg2-alpha}
    \ROROcmin for \OCSUmin is $\alpha'$-competitive when the threshold function is given by $\phi_2(w)$ from \sref{Definition}{def:alg2-thresh}.  We henceforth use $\alpha_2 = \alpha'$ to denote the competitive ratio of \ROROcmin.
\end{theorem}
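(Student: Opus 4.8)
The plan is to adapt the competitive analysis of the \RORO framework from \cite{Lechowicz:24} to the unknown-length setting and to show that the self-referential definition of $\alpha'$ in \autoref{eq:alphaprime} is precisely the fixed point that forces the worst-case ratio to equal $\alpha'$. A useful starting observation is that $\phi_2$ coincides with the \OCS threshold of \autoref{eq:ocs-thresh} under the substitution $c=\cmax$ and $\alpha=\alpha'$, so the bulk of the \RORO accounting machinery transfers directly; the genuinely new ingredient is controlling the gap between the assumed length \cmin and the true length $c$, which may be as large as \cmax. The argument then has three parts: an upper bound on the cost of \ROROcmin, a lower bound on \OPT, and identification of the adversarial instance that maximizes their ratio.

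First I would bound the total cost of \ROROcmin, call it $\text{ALG}$, from above. The key is the first-order optimality condition of the pseudo-cost minimization in \autoref{alg:roro}: whenever the algorithm accepts an amount $x_t$ in the interior of its feasible interval, the marginal execution-plus-switching cost of that work equals the threshold $\phi_2$ evaluated at the current progress. Summing over the run lets me charge the total execution and switching cost against the threshold integral $\int_0^{w_f}\phi_2(u)\,du$, where $w_f$ is the progress attained before compulsory execution, plus the boundary terms to ``turn on'' and ``turn off'' and the cost of the compulsory phase, which completes the residual work at marginal cost at most $U$ and switching at most $2\beta$ per unit. Because $\phi_2$ is an explicit exponential, this integral is available in closed form.

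Next I would lower-bound \OPT via the standard online-conversion argument: fixing the minimum marginal cost $\psi$ that the adversary reveals, the offline optimum must still complete the entire job of length $c$, giving a bound of the form $\OPT \geq \psi \cdot c$ (refined by $L$ and the switching floor). The monotonicity of $\phi_2$ together with the stopping rule of the pseudo-cost minimization ties $\psi$ to the progress level $w_f$, which is exactly what lets the eventual ratio be expressed purely in terms of the threshold parameters rather than the raw instance.

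Finally I would assemble $\text{ALG}/\OPT$ and maximize over the true length $c\in[\cmin,\cmax]$ and over adversarial inputs. Mirroring the edge-case discussion for \ROROcmax, the binding instance pits the \cmin-designed but \cmax-stretched threshold against a true length that drives the algorithm through the expensive, slowly-decreasing portion of $\phi_2$; the $\cmin/\cmax$ factors in \autoref{eq:alphaprime} signal that the analysis must explicitly track this ratio, and I would verify whether the extremum is attained at $c=\cmax$ or in the interior of $[\cmin,\cmax]$. I expect this last step to be the main obstacle: differentiating the assembled ratio should yield a transcendental equation in which the $\cmin/\cmax$ scaling appears exactly in the exponent and the Lambert-$W$ argument of \autoref{eq:alphaprime}, and closing the argument requires showing both that its solution via the definition of $W(\cdot)$ returns precisely $\alpha'$ and that the boundary values $\phi_2(0)$ and $\phi_2(\cmax)$ align with the $U-\beta$ floor and the compulsory-execution ceiling so that the conservation argument holds. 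Establishing that the multiplicative $\cmax/\cmin$ scaling is exactly the right choice to make this fixed point consistent is the crux of the proof.
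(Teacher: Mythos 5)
Your three-part architecture (charge the algorithm's cost to the threshold integral plus the compulsory phase, lower-bound $\OPT$ by the smallest revealed marginal cost times the job length, then maximize the ratio) is the same machinery the paper imports from Lemmas B.2 and B.3 of \cite{Lechowicz:24} in Step~2 of its proof, and since \autoref{th:alg2-alpha} asserts only an upper bound, taking $\alpha'$ as given and verifying the bound is in principle a legitimate route. The genuine gap sits in the step you yourself flag as the crux: your boundary conditions are at the wrong points, and with them the fixed-point computation will not return $\alpha'$. The property that actually characterizes $\alpha'$ is $\phi_2(\cmin)=L+\beta$ (one can check this is algebraically equivalent to \autoref{eq:alphaprime}): the threshold must descend from $\phi_2(0)=U/\alpha'+\beta$ to the acceptance floor $L+\beta$ within the first $\cmin$ units of progress, because $\cmin$ is the only amount of work the algorithm is guaranteed to be ``paid for,'' while the compulsory-execution liability and $\OPT$ both scale with $\cmax$; indeed $\phi_2(\cmax)<L+\beta$, so the tail of the domain is never used for acceptance decisions. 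If you instead impose the natural \OCS-style terminal condition that the threshold reaches the floor at the end of its domain, $\phi_2(\cmax)=L+\beta$, the resulting transcendental equation returns the length-independent \OCS ratio $\alpha$ of \autoref{eq:ocs-alpha} rather than $\alpha'$ --- and the claim would then be false, since a threshold hitting the floor only at $\cmax$ is exactly \ROROcmax's $\phi_1$, which is only $\alpha_1$-competitive. This $\cmin$-budget-versus-$\cmax$-liability mismatch is precisely where the $\nicefrac{\cmin}{\cmax}$ factors inside the exponent and the Lambert-$W$ argument of \autoref{eq:alphaprime} come from; your observation that $\phi_2$ is ``the \OCS threshold with $c=\cmax$, $\alpha=\alpha'$'' transfers the integral identity $\int_0^{w}\phi_2(u)\,du+\beta w+(\cmax-w)U=\alpha'\cmax\left(\phi_2(w)-\beta\right)$, but not the terminal condition, and conflating the two is exactly the error.

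Two further divergences from the paper are worth noting. First, the paper does not extract $\alpha'$ from the upper-bound calculation at all: in its Step~1 it analyzes an intermediate algorithm \tmpalgg whose \emph{unscaled} threshold $\hat{\phi}$ lives on $[0,\cmin]$, builds adversarial $x$-instances (decreasing cost streams punctuated by blocks at $U$), and applies Gronwall's inequality to the induced conversion function; the binding constraint $h(L)\le\cmin$ played against $\OPT\to\cmax x$ is what produces the Lambert-$W$ expression, and only afterwards is the domain stretched by $\nicefrac{\cmax}{\cmin}$ to obtain $\phi_2$. Second, in the upper-bound analysis the binding scenario is not, as you conjecture, an instance that drives the algorithm ``through the expensive, slowly-decreasing portion of $\phi_2$'' with an interior worst-case length; it is the degenerate case in which \ROROcmin completes \emph{nothing} before compulsory execution, giving ratio exactly $U/(\phi_2(0)-\beta)=\alpha'$, while all cases with nonzero progress are controlled by the integral identity above together with $\alpha'\le U/L$. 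A corrected version of your plan would need to reproduce both facts to close the argument.
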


As we show in the full proof, in \autoref{apx:comp-alg2}, $\alpha_2 = \alpha' \geq \alpha$, further implying that $ \phi_2(w) \leq \phi_1(w) $ for any $w \in [0, \cmax]$. 
This supports the notion that \ROROcmin is indeed more conservative than \ROROcmax in terms of the carbon intensities for which it is willing to run the job. We note that the functions  $\phi_1(w)$ and $ \phi_2(w) $ are equivalent when \cmin = \cmax, indicating that both algorithms make the same decisions when all jobs have the same length.

\begin{figure}[t]
    \centering
    \includegraphics[width=\linewidth]{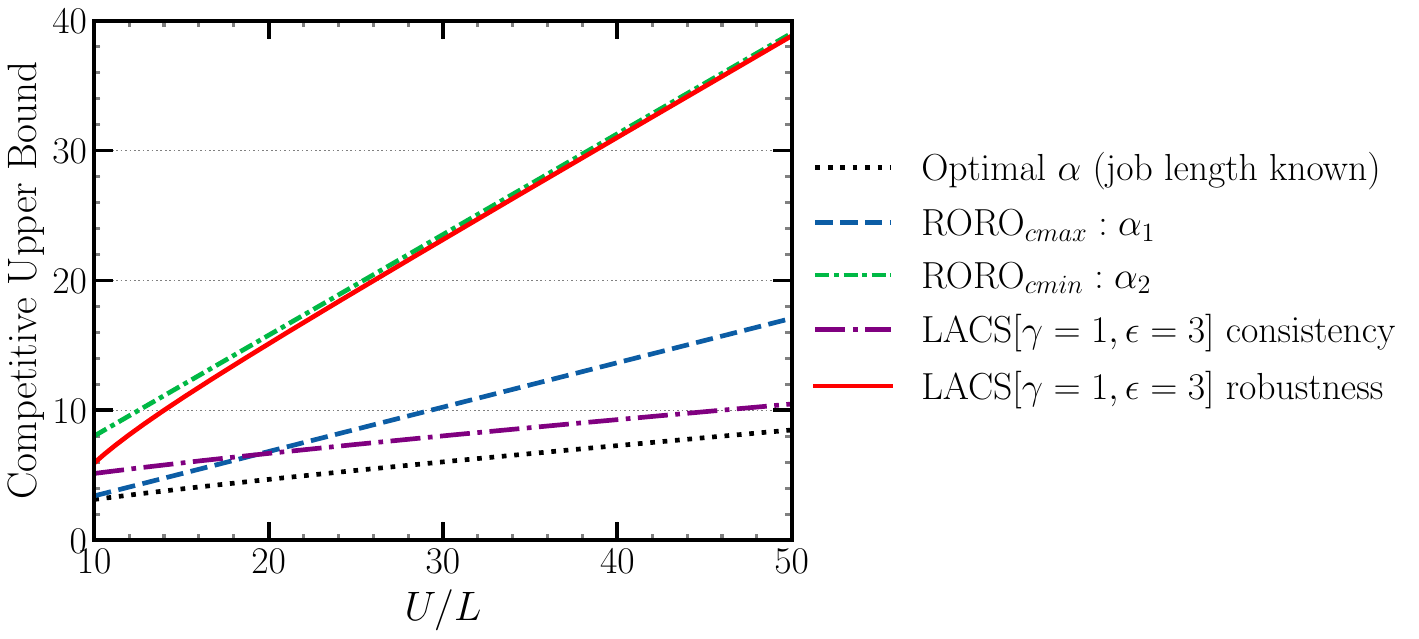}
    \vspace{-0.7cm}
    \caption{Competitive upper bounds for different algorithms (\RORO with the full knowledge of job length, \ROROcmin, \ROROcmax, and $\texttt{LACS}[\gamma = 1, \epsilon = 3]$)}
    \vspace{-0.5cm}
    \label{fig:alpha}
\end{figure}

\noindent\textbf{Consistency and robustness of \aug.}
Recall that \aug (summarized in \autoref{alg:aug}) is our learning-augmented algorithm that plays a convex combination of the solutions obtained by \combalg and \ROROpred.  Letting $\alpha^{\max}_{\ROROpred}$ denote the worst-case competitive ratio of \ROROpred (i.e., when the job length predictions are maximally incorrect), we obtain the following consistency and robustness bounds for \aug for any value of $\epsilon \in [0, \lvert \alpha_1 - \alpha_2 \rvert ]$ and any value of $\gamma \in [0, \alpha_1 - \texttt{sign}(\alpha_1 - \alpha_2)\epsilon - \alpha ]$:

\begin{theorem} \label{th:aug-con-rob}
    Given parameters $\epsilon$ and $\gamma$, \aug is $(\alpha + \gamma)$-consistent and $\bigg[ \left( 1-\frac{\gamma}{\alpha_1 - \texttt{sign}(\alpha_1 - \alpha_2)\epsilon - \alpha} \right) \alpha^{\max}_{\ROROpred} + \left( \frac{\gamma(\alpha_1 - \texttt{sign}(\alpha_1 - \alpha_2)\epsilon)}{\alpha_1 - \texttt{sign}(\alpha_1 - \alpha_2)\epsilon - \alpha} \right) \bigg]$-robust.
\end{theorem}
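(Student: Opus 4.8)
The plan is to reduce both guarantees to two structural facts about \OCSUmin schedules and then let everything else be substitution. The two facts are: (i) the objective in \autoref{align:obj} is \emph{jointly convex} in the decision vector $\{x_t\}$, so a convex combination of two feasible schedules costs no more than the same combination of their costs; and (ii) a convex combination of schedules that each complete the job is itself feasible. Given (i) and (ii), the competitive ratio of \aug inherits linearly from the ratios of its three ingredient algorithms (\ROROcmax, \ROROcmin, \ROROpred), and the theorem follows by plugging in $k$ and $\lambda$ from \autoref{def:def-aug-con-rob-params}.

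First I would prove the convexity lemma. For feasible solutions $\{y_t\},\{z_t\}$ and $\mu\in[0,1]$, convexity of each $g_t$ gives $g_t(\mu y_t + (1-\mu)z_t) \le \mu g_t(y_t) + (1-\mu)g_t(z_t)$, and the triangle inequality gives $\beta\lvert(\mu y_t + (1-\mu)z_t) - (\mu y_{t-1} + (1-\mu)z_{t-1})\rvert \le \mu\beta\lvert y_t - y_{t-1}\rvert + (1-\mu)\beta\lvert z_t - z_{t-1}\rvert$; summing over $t$ bounds the blended cost by the $\mu$-weighted average. For feasibility, the constraint $x_t\in[0,d_t]$ is an interval (convex) and each ingredient finishes the job, so $\sum_t(\mu y_t + (1-\mu)z_t) \ge c$ by linearity. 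Here I would stress that \ROROpred is defined (\autoref{alg:aug}) to trigger compulsory execution against the worst-case length \cmax, so even an arbitrarily wrong prediction still completes the job, which is exactly what keeps the blend feasible.

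Second I would apply the lemma twice, mirroring the two-level structure of \aug. At the inner level \combalg plays $\Tilde{x}_t = kx_{1t} + (1-k)x_{2t}$, so the lemma with \autoref{th:alg1-alpha} and \autoref{th:alg2-alpha} gives $\combalg(\mathcal{I}) \le [k\alpha_1 + (1-k)\alpha_2]\,\OPT(\mathcal{I})$. Substituting $k$ from \autoref{def:def-aug-con-rob-params} collapses this to $\alpha_1 - \texttt{sign}(\alpha_1-\alpha_2)\epsilon$, which interpolates between $\alpha_1$ and $\alpha_2$; the $\texttt{sign}$ term is the bookkeeping needed to keep $k\in[0,1]$ regardless of which of $\alpha_1,\alpha_2$ is larger. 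At the outer level \aug plays $x_t = \lambda\hat{x}_t + (1-\lambda)\Tilde{x}_t$, so the lemma gives $\aug(\mathcal{I}) \le [\lambda\,\rho_{\ROROpred} + (1-\lambda)(\alpha_1 - \texttt{sign}(\alpha_1-\alpha_2)\epsilon)]\,\OPT(\mathcal{I})$, where $\rho_{\ROROpred}$ is whatever ratio \ROROpred currently enjoys.

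Finally I would specialize $\rho_{\ROROpred}$. Writing $A := \alpha_1 - \texttt{sign}(\alpha_1-\alpha_2)\epsilon$ and $\lambda = 1 - \gamma/(A-\alpha)$: for \emph{consistency} an accurate prediction $\hat{c}=c$ makes \ROROpred coincide with \RORO on the true length, so $\rho_{\ROROpred}=\alpha$ and $\lambda\alpha + (1-\lambda)A = \alpha + \tfrac{\gamma}{A-\alpha}(A-\alpha) = \alpha+\gamma$; for \emph{robustness} I use $\rho_{\ROROpred} \le \alpha^{\max}_{\ROROpred}$ unconditionally, and the same substitution (with $1-\lambda = \gamma/(A-\alpha)$) reproduces the bracketed expression after expanding $A-\alpha$. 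The step I expect to be the main obstacle is the feasibility half of the lemma together with the range/sign bookkeeping: I must confirm that $\epsilon\in[0,\lvert\alpha_1-\alpha_2\rvert]$ and $\gamma\in[0,A-\alpha]$ keep both mixing weights in $[0,1]$ (so the blends are genuine convex combinations, hence feasible and cost-dominated), and that $\alpha^{\max}_{\ROROpred}$ is a legitimate finite bound — the latter needing a separate argument that \ROROpred's \cmax-based compulsory execution caps its loss even under a maximally wrong prediction.
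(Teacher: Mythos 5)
Your proposal is correct and follows essentially the same route as the paper: the same convexity-plus-triangle-inequality lemma to bound the cost of a blended schedule, the same two-level application (first \combalg, then \aug), and the same substitutions of $k$ and $\lambda$ yielding $(\alpha+\gamma)$-consistency and the bracketed robustness expression. The only point you defer---that $\alpha^{\max}_{\ROROpred}$ is a legitimate finite bound---is exactly what the paper's robustness lemma supplies via a two-case analysis of maximally wrong predictions ($\hat{c}=\cmin, c=\cmax$ and $\hat{c}=\cmax, c=\cmin$), but since the theorem states its bound in terms of $\alpha^{\max}_{\ROROpred}$, your treatment of it as a defined quantity does not change the structure of the argument.
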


This result, proven fully in \autoref{apx:aug-con-rob}, implies that \aug can achieve a competitive ratio of $\alpha$ when the job length prediction is correct.  Since $\alpha$ is the best achievable competitive ratio for the original $\mathsf{OCS}$ problem, \aug with accurate predictions of the job length achieves the optimal consistency bound as $\gamma \to 0$.  Furthermore, the robustness bound implies that the worst-case competitive ratio when predictions are \textit{incorrect} remains bounded by a combination of $\alpha_1$ and $\alpha_2$; this is intuitive because the competitive bounds of \ROROcmax and \ROROcmin (respectively) are the worst-case results for algorithms which expect one extreme job length and must deal with the other extreme job length.

In \autoref{fig:alpha}, we plot the numerical values of $\alpha$, $\alpha_1$, $\alpha_2$, and the consistency-robustness results of \aug with $\gamma = 1, \epsilon = 3$ for several different values of $\nicefrac{U}{L}$.  $\beta$ is fixed to $\nicefrac{U}{10}$, and $\nicefrac{\cmax}{\cmin} = 4$.  Note that $\alpha$ (the best competitive ratio for standard $\mathsf{OCS}$) grows sublinearly in $\nicefrac{U}{L}$, while the competitive bounds of \ROROcmax ($\alpha_1$) and \ROROcmin ($\alpha_2$) grow linearly in $\nicefrac{U}{L}$.  This reflects the inherent challenges of \OCSUmin and the impact of uncertain job lengths on the achievable competitive ratios.  Notably, for this setting of $\epsilon$ and $\gamma$, \aug is able to nearly match the optimal $\alpha$ for $\mathsf{OCS}$ when the predictions are correct (consistency), and is strictly upper bounded by $\alpha_2$ when the predictions are adversarially incorrect (robustness), meaning that it achieves the \textit{best of both worlds}.

\section{Experimental Results} \label{sec:exp}
In this section, we experimentally evaluate the performance of \aug in reducing the carbon footprint of scalable computing workloads.


\vspace{-0.1cm}
\subsection{Experimental Setup}\label{sec:exp_setup}

We take a job-centric approach where a carbon-aware scheduler independently allocates (i.e., scales) computing resources to each job to reduce the total carbon footprint of its execution while respecting per-job deadlines. We next detail our experimental setup.



\noindent \textbf{Carbon intensity trace.}
We use carbon intensity data from Electricity Maps~\cite{electricity-map} for California ISO (CAISO). 
The carbon trace provides carbon intensity measured in grams of CO2 equivalent per kilowatt-hour (gCO2eq/kWh) at an hourly granularity and spans 2020 to 2023.
We picked 1314 time slots as the job arrival times (once every 20 hrs\footnote{We purposefully avoid an arrival every 24 hrs to avoid diurnal patterns.}) to assess the performance across the whole trace duration. 
To evaluate algorithms that require carbon intensity ($\text{CI}$) forecasts, we introduce a uniformly random error to carbon intensity data to account for forecast errors, denoted as $\text{CI}_{\text{err}}$, where $\text{err}$ is the mean of added percentage error.

\begin{table}[t]
\caption{Inverse of scaling profiles as a mapping between the completed part of the job ($x$) and amount of resources allocated ($s$).}
\vspace{-0.3cm}
\label{tab:ScalingProfiles}
\footnotesize
\begin{tabular}{|c|c|c|c|}
\hline
\textbf{Profile} & \textbf{Equation} & \textbf{Profile} & \textbf{Equation} \\ \hline
\texttt{P1} & $s=x$ & \texttt{P4} & $s = 0.5x^2+x$ \\ \hline
\texttt{P2} & $s=0.15x^2+x$ & \texttt{P5} & $s=0.75x^2+x$ \\ \hline
\texttt{P3} & $s=0.25x^2+x$ & \texttt{P6} & $s=x^2+x$ \\ \hline
\end{tabular}%
\vspace{-0.25cm}
\end{table}

\begin{table}[t]
    \centering
    \captionof{table}{Summary of characteristics for the baseline algorithms and two variants of the proposed algorithm}
    \vspace{-0.3cm}
    \label{table:baseline-algs}
    \resizebox{0.455\textwidth}{!}{%
        \begin{tabular}{|c|c|c|c|c|}
            \hline
            \textbf{Algorithm} & \textbf{Carbon-Aware} & \textbf{Switching-Aware} & \textbf{Job Length Input} \\ \hline
            \RORO \cite{Lechowicz:24} & Yes & Yes & Actual \\ \hline
            \OWT & Yes & No & Prediction  \\ \hline
            \threshold & Yes & No & N/A  \\ \hline
            \agnostic & No & No & N/A \\ \hline
            \carbonscaler\cite{Hanafy:23:CarbonScaler} & Yes & No & Prediction  \\ \hline
            \aug (this paper) & Yes & Yes & Prediction  \\ \hline
            \augd \footnotemark (this paper) & Yes & Yes & Prediction  \\ \hline
        \end{tabular}
    }
    \vspace{-0.5cm}
\end{table}
\footnotetext{\augd is the modified version of \aug, which considers discrete resource allocation based on the solution given by \aug.}

\noindent \textbf{Job characteristics.}
Each job arrives independently with a job length $c$ uniformly sampled within the range [$\cmin$, $\cmax$]. 
To evaluate the impact of job length prediction error, we model a predictor that yields a job length estimate within the range [$c - p\times c$, $c + p\times c$], where $p$ is the percentage error in job length predictions. We also assume that all jobs have a deadline ($T$) of 24 hrs and incur a fixed symmetric maximum switching overhead of $\beta.h(r)$ (e.g., for checkpoint and resume) when scaling from zero to the maximum resources $r$, where $h(\cdot)$ is the scaling profile.

\noindent \textbf{Resource scaling profiles.}
Carbon savings highly depend on the scalability of jobs~\cite{Hanafy:23:CarbonScaler}, where more scalable jobs can yield higher savings as they provide higher marginal throughput for each added resource. \autoref{tab:ScalingProfiles} depicts the mapping functions $s = h_t^{-1}(x)$ (the inverse of the scaling profiles), where $x$ represents the completed part of the job (i.e., progress made) and $s$ represents the amount of resource, e.g., servers, to obtain the given progress. 
Some of the utilized profiles represent common scalability profiles of real-world batch jobs. For instance, \texttt{P1} refers to embarrassingly parallel applications such as BLAST~\cite{Souza:23}, while \texttt{P2} and \texttt{P4} are a fitted version of the machine learning training workloads for ResNet18 and MobileNetV2, respectively, described in~\cite{Hanafy:23:CarbonScaler}. In contrast, \texttt{P3}, \texttt{P4}, and \texttt{P6} are synthesized profiles to represent moderate and non-scalable applications.


\noindent \textbf{Parameter settings.}
We evaluate \aug across a wide range of experimental scenarios that impact its performance, including a range of maximum job lengths (\cmax), varying coefficients for switching emissions ($\beta$), errors in job length predictions, and a range of learning augmentation factors ($\lambda$). 
We set the value of the decision factor ($k$) in \autoref{alg:aug} to 0.5, so both robust algorithms receive equal consideration. To impose practical constraints, we evaluate \aug across a range of carbon intensity forecast errors ($\text{CI}_{\text{err}}$) and resource constraints ($r$) since available resources are constrained.

\begin{figure*}[t]
    \centering
    \includegraphics[width=0.8\textwidth]{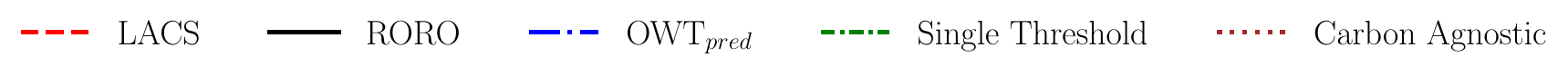}\\
    \begin{subfigure}[b]{0.245\textwidth}
        \centering
        \includegraphics[width=\textwidth]{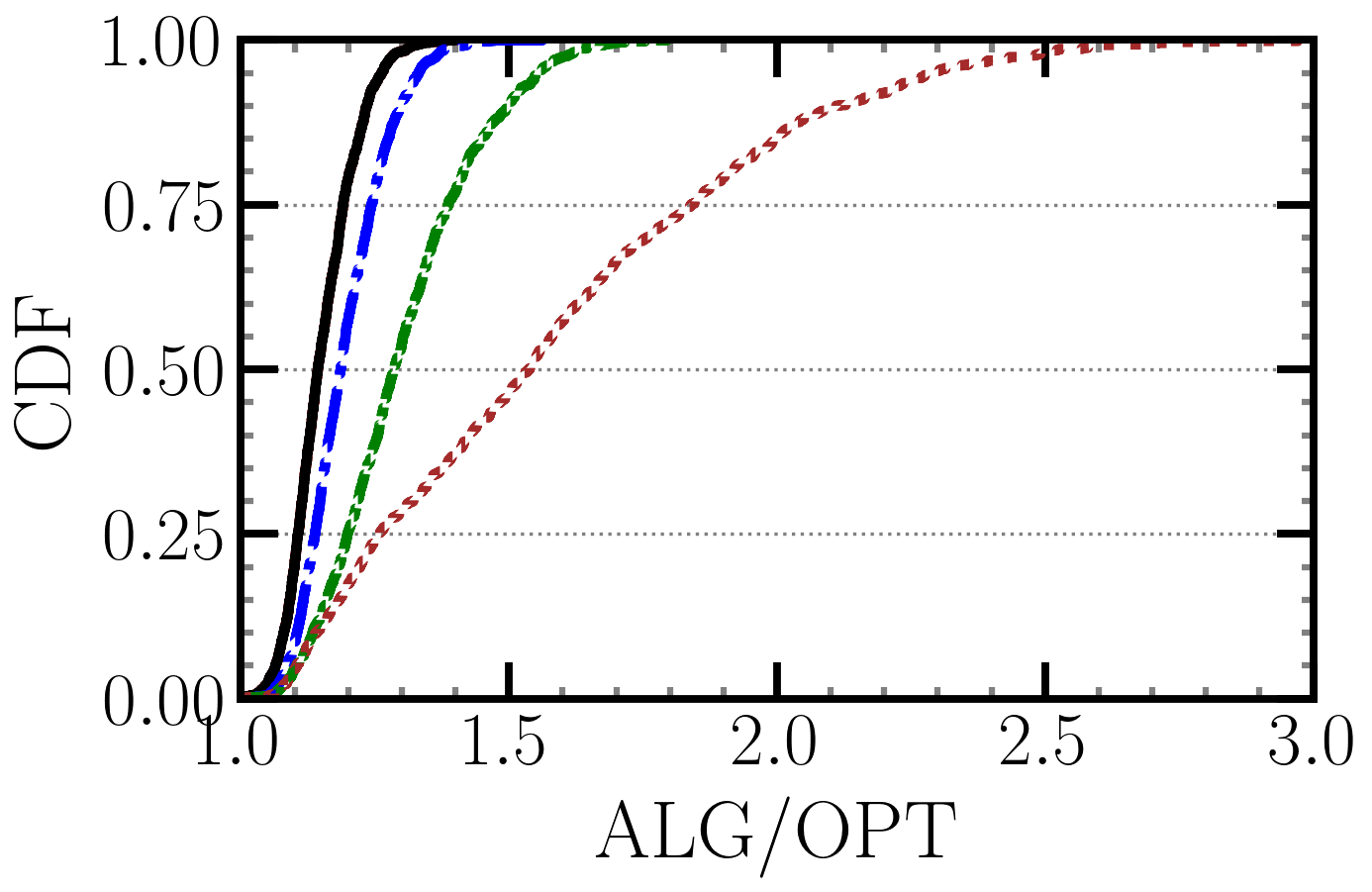}
        \vspace{-0.6cm}
        \caption{\cmax= 1}
        \label{fig:effect_cMax_1}
    \end{subfigure}
    \hfill
    \begin{subfigure}[b]{0.245\textwidth}
        \centering
        \includegraphics[width=\textwidth]{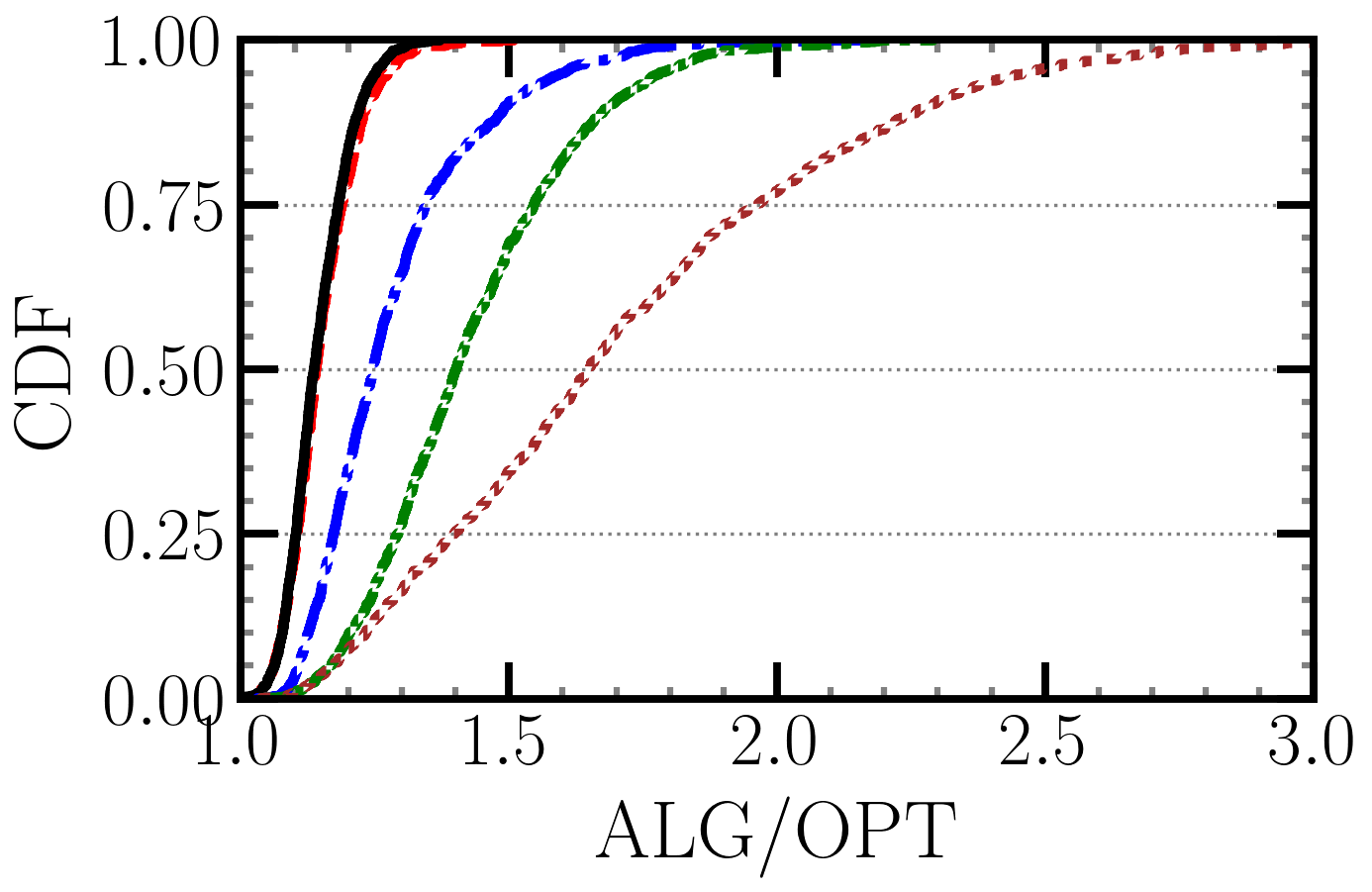}
        \vspace{-0.6cm}
        \caption{\cmax= 3}
        \label{fig:effect_cMax_3}
    \end{subfigure}
    \hfill
    \begin{subfigure}[b]{0.245\textwidth}
        \centering
        \includegraphics[width=\textwidth]{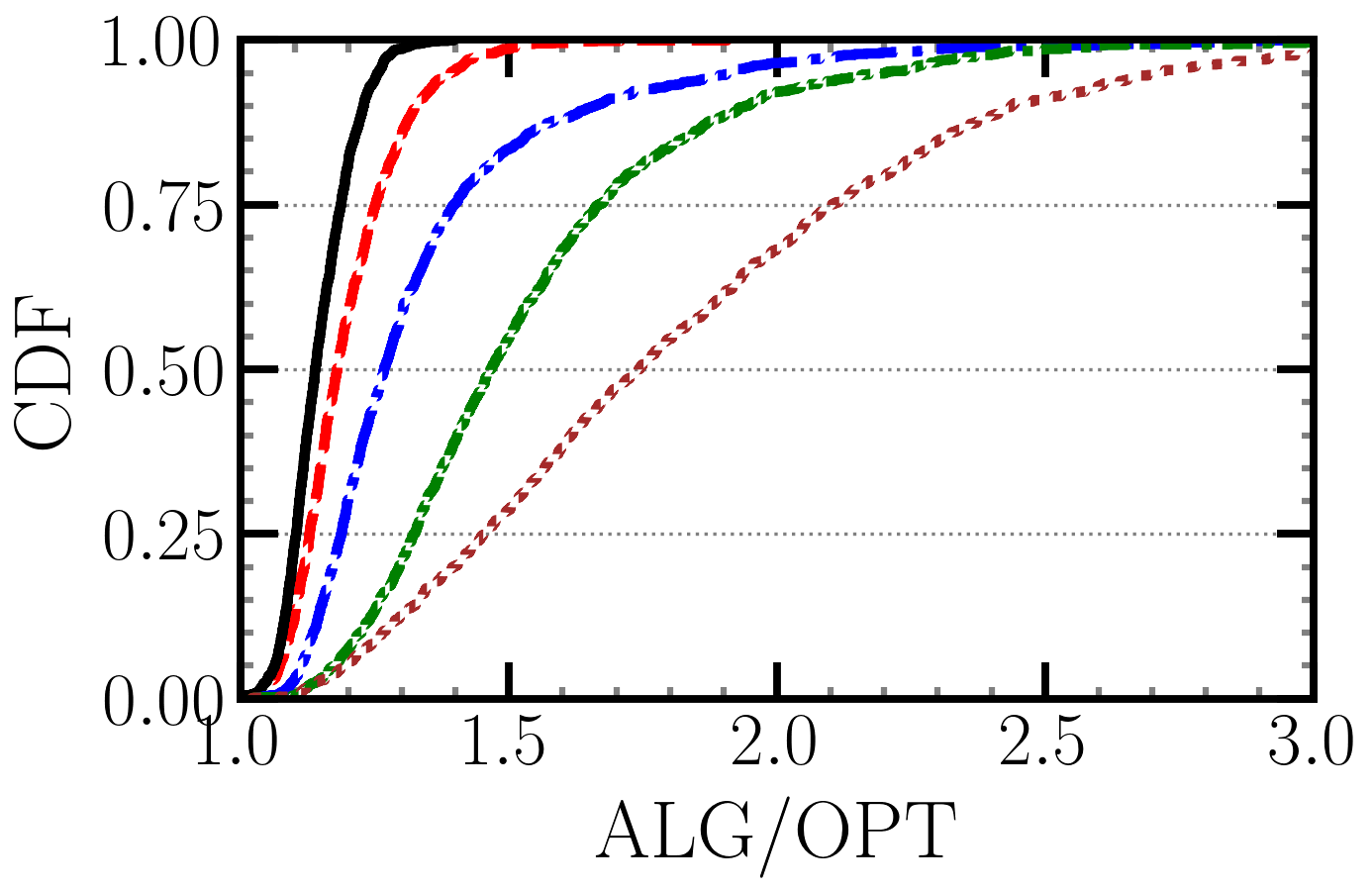}
        \vspace{-0.6cm}
        \caption{\cmax = 6}
        \label{fig:effect_cMax_6}
    \end{subfigure}
    \hfill
    \begin{subfigure}[b]{0.245\textwidth}
        \centering
        \includegraphics[width=\textwidth]{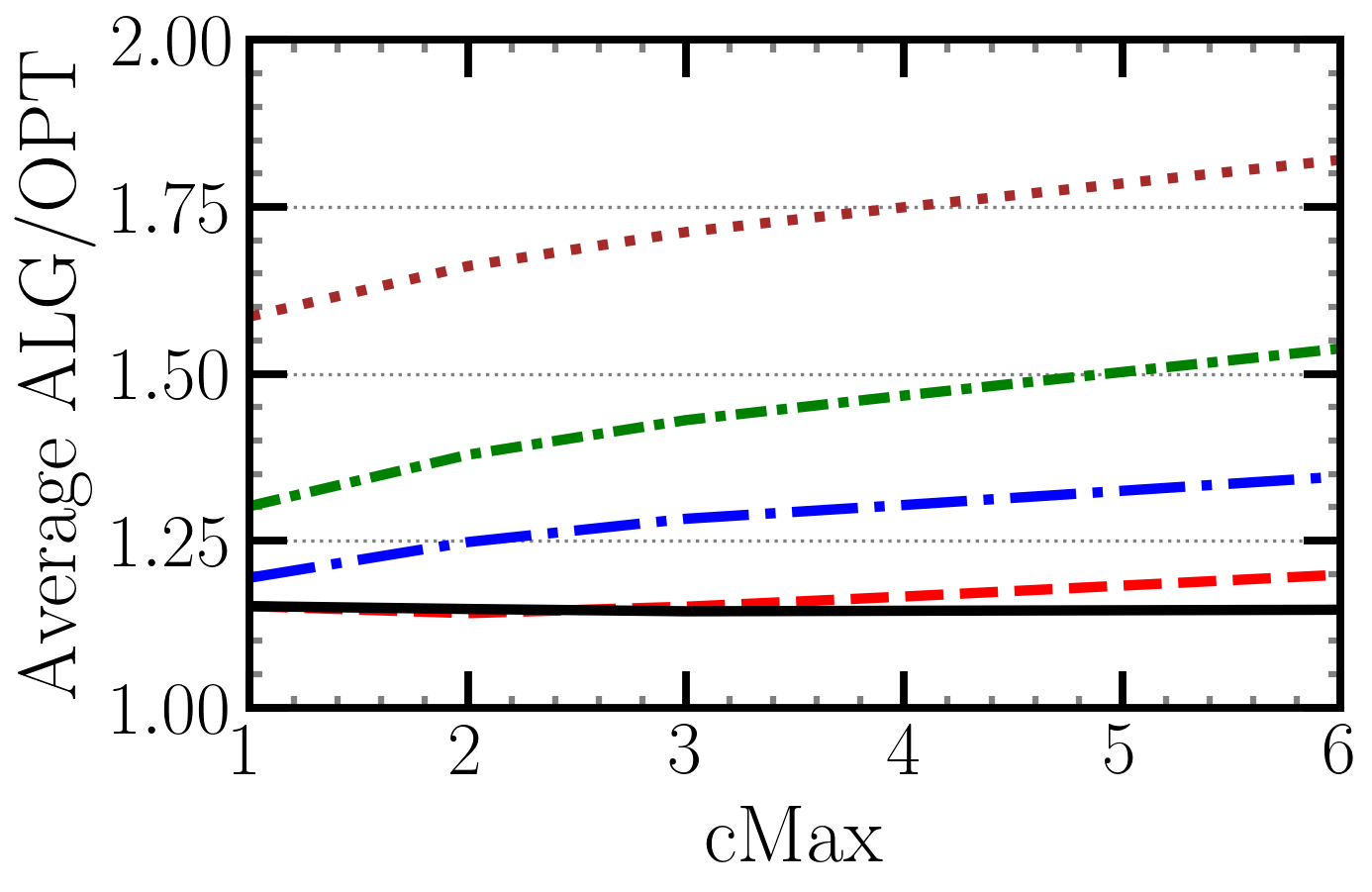}
        \vspace{-0.6cm}
        \caption{Effect of \cmax}
        \label{fig:effect_cMax_all}
    \end{subfigure}
    \vspace{-0.4cm}
    \caption{(a), (b), and (c) report cumulative distribution functions (CDFs) of empirical competitive ratios for evaluated algorithms under different \cmax values, where \cmin=1. (d) Shows the effect of \cmax on empirical competitive ratios. The scaling profile is \texttt{P1}, $\beta=20$, job prediction error is $20\%$, and $\lambda=0.5$. A CDF curve towards the top left corner indicates better performance.}
    \label{fig:effect_cMax}
    \vspace{-0.4cm}
\end{figure*}

\begin{figure*}[t]
    \centering
    \includegraphics[width=0.8\textwidth]{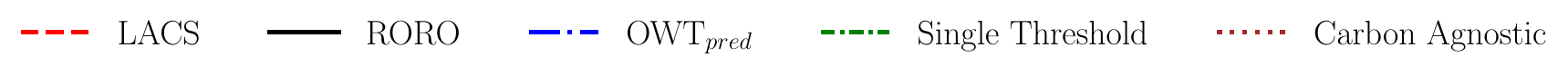}\\
    \begin{subfigure}[b]{0.24\textwidth}
        \centering
        \includegraphics[width=\textwidth]{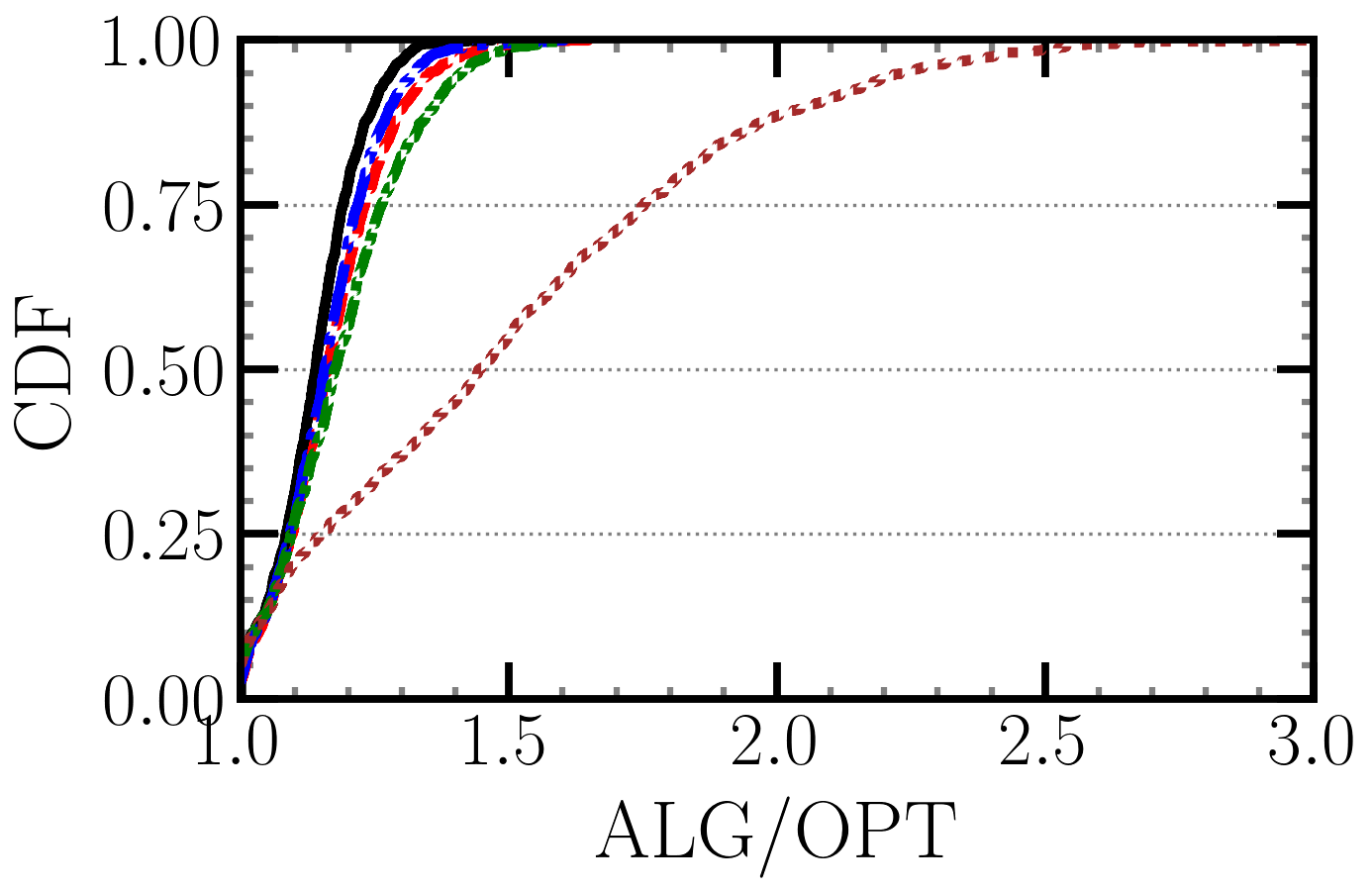}
        \vspace{-0.6cm}
        \caption{$\beta = 0$}
        \label{fig:effect_beta_0}
    \end{subfigure}
    \hfill
    \begin{subfigure}[b]{0.24\textwidth}
        \centering
        \includegraphics[width=\textwidth]{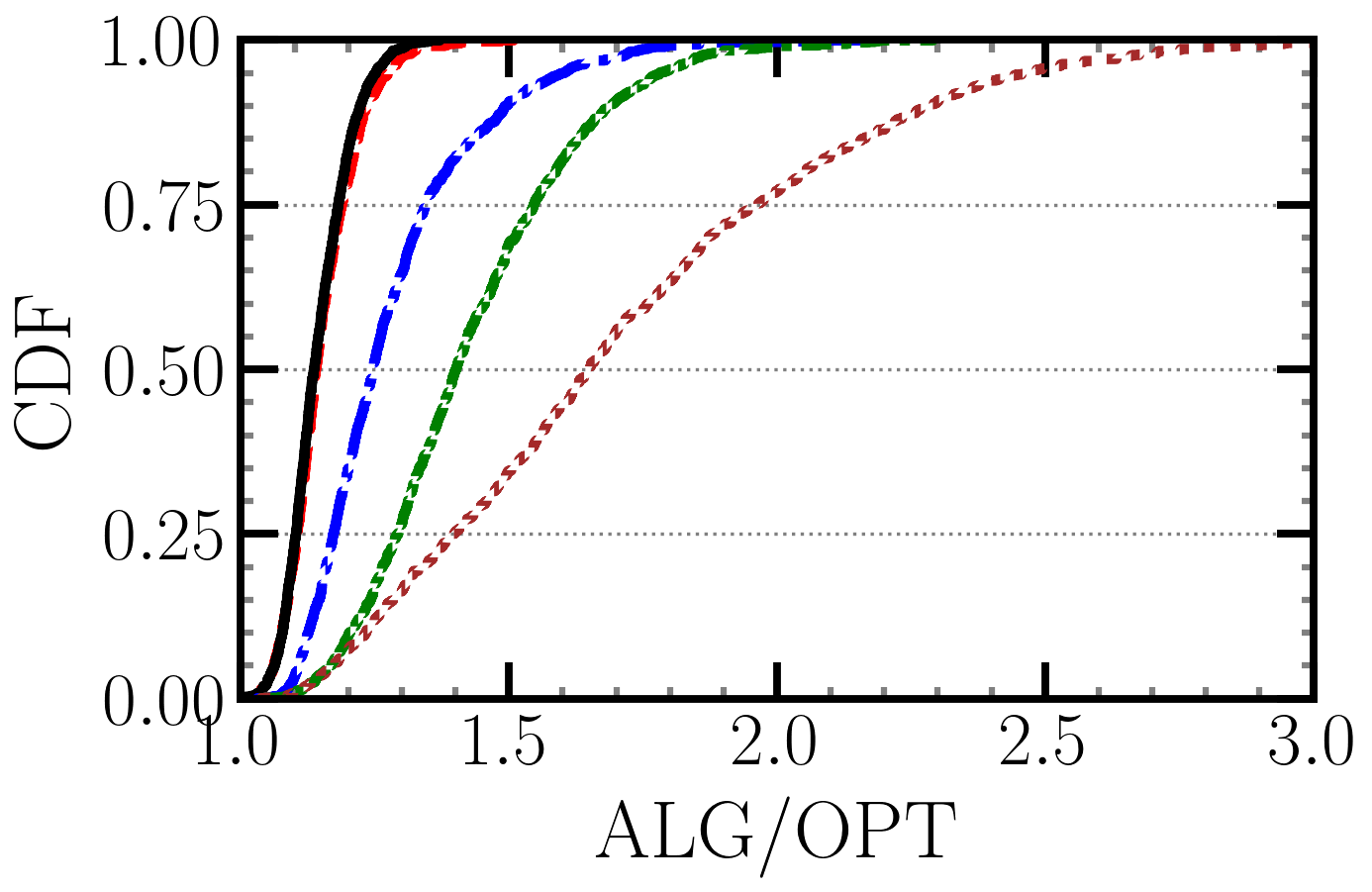}
        \vspace{-0.6cm}
        \caption{$\beta = 20$}
        \label{fig:effect_beta_20}
    \end{subfigure}
    \hfill
    \begin{subfigure}[b]{0.24\textwidth}
        \centering
        \includegraphics[width=\textwidth]{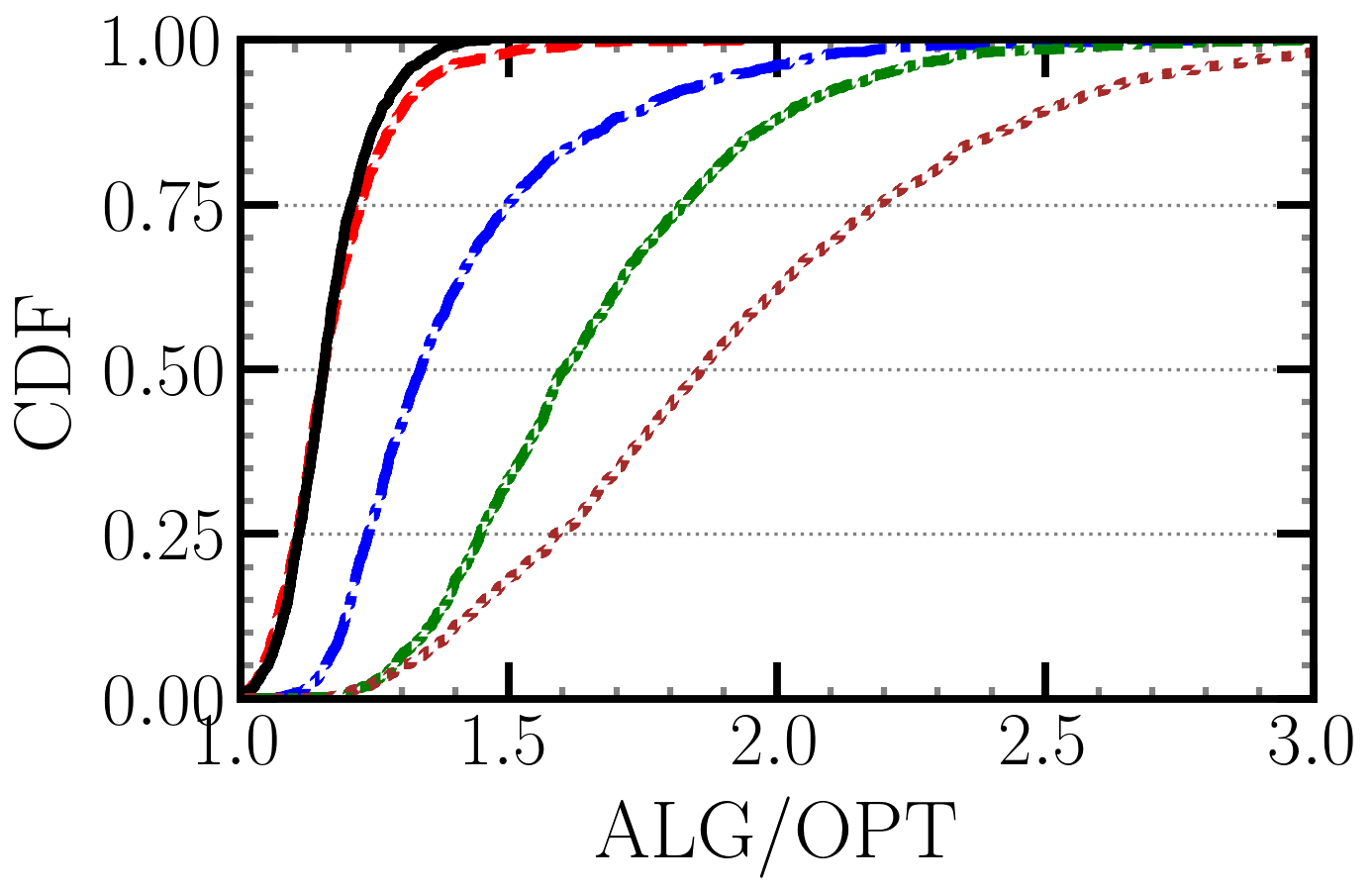}
        \vspace{-0.6cm}
        \caption{$\beta = 40$}
        \label{fig:effect_beta_40}
    \end{subfigure}
    \hfill
    \begin{subfigure}[b]{0.24\textwidth}
        \centering
        \includegraphics[width=\textwidth]{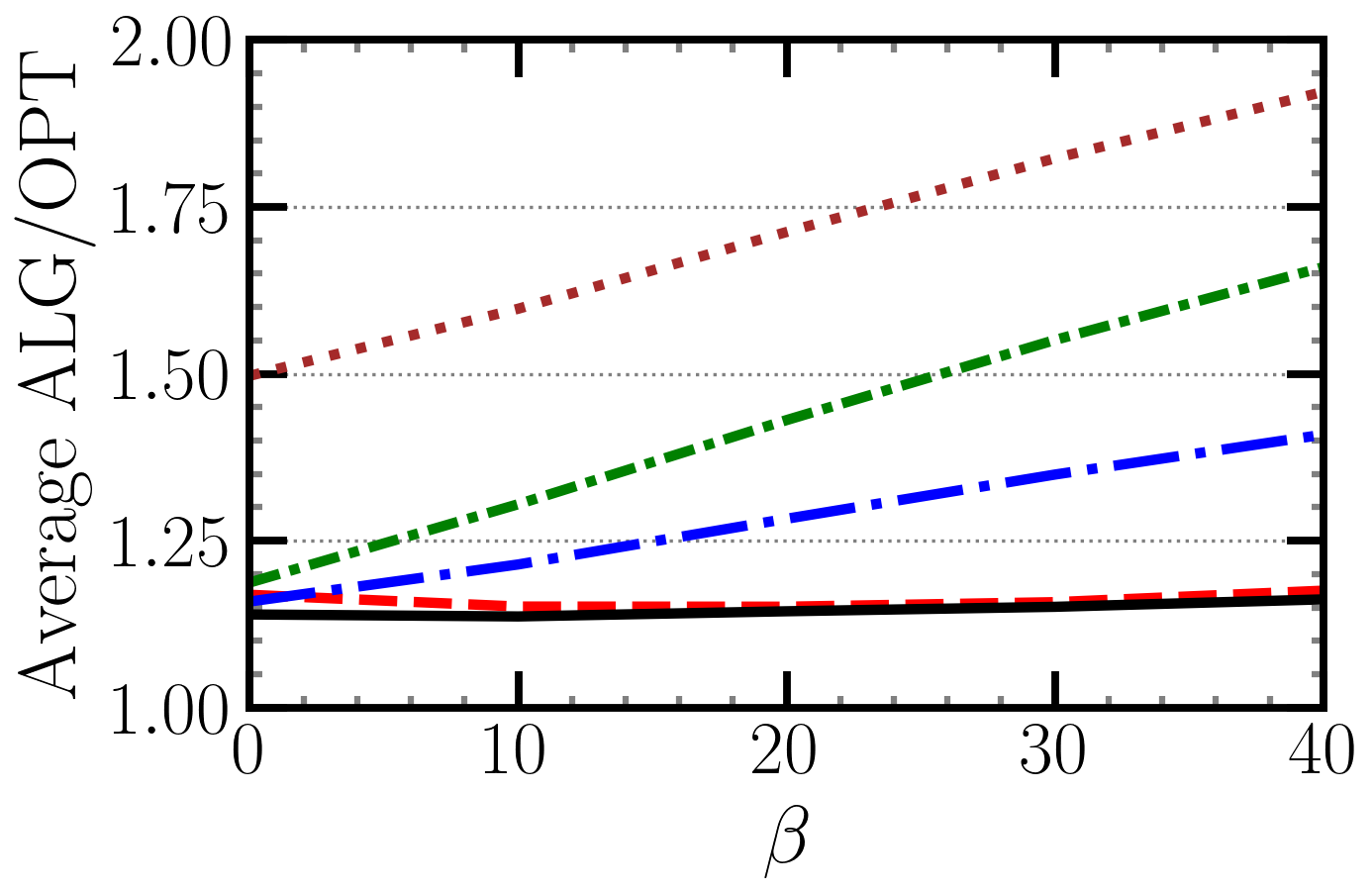}
        \vspace{-0.6cm}
        \caption{Effect of $\beta$}
        \label{fig:effect_beta_all}
    \end{subfigure}
    \vspace{-0.4cm}
    \caption{(a), (b), and (c) report cumulative distribution functions (CDFs) of empirical competitive ratios for selected algorithms under different $\beta$ values. (d) Shows the effect of $\beta$ on empirical competitive ratios. The scaling profile is \texttt{P1}, \cmax$=3$, job prediction error is $20$\%, and $\lambda$=$0.5$. A CDF curve towards the top left corner indicates better performance.}
    \label{fig:effect_beta}
\end{figure*}

\noindent \textbf{Baseline algorithms.}
We evaluate the two variants of our proposed algorithm, \aug and \augd, against multiple state-of-the-art algorithms, summarized in \autoref{table:baseline-algs} and detailed below.
\begin{enumerate}[leftmargin=*]
    \item \textbf{\RORO:} An online conversion with switching costs algorithm with knowledge of job length~\cite{Lechowicz:24}, described in \autoref{subsec:alg_background}. This \textit{online algorithm} knows the accurate job length, so it serves as an upper bound for \aug that has no access to the exact job length a priori. 
    
    \item \textbf{\OWT:} Threshold-based one-way trading with job length predictions. This algorithm adapts a threshold-based one-way trading algorithm~\cite{SunZeynali:20} that assumes perfect job length information.
    Our adaptation, \OWT, uses inaccurate job length predictions to decide the amount of job $x_t$ to be scheduled at time $t$ based on a threshold function $\Phi$. It accounts for the carbon emissions of execution but ignores any switching emissions. When $\beta=0$, \ROROpred (described in \autoref{sec:augmentation}) reduces to \OWT.

    \item \textbf{\threshold:} The algorithm utilizes a static threshold set at $\sqrt{UL}$, a value initially introduced in \cite{ElYaniv:01}. Adapted for \OCSUmin, \threshold operates by running the job with the maximum resource available at each time step $t$, but only if the execution emissions are lower than $\sqrt{UL}$. This algorithm also does not consider switching emissions. 

    \item \textbf{\agnostic: } This algorithm presents a greedy approach that executes each job with maximum available resources upon submission. As the scheduler does not know the job length, executing the job with maximum resources reduces execution time and ensures completion before the deadline, if feasible. 

    \item \textbf{\carbonscaler:} The \carbonscaler~\cite{Hanafy:23:CarbonScaler} algorithm utilizes its knowledge of the job length, carbon intensity, and deadline to construct a carbon-aware resource scaling. To accommodate job length prediction inaccuracies and potential errors in carbon intensity forecasts, we feed the predicted job length and the erroneous carbon intensity forecasts to the algorithm.
\end{enumerate}

It must be noted that all algorithms, except \RORO, start the compulsory execution at the time slot $T - (\cmax - w^{(t)})/d_t$, where $T$ is the deadline, and $w^{(t)}$ is the progress at time $t$ to ensure that the job completes  before the deadline\footnote{\RORO starts the compulsory execution based the actual job length $c$}.

\noindent \textbf{Evaluation metric.}
We compare the performance of an algorithm against an offline optimal resource scaling schedule computed using a numerical solver~\cite{SciPy}, which allows us to report the empirical competitive ratio represented as $\nicefrac{\texttt{ALG}}{\texttt{OPT}}$ (lower value is better). We also report the reduction in carbon footprint with respect to the \agnostic execution of the job that uses maximum resources and aims to finish the job as soon as possible.


\vspace{-0.3cm}
\subsection{Effect of Maximum Job Length}
The ratio between the maximum and minimum job lengths (\cmax, \cmin) dictates the characteristics of jobs that may arrive at the scheduler; a higher ratio means the job lengths can be more diverse. 
For simplicity, we set \cmin$=1$ and analyze the effect of \cmax. 
\autoref{fig:effect_cMax} shows the performance of various algorithms under different \cmax values, where lower $\nicefrac{\texttt{ALG}}{\texttt{OPT}}$ is better. We evaluate the proposed algorithm against \RORO, \OWT, \threshold, and \agnostic by showing their performance compared to the offline optimal algorithm. We omit \carbonscaler in this section as it requires discrete resource allocation; we will consider it when we enforce discrete assignment.
We assume scaling profile \texttt{P1}, i.e., the job is embarrassingly parallel, switching emission coefficient $\beta=20$, job prediction error of $20\%$, and augmentation factor $\lambda=0.5$.

\autoref{fig:effect_cMax_1} shows the special case where $\cmax=\cmin$, i.e., all jobs are the same length and known to \aug,  \RORO, \OWT. As expected, \aug and \RORO are identical while \OWT is slightly behind as it does not consider switching emissions. Nonetheless, these algorithms achieve higher performance than \threshold and \agnostic. 
\autoref{fig:effect_cMax_3} and \autoref{fig:effect_cMax_6} show realistic cases where the \cmax is 3 and 6, respectively. As shown, \aug is the closest to \RORO in all settings. Its performance is within 15.1 and 19.8\% on average from the offline optimal and only 0.6\% and 4.6\% from \RORO, when \cmax is 3 and 6, respectively.
\autoref{fig:effect_cMax_all} summarizes the average competitive ratio across different \cmax values; increasing the \cmax decreases the performance of all techniques, as a higher \cmax yields more uncertainty on the actual job length and exacerbates switching emissions as we do not employ resource constraints. Nonetheless, the results indicate the superiority of \aug across different \cmax values, where 
\aug performs within 16.1\% of the offline optimal and only 1.3\% away from \RORO, resulting in a 31\% reduction in carbon emissions compared to the \agnostic policy. 
\vspace{-1em}

\begin{figure*}[t]
    \centering
    \begin{subfigure}[b]{0.247\textwidth}
        \centering
        \includegraphics[width=\textwidth]{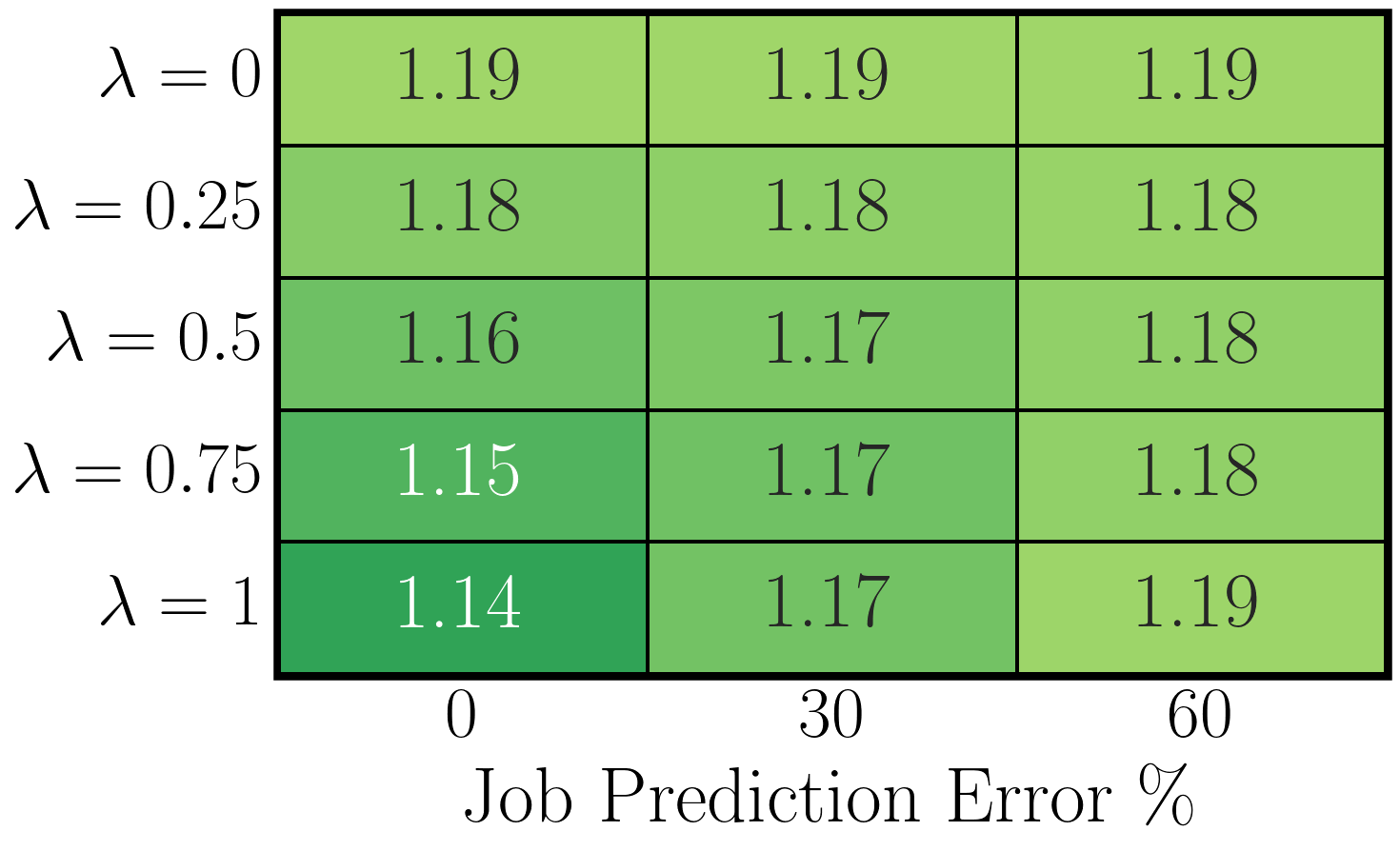}
        \vspace{-0.5cm}
        \caption{$\beta = 0, \cmax=3$}
        \label{fig:lambda_error_3_0}
    \end{subfigure}
    \hfill
    \begin{subfigure}[b]{0.2\textwidth}
        \centering
        \includegraphics[width=\textwidth]{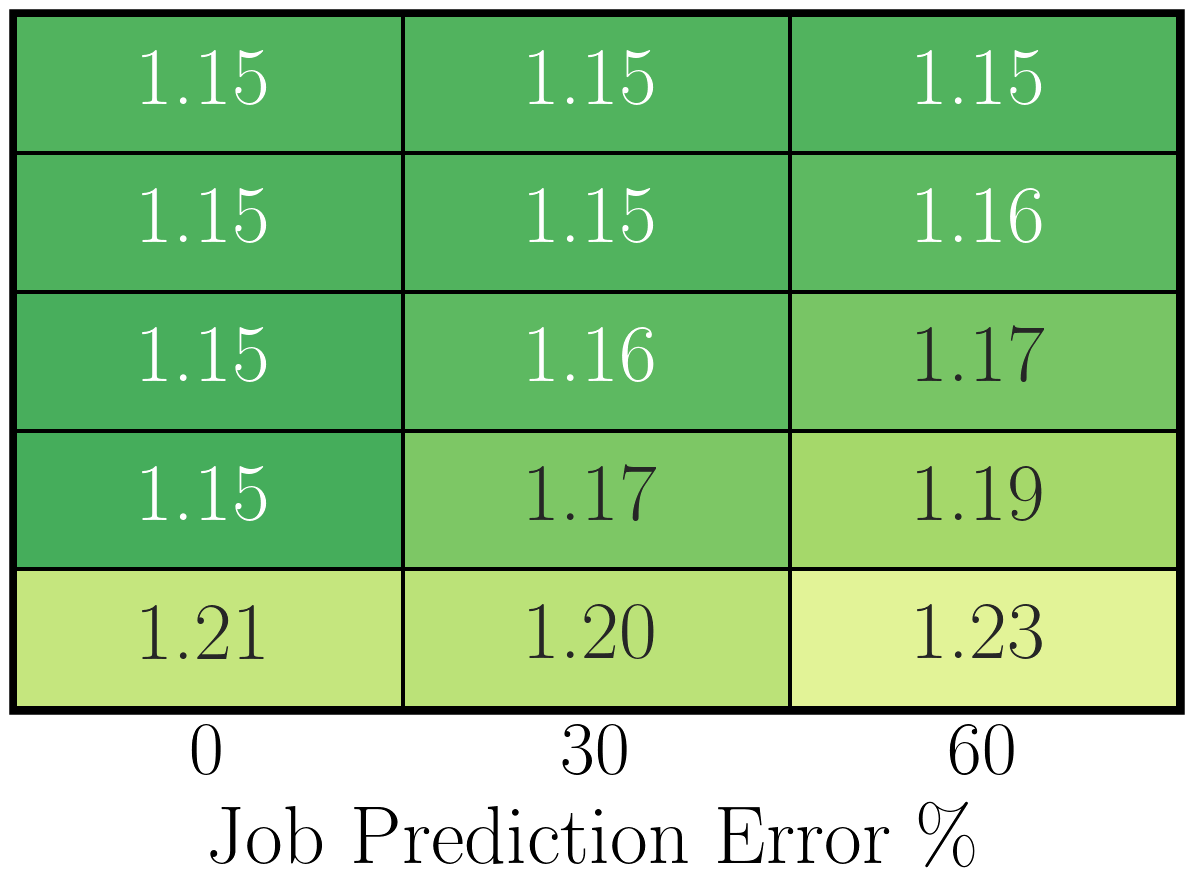}
        \vspace{-0.5cm}
        \caption{$\beta = 10, \cmax=3$}
        \label{fig:lambda_error_3_10}
    \end{subfigure}
    \hfill
    \begin{subfigure}[b]{0.2\textwidth}
        \centering
        \includegraphics[width=\textwidth]{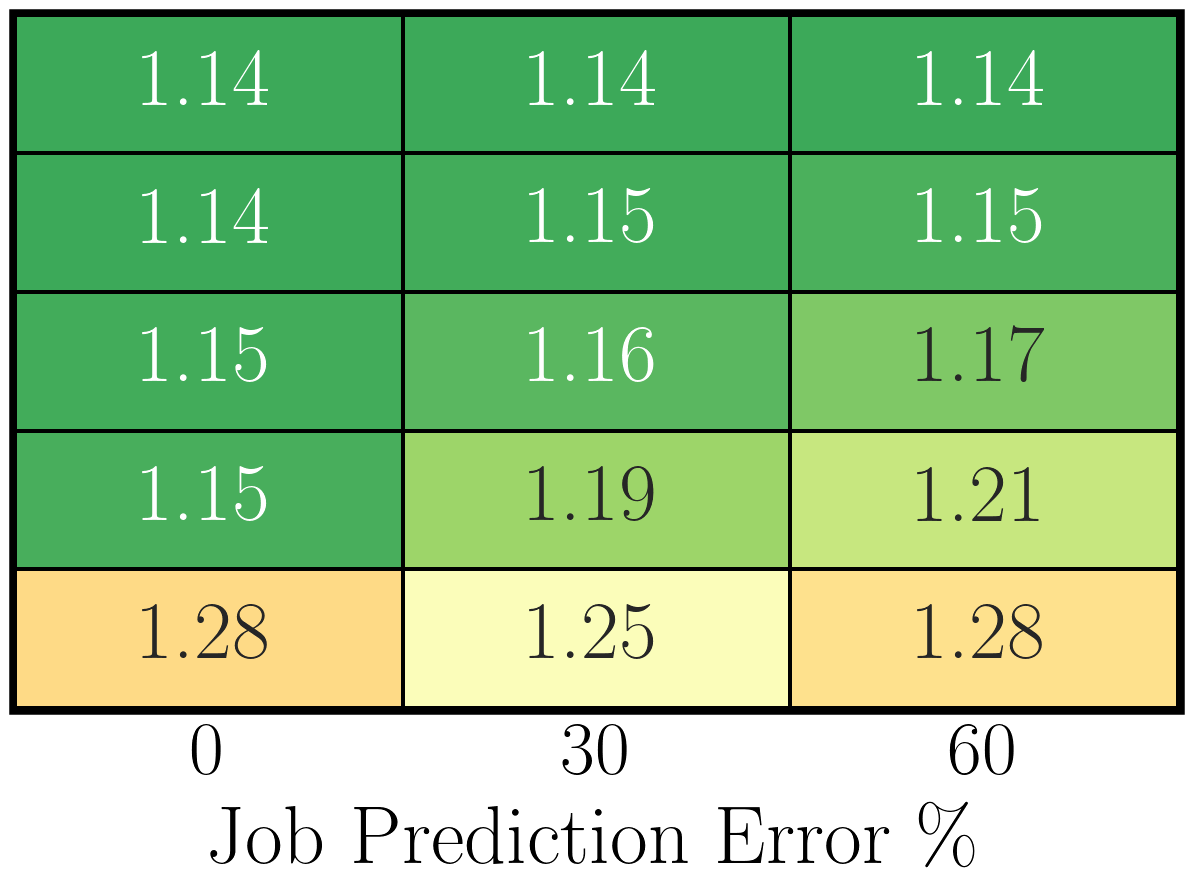}
        \vspace{-0.5cm}
        \caption{$\beta = 20, \cmax=3$}
        \label{fig:lambda_error_3_20}
    \end{subfigure}
    \hfill
    \begin{subfigure}[b]{0.232\textwidth}
        \centering
        \includegraphics[width=\textwidth]{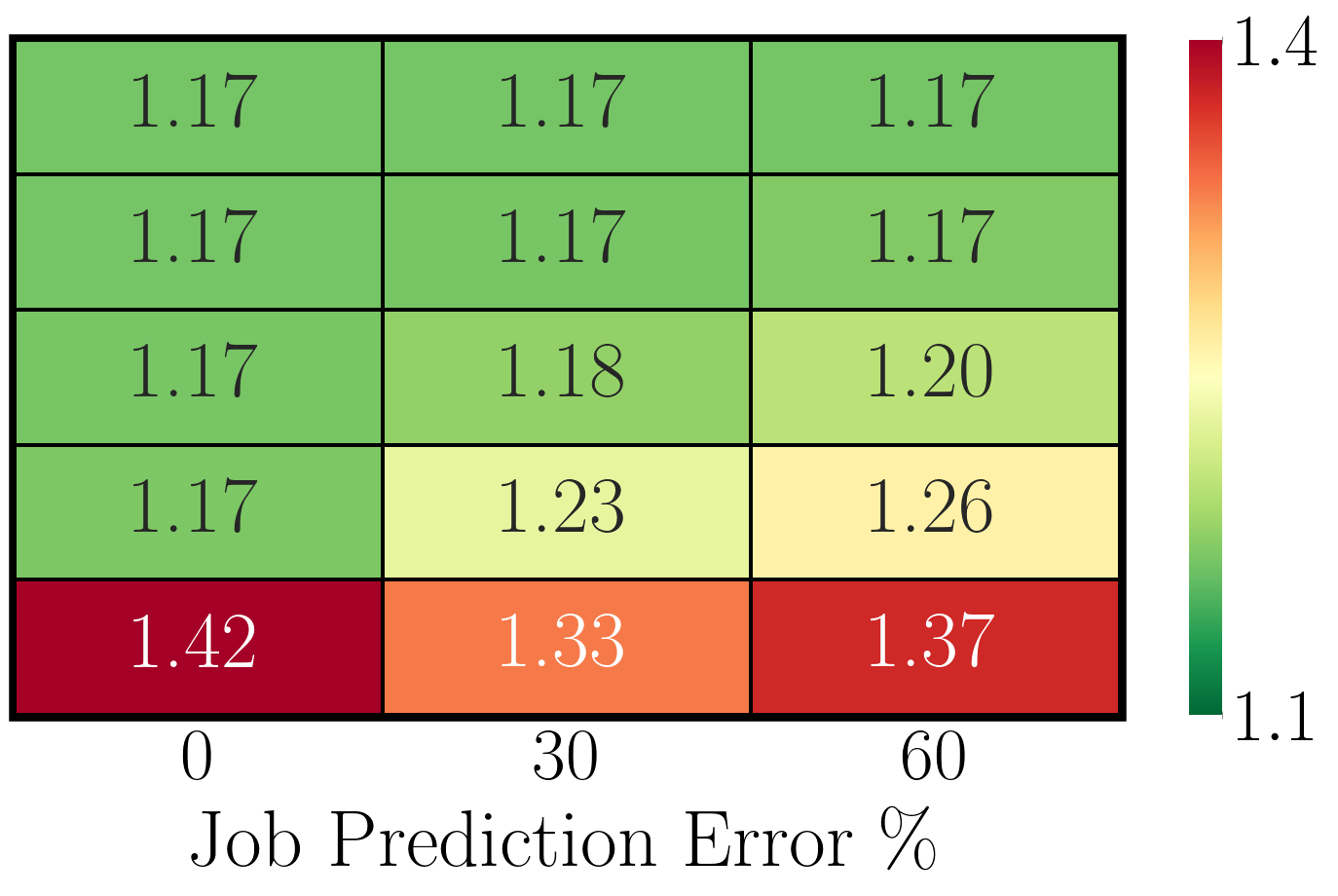}
        \vspace{-0.5cm}
        \caption{$\beta = 40, \cmax=3$}
        \label{fig:lambda_error_3_40}
    \end{subfigure}
    \\
    \begin{subfigure}[b]{0.247\textwidth}
        \centering
        \includegraphics[width=\textwidth]{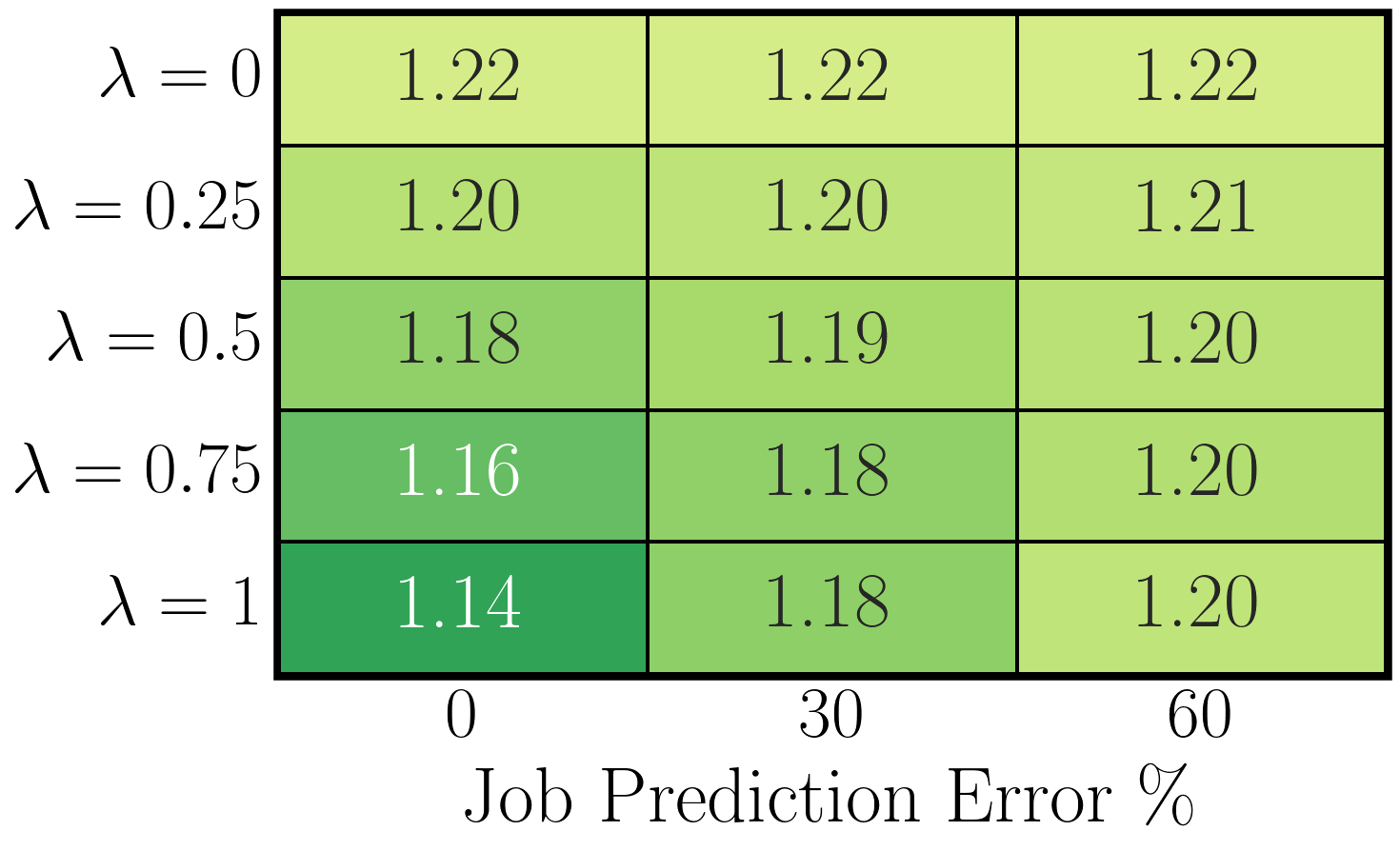}
        \vspace{-0.5cm}
        \caption{$\beta = 0, \cmax=6$}
        \label{fig:lambda_error_6_0}
    \end{subfigure}
    \hfill
    \begin{subfigure}[b]{0.2\textwidth}
        \centering
        \includegraphics[width=\textwidth]{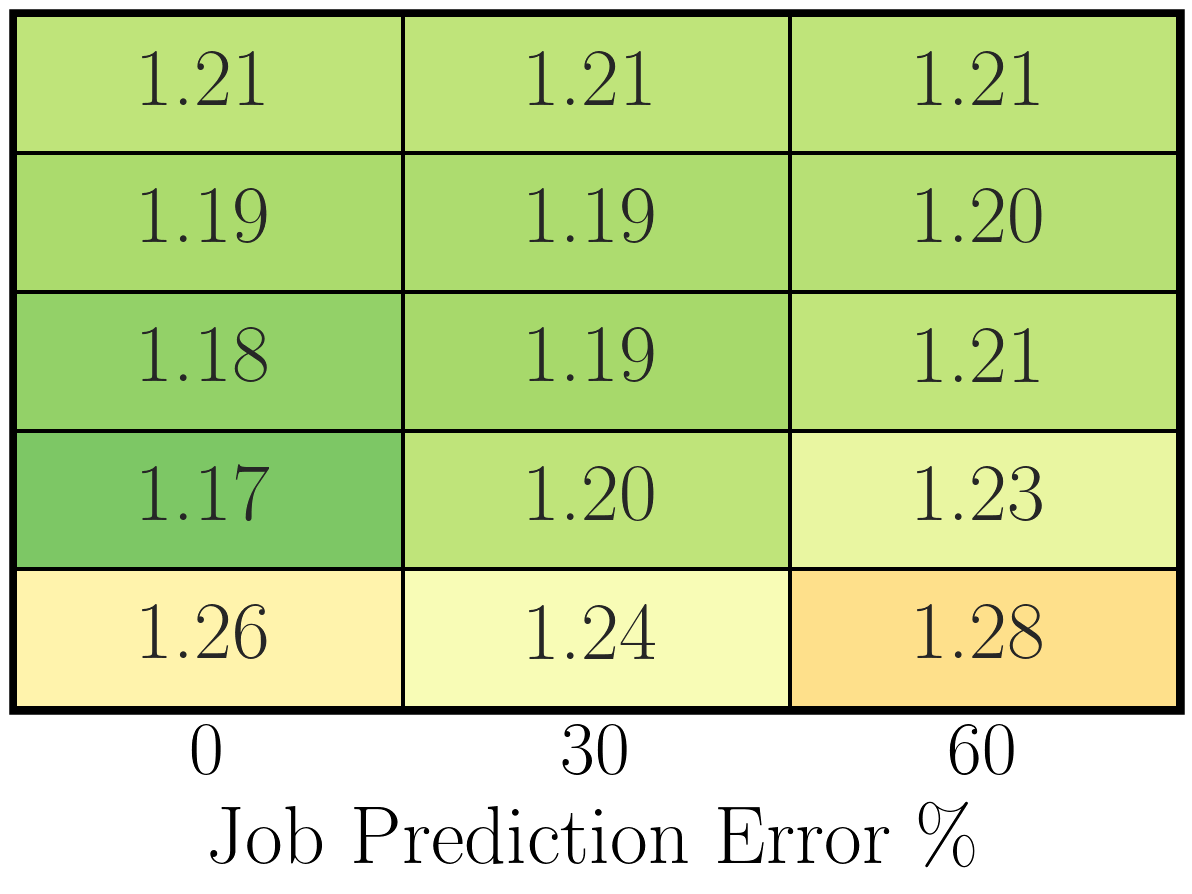}
        \vspace{-0.5cm}
        \caption{$\beta = 10, \cmax=6$}
        \label{fig:lambda_error_6_10}
    \end{subfigure}
    \hfill
    \begin{subfigure}[b]{0.2\textwidth}
        \centering
        \includegraphics[width=\textwidth]{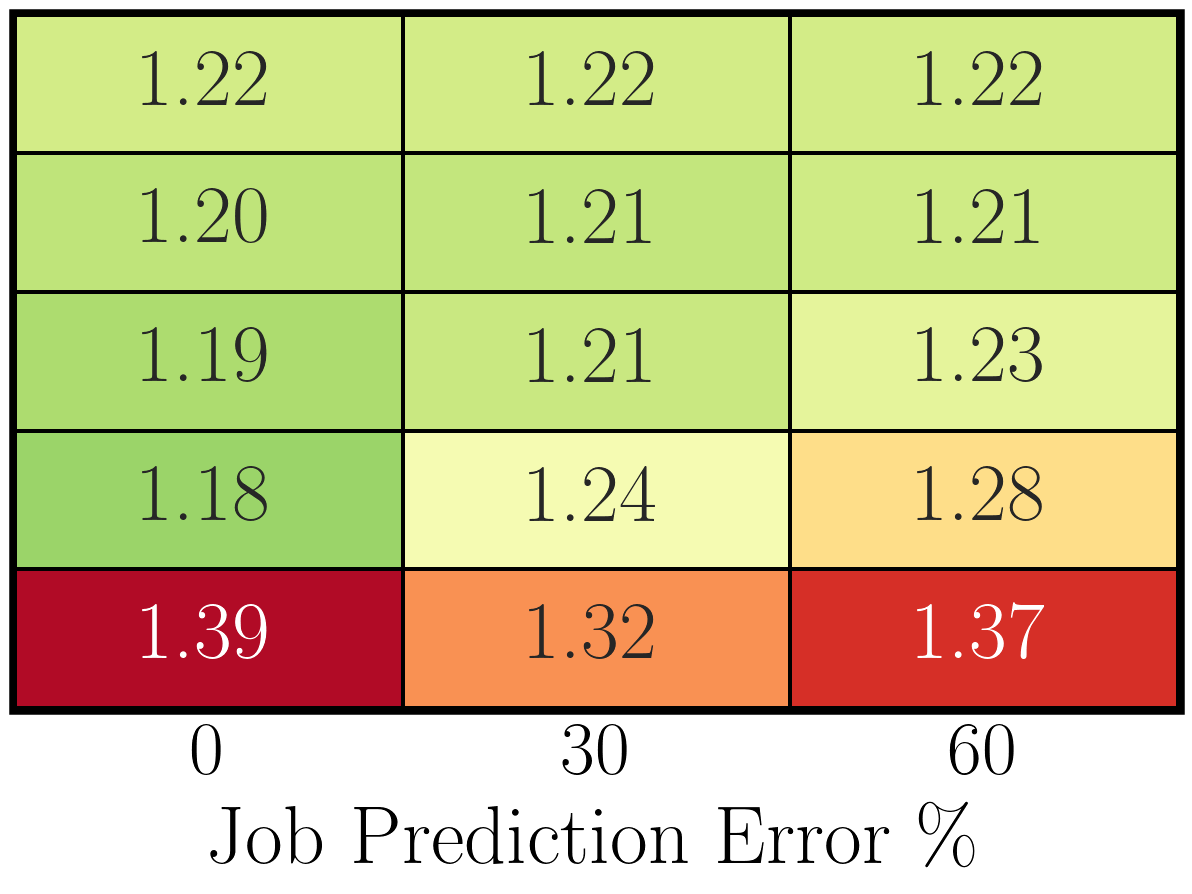}
        \vspace{-0.5cm}
        \caption{$\beta = 20, \cmax=6$}
        \label{fig:lambda_error_6_20}
    \end{subfigure}
    \hfill
    \begin{subfigure}[b]{0.232\textwidth}
        \centering
        \includegraphics[width=\textwidth]{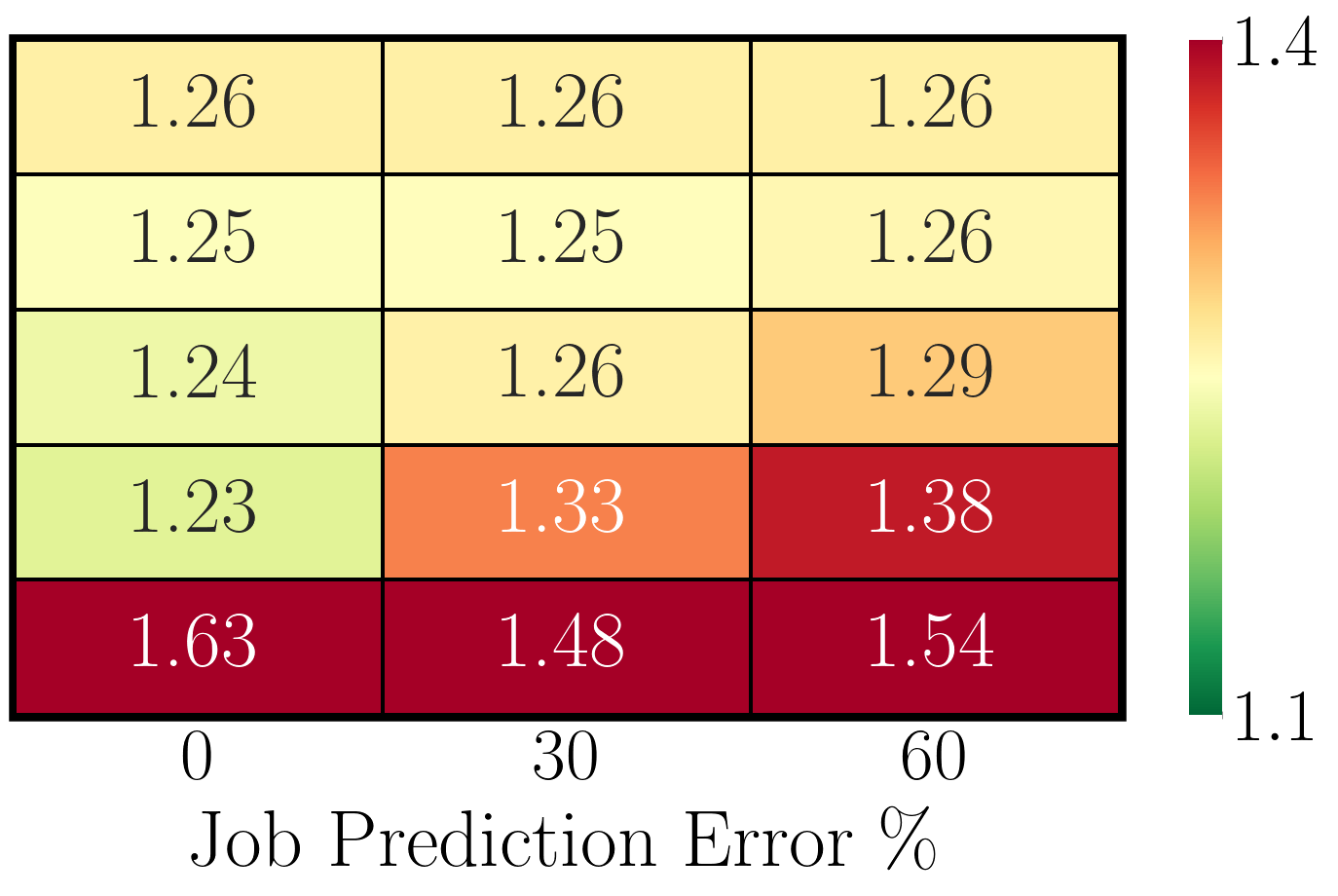}
        \vspace{-0.5cm}
        \caption{$\beta = 40, \cmax=6$}
        \label{fig:lambda_error_6_40}
    \end{subfigure}
    \vspace{-0.3cm}
    \caption{Average competitive ratios for \aug under different job prediction errors, augmentation factors $\lambda$, switching emissions $\beta$ and \cmax values for scaling profile \texttt{P1}.}
    \label{fig:lambda_error}
\end{figure*}

\subsection{Effect of Switching Emissions}
Dynamic resource adjustment leads to wasted time and energy, which result in extra emissions. For example, when applications employ suspend-resume scheduling, the switching emissions are due to the incurred energy required to checkpoint or restore the state before being able to resume processing. Since these switching emissions differ across systems and applications, e.g., due to memory size~\cite{Sharma:2016:Flint}, we evaluate the effect of the switching emissions coefficient $\beta$ on performance, where higher $\beta$ hinders the policy's ability to adapt to variations in carbon intensity. 
\autoref{fig:effect_beta} shows the performance of the algorithms against different values of $\beta$. We fix maximum job length (\cmax$=3$) while keeping the other parameters the same as \autoref{fig:effect_cMax}. 
We assigned $\beta$ as 0, 20, and 40 gCO2eq, representing 0.0, 7.3, and 14.6\% of California's average hourly carbon intensity (273 gCO2eq/kWh), respectively. 
For completeness, we added the case of $\beta=0$ where resource scaling does not incur any overheads. 
\autoref{fig:effect_beta_0} shows the case where switching is cost-free. 
We observe that \OWT  outperforms \aug, because \aug must also incorporate the robust decisions from \combalg (note that since $\beta=0$, \OWT is equivalent to \ROROpred). 
\autoref{fig:effect_beta_20} and \autoref{fig:effect_beta_40} explore realistic cases with non-zero switching emissions, where \aug nearly matches \RORO.  It lags behind on average and in the worst case by 0.6 and 10\%, respectively, when $\beta=20$ and by 1.1 and 32.3\%, respectively, when $\beta=40$. 
Finally, \autoref{fig:effect_beta_all} summarizes the average performance of all the algorithms, highlighting the effectiveness of our proposed approach. As expected, higher $\beta$ leads to higher $\nicefrac{\texttt{ALG}}{\texttt{OPT}}$ for the algorithms that do not incorporate switching emissions. However, increasing $\beta$ does not affect the relative performance of the algorithms; \aug achieves average performance within 1.2\% of \RORO and 16\% of the offline optimal, which constitutes 32\% carbon savings compared to \agnostic.

\vspace{-0.1cm}
\subsection{Effect of Job Length Prediction Error}
As explained in \autoref{sec:augmentation}, \aug augments robust algorithms with a prediction-based approach that leverages job length predictions. 
However, since typical batch job predictors are highly erroneous, we use an augmentation factor $\lambda$ that controls how much this job length prediction influences the final solution. \autoref{fig:lambda_error} shows the effect of various prediction accuracies, defined by prediction error~\%, and augmentation factors $\lambda$ on the empirical competitiveness of \aug under different maximum job lengths \cmax and switching emissions coefficients $\beta$. The scaling profile is \texttt{P1}.

The results show that as the prediction error increases, \aug should employ a lower augmentation factor for the predictions. 
For instance, at a job prediction error of 60\%, a mid-range augmentation factor (between 0.25 and 0.75) leads to better results than either extreme.  
Interestingly, the results show that higher prediction accuracy and augmentation factors do not always guarantee the highest performance. For instance, for 0\% error in job length predictions, fully augmenting with predictions does not yield the best $\nicefrac{\texttt{ALG}}{\texttt{OPT}}$. For example, when $\beta=10$, \cmax$=3$ (\autoref{fig:lambda_error_3_10}) and $\beta=40$, \cmax$=6$ (\autoref{fig:lambda_error_6_40}) setting $\lambda=0.75$ outperforms fully augmenting with predictions ($\lambda=1$) by 21\% and 33\%, respectively. 

This counter-intuitive result manifests because, during compulsory execution, \ROROpred must run with the maximum available resources, regardless of the job length prediction accuracy. This can lead to increased switching emissions compared to scenarios where robust algorithms contribute more to the decision-making in \aug, potentially reducing the time spent in the compulsory execution.  This may happen since the robust algorithm caters to the upper bound and sets a less conservative threshold. 
Higher values of $\beta$ and \cmax can intensify this phenomenon, as more switching emissions are incurred when scheduling with the maximum available resources in the compulsory execution zone. 
In summary, \aug balances the decisions from the robust algorithm and the job length predictor by using a moderate augmentation factor $\lambda$; the results indicate that aside from different values of \cmax, $\beta$, and job length prediction errors, an augmentation factor 
$\lambda=0.5$ can perform within 4\% of \RORO and within 18\% of the offline optimal, which translates to 20.8\% carbon savings over \agnostic.

\begin{figure*}[t]
    \centering    \includegraphics[width=0.45\textwidth]{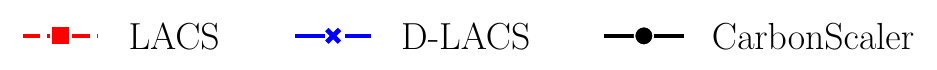}\\
    \vspace{-0.1cm}
    \begin{subfigure}[b]{0.24\textwidth}
        \centering
        \includegraphics[width=\textwidth]{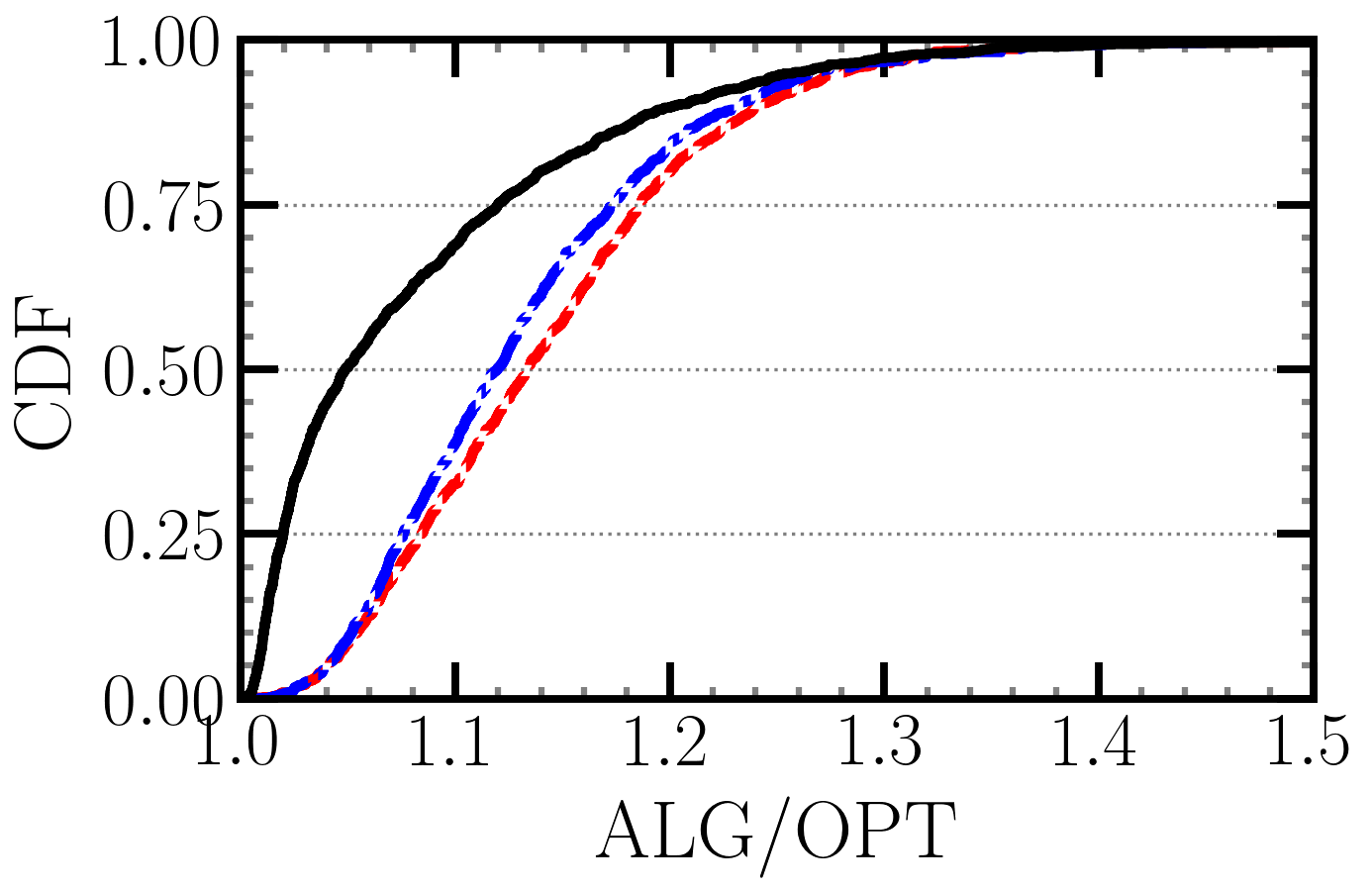}
        \vspace{-0.5cm}
        \caption{$\text{CI}_{\text{err}}=0$}
        \label{fig:carbon_err_0}
    \end{subfigure}
    \hfill
    \begin{subfigure}[b]{0.24\textwidth}
        \centering
        \includegraphics[width=\textwidth]{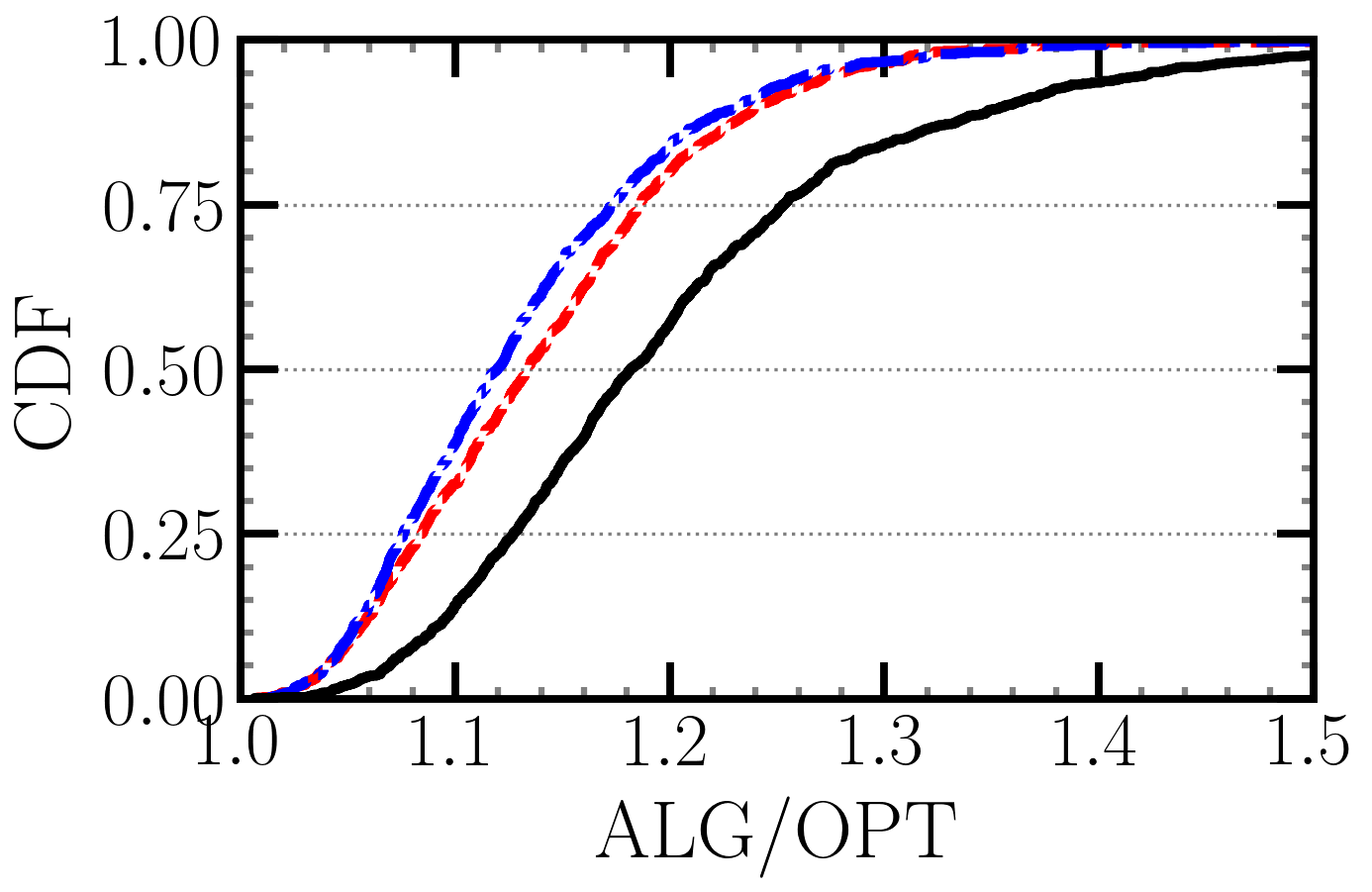}
        \vspace{-0.5cm}
        \caption{$\text{CI}_{\text{err}}=10\%$}
        \label{fig:carbon_err_10}
    \end{subfigure}
    \hfill
    \begin{subfigure}[b]{0.24\textwidth}
        \centering
        \includegraphics[width=\textwidth]{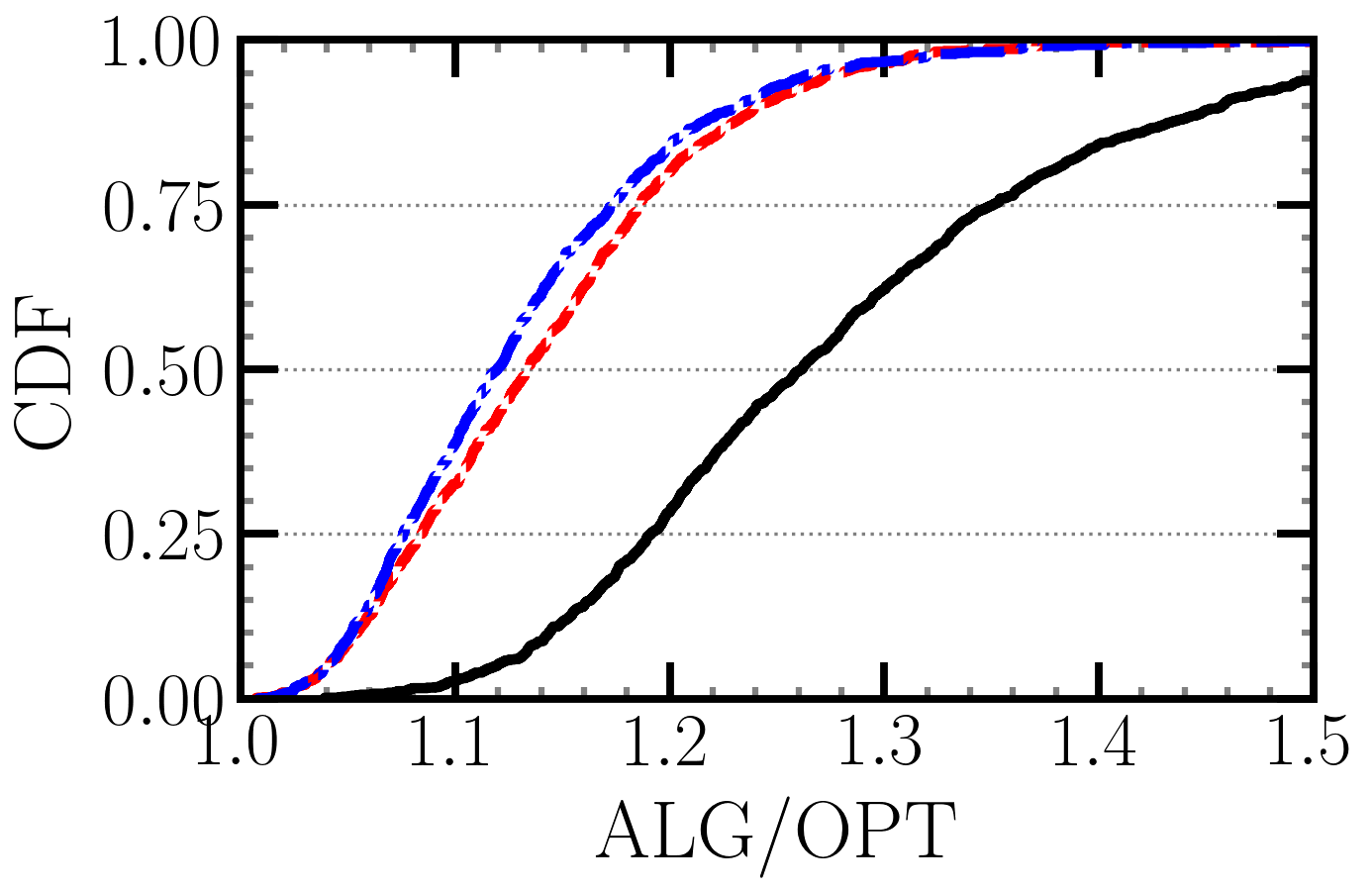}
        \vspace{-0.5cm}
        \caption{$\text{CI}_{\text{err}}=20\%$}
        \label{fig:carbon_err_20}
    \end{subfigure}
    \hfill
    \begin{subfigure}[b]{0.24\textwidth}
        \centering
        \includegraphics[width=\textwidth]{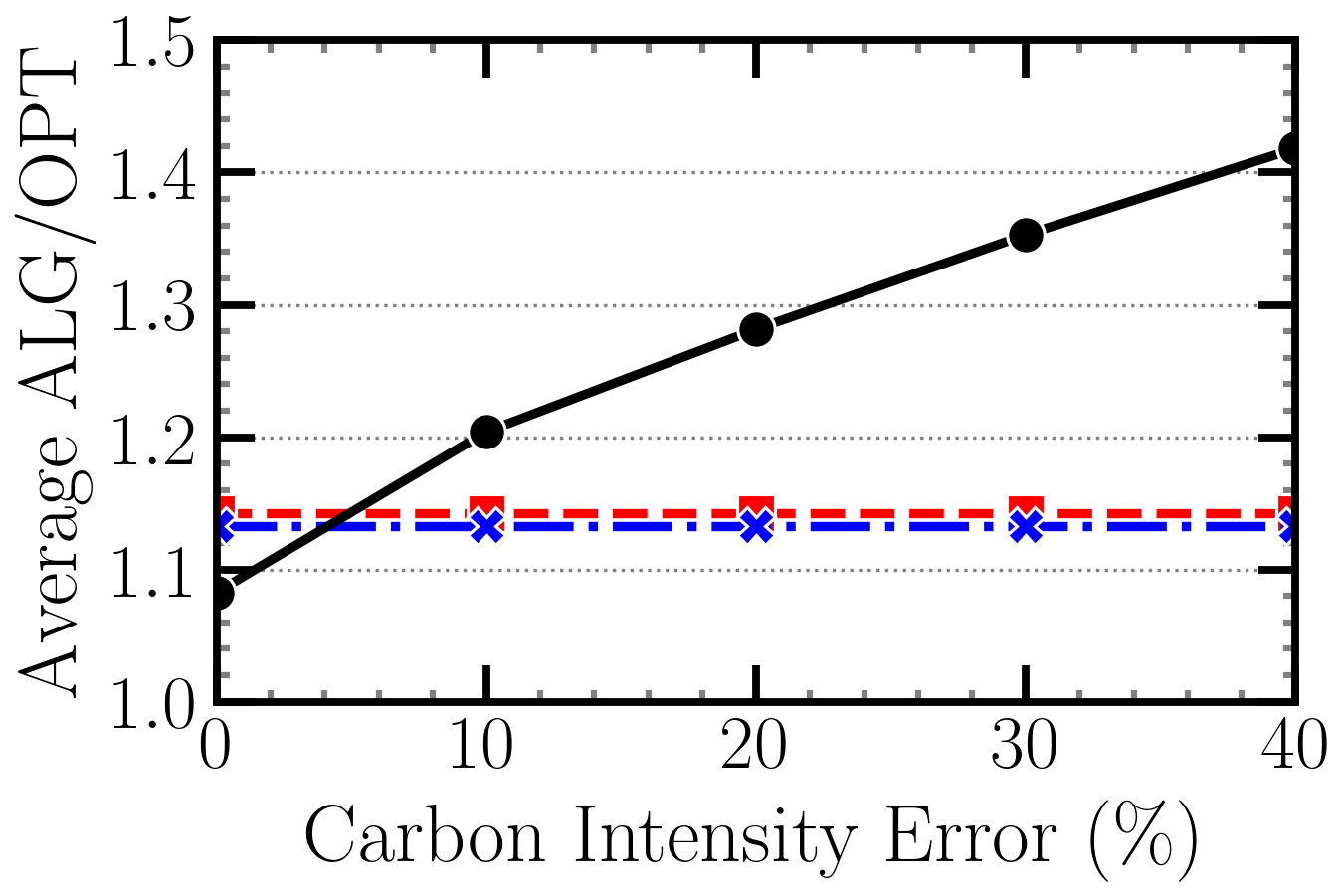}
        \vspace{-0.5cm}
    \caption{Effect of $\text{CI}_{\text{err}}$}
        \label{fig:carbon_err_all}
    \end{subfigure}
    \vspace{-0.35cm}
    \caption{(a), (b), and (c) report cumulative distribution functions (CDFs) of empirical competitive ratios for selected algorithms under different $\text{CI}_{\text{err}}$ values. (d) shows the effect of $\text{CI}_{\text{err}}$ on average $\nicefrac{\texttt{ALG}}{\texttt{OPT}}$. We assume scaling profile is \texttt{P2}, \cmax$=3$, $r=1/4$, $\beta=20$, job prediction error is $30\%$, and $\lambda=0.5$. A CDF curve towards the top left corner indicates better performance.}
    \label{fig:carbon_err}
\end{figure*}

\begin{figure}[t]
    \centering    \includegraphics[width=0.47\textwidth]{figures/sec2/integer_error/cdf_integer_labels.pdf}\\
    \vspace{-0.1cm}
     \begin{subfigure}[b]{0.236\textwidth}
        \centering
        \includegraphics[width=\textwidth]{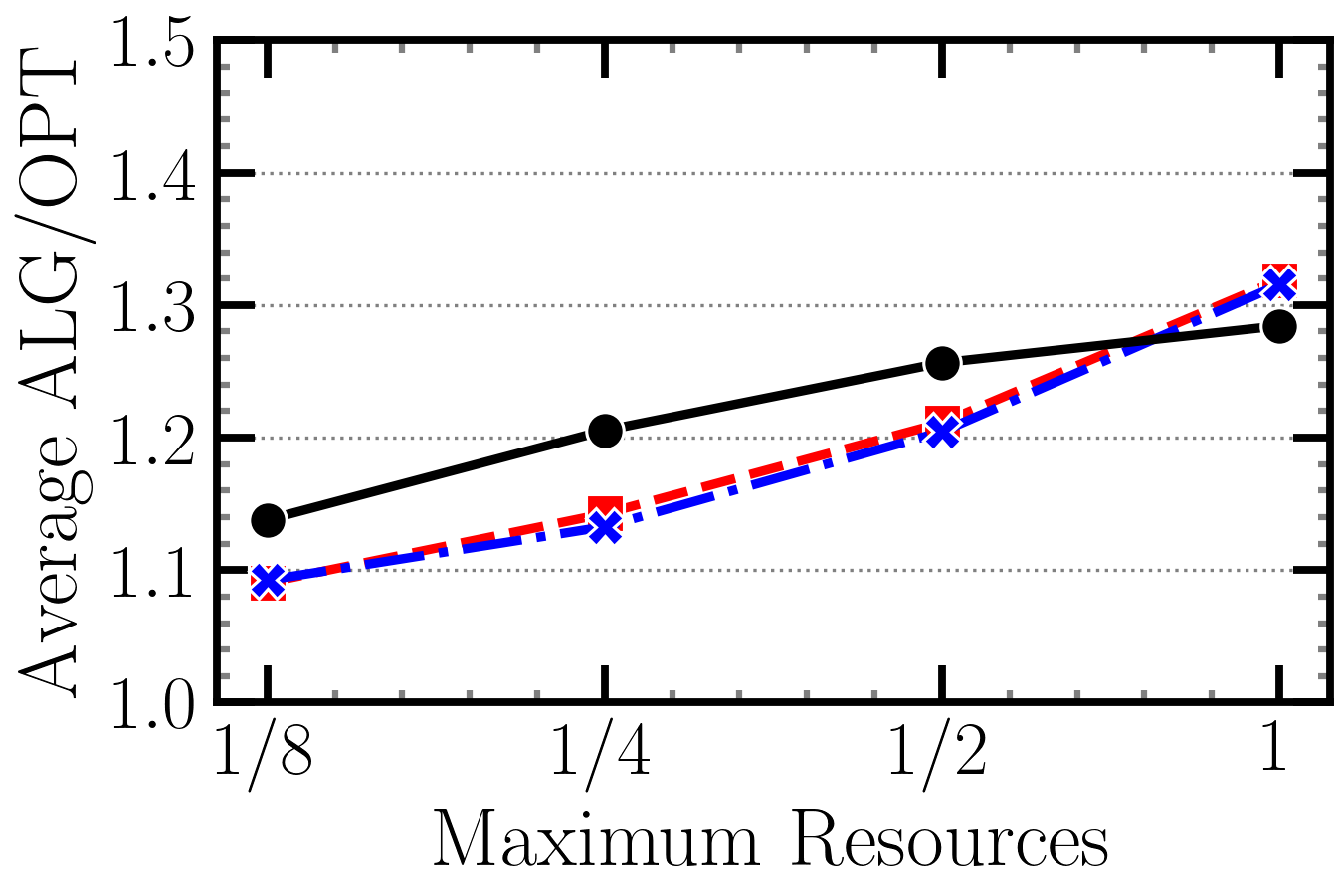}
        \vspace{-0.5cm}
        \caption{$\text{CI}_{\text{err}}=10\%$}
    \label{fig:rate_10}
    \end{subfigure}
    \begin{subfigure}[b]{0.236\textwidth}
        \centering
        \includegraphics[width=\textwidth]{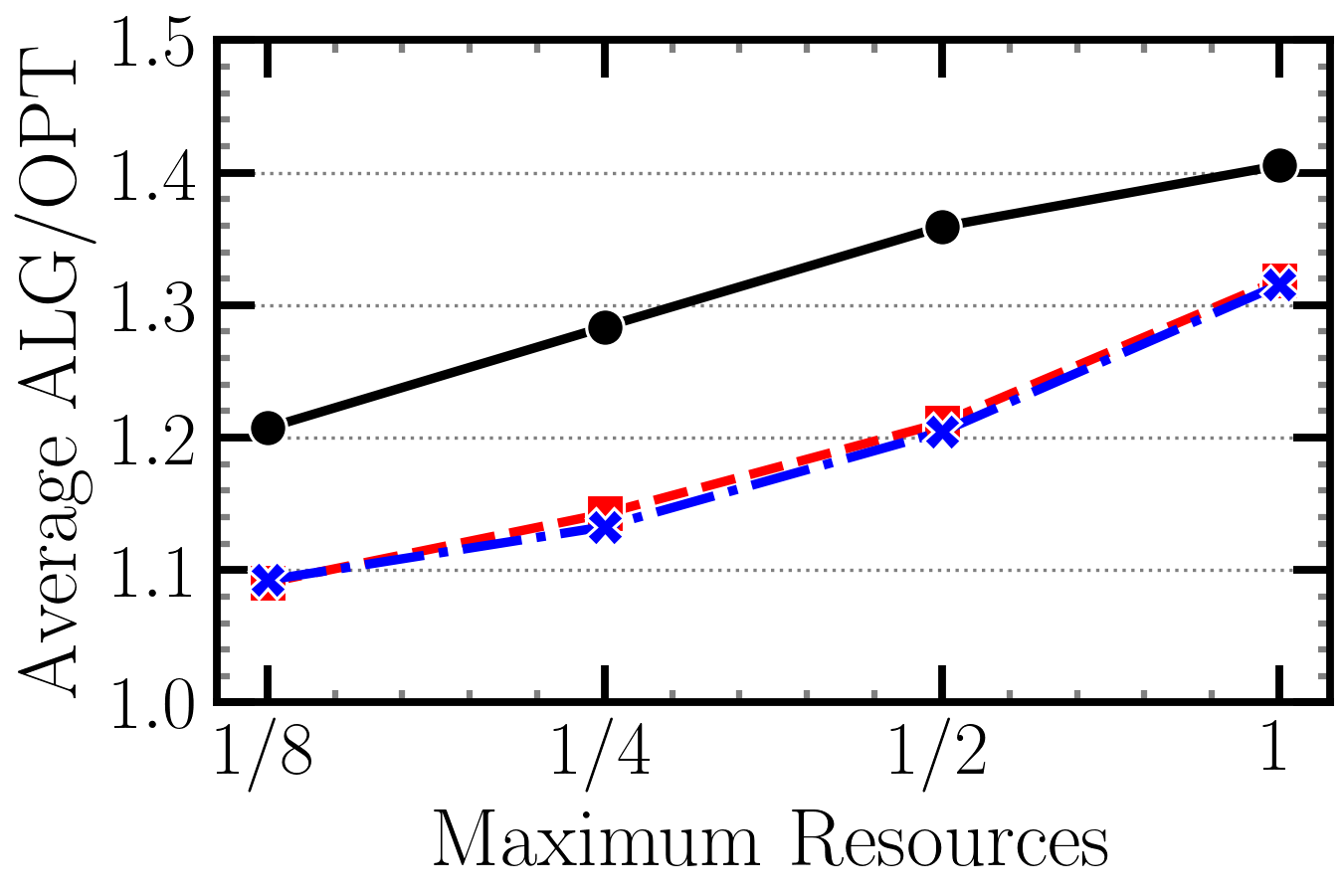}
        \vspace{-0.5cm}
        \caption{$\text{CI}_{\text{err}}=20\%$}
    \label{fig:rate_20}
    \end{subfigure}
    \vspace{-0.65cm}
    \caption{Average competitive ratios across policies under different maximum resource constraints $r$ and carbon intensity forecast error $\text{CI}_{\text{err}}$. We assume scaling profile is \texttt{P2}, $\cmax=3$, $\beta=20$, job prediction error is 30\% and $\lambda=0.5$. 
    }
    \vspace{-0.7cm}
    \label{fig:rates}
\end{figure}

\subsection{Real-world Considerations}

In the previous experiments, we assumed that resource allocation is continuous and resources can be acquired in any quantity. In practice, however, resources such as cores or servers must be allocated in discrete quantities and have physical and performance constraints. To accommodate such requirements, we discretize the scaling profile into fixed-size units, which map to different amounts of the job based on the scalability of the profile. Then, to map the scheduling decisions to discrete allocations, we round the decisions of \aug to the nearest discrete quantity, a policy we denote \texttt{Discrete-}\aug (\augd).  To create this discrete allocation, we dissect each resource unit into 8 equally sized segments.  Our evaluation shows that the discretization process has a negligible impact on our results, as observed across all the results in \autoref{fig:carbon_err} and \autoref{fig:rates}.  

Additionally, the maximum resources available to a job may be constrained for various reasons, including resource contention and cost considerations. 
To accommodate such constraints and analyze their effect, we bound the resources allocated to the job in a time slot to a maximum number $r$. 
We assume that the total number of resource units is 32, where resource $r$ takes values of $1/8,1/4, 1/2$, and $1$, denoting 4, 8, 16, and 32 resource units.
This ensures compatibility with the employed resource discretization.
In what follows, we evaluate the performance of these considerations and compare our proposed method to \carbonscaler\cite{Hanafy:23:CarbonScaler} that assumes both discrete resources and rate limits but relies on carbon intensity forecasts to make scheduling decisions.

\noindent
\textbf{Impact of carbon intensity forecast error.}
The requirement of fairly accurate carbon intensity forecasts significantly limits the practical deployment of \carbonscaler\cite{Hanafy:23:CarbonScaler}. 
We evaluate the effect of $\text{CI}_{\text{err}}$ on the performance of \carbonscaler and compare it with \aug that does not require $\text{CI}$ forecasts. 
In the experimental setup for \autoref{fig:carbon_err}, we choose profile \texttt{P2}\footnote{We do not use \carbonscaler with \texttt{P1} as it will scale to the maximum resources during expected low carbon intensity slots, exacerbating the effect of $\text{CI}_{\text{err}}$.}, impose a resource constraint ($r$=$1/4$), consider job length prediction error of 30\%, while keeping the rest of the setup the same as \autoref{fig:effect_beta}.
We evaluate the performance of \aug, \augd\footnote{We note that the optimal policy does not mandate discrete assignments but considers the rate constraints, making it a lower bound for our solutions}, and \carbonscaler under different $\text{CI}_{\text{err}}$ values.


\autoref{fig:carbon_err_0} shows the baseline case without any forecast errors, i.e., $\text{CI}_{\text{err}}=0$, and shows that although \carbonscaler outperforms \augd on average by 4.6\%, \augd outperforms \carbonscaler by 4.8\% in the worst-case.
\autoref{fig:carbon_err_10} and \autoref{fig:carbon_err_20} show more realistic scenarios where the forecasts are erroneous and demonstrate the sensitivity of \carbonscaler to carbon intensity errors. 
At 10\% error, which is equivalent to the average error rate of state-of-the-art carbon intensity forecasting models such as~\cite{Maji:22:CC}, the performance of \carbonscaler is strictly below \augd, where \augd outperforms \carbonscaler by 6.3 and 4.8\% when $\text{CI}_{\text{err}}=10$ and by 13.1 and 7.9\% when $\text{CI}_{\text{err}} =20$, on average and in the worst case, respectively. 
\autoref{fig:carbon_err_all} depicts the performance of \carbonscaler and shows the applicability of \augd in the real world with erroneous or unavailable carbon intensity forecasts.

\noindent
\textbf{Impact of resource constraints.}
\autoref{fig:rates} shows the performance of \aug, \augd and \carbonscaler as a function of resource constraints. 
We set the scaling profile (\texttt{P2}), $\cmax=3$, switching emissions coefficient $\beta=20$, the job prediction error to 30\%, and augmentation factor as $\lambda=0.5$. 
The results indicate that enforcing a lower resource constraint narrows the gap between algorithms and the offline optimal, where the average performance of \augd ranges from 9 to 31\% of the offline optimal, rending  21 to 37\% carbon savings compared to the \agnostic policy. This is reasonable as a lower resource constraint means a lower degree of freedom, forcing all approaches to run similarly. 
Nevertheless, \augd outperforms \carbonscaler in almost all cases. 
For example, when $r=1/8$,  \augd outperforms \carbonscaler by 3.6 and 10\% when  $\text{CI}_{\text{err}}=10$ and  $\text{CI}_{\text{err}}=20$, respectively. 
The results are consistent with previous experiments, where higher $\text{CI}_{\text{err}}$ yields a higher gap between \augd and \carbonscaler, reaching 13.3\% when $\text{CI}_{\text{err}}=20$ and $r=1/4$. 
In the absence of resource constraints, the gap between \carbonscaler and \augd shrinks as \carbonscaler can fully scale up when encountering a good carbon intensity, possibly avoiding compulsory execution.
\begin{figure}
    \centering
    \includegraphics[width=0.4\textwidth]{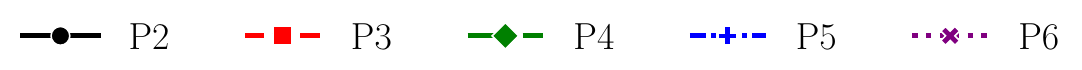}\\
     \begin{subfigure}[b]{0.236\textwidth}
        \centering
        \includegraphics[width=\textwidth]{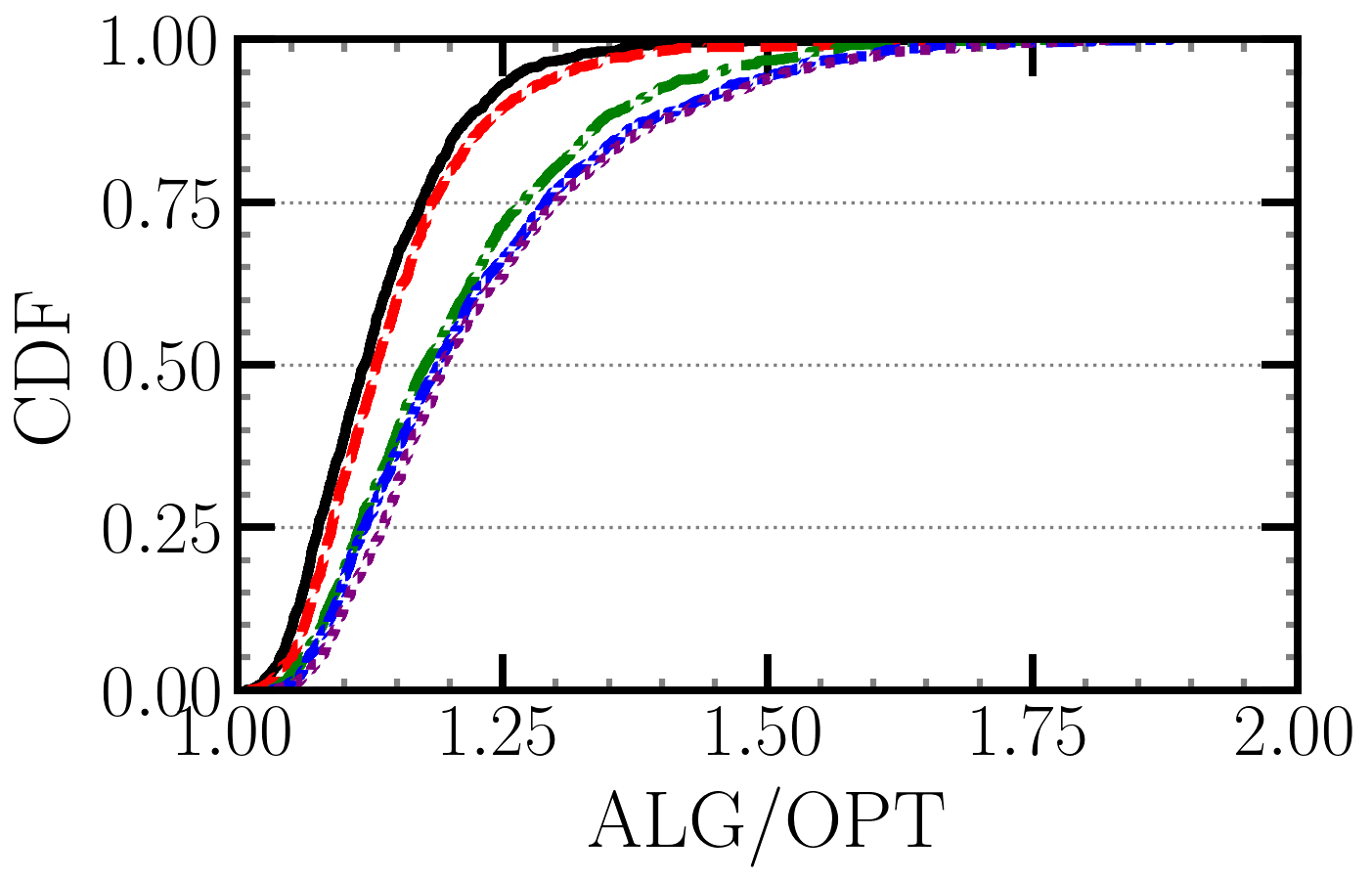}
        \vspace{-0.6cm}
        \caption{Maximum resources is 1/4}
    \label{fig:profiles_dlacs_8}
    \end{subfigure}
    \begin{subfigure}[b]{0.236\textwidth}
        \centering
        \includegraphics[width=\textwidth]{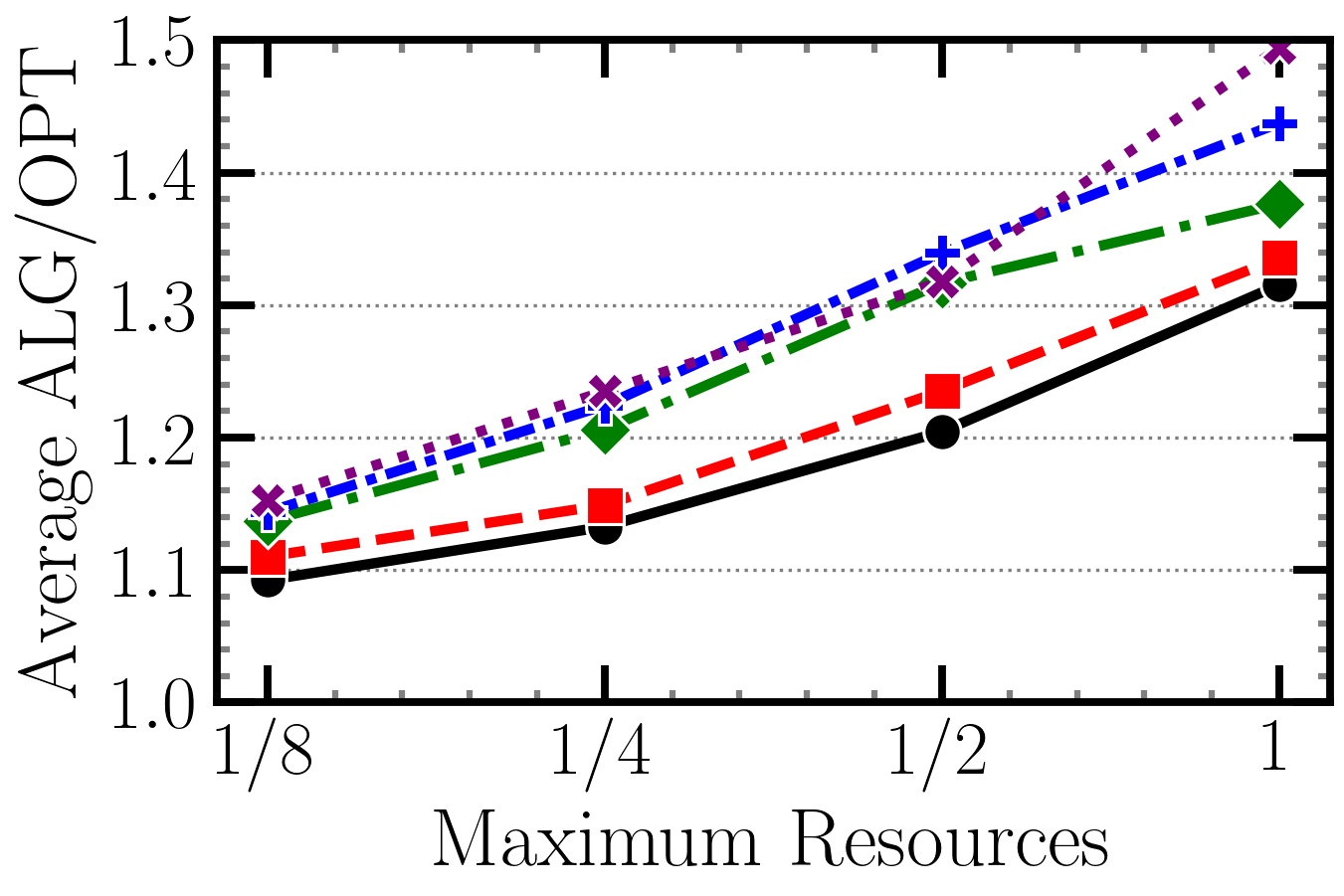}
        \vspace{-0.6cm}
        \caption{Effect of maximum resources}
    \label{fig:profiles_dlacs_all}
    \end{subfigure}
    \vspace{-0.65cm}
    \caption{Effect of profiles on Average competitive ratio across scaling profiles from \sref{Table}{tab:ScalingProfiles}. We assume \cmax$=3$, $r=1/4$, $\beta=20$ job prediction error is $30\%$, $\text{CI}_{\text{err}}=10$, and $\lambda=0.5$.}
    \label{fig:profiles_dlacs}
    \vspace{-0.6cm}
\end{figure}

\subsection{Effect of Scaling Profiles}
Many factors, including the network speed, and the ratio between workload's communication and computation, determine an application's scalability. To evaluate the effect of scaling profiles, we test various profiles (see \autoref{tab:ScalingProfiles}), which represent possible scaling profiles seen in the real world. 
\autoref{fig:profiles_dlacs} evaluates the performance of \augd for these different profiles. We set $\cmax=3$, switching emissions coefficient $\beta=20$, the job prediction error is 30\%, the carbon intensity forecast error is set to $\text{CI}_{\text{err}}=10\%$, and augmentation factor $\lambda=0.5$. \autoref{fig:profiles_dlacs_8} shows the performance of \augd across different profiles; \augd is more effective at higher scalability, but shows comparable competitiveness across profiles. 
As resources increase, jobs with worse scaling profiles will be more conservative in their decisions, forcing them to run at inefficient scales during the compulsory execution. \autoref{fig:profiles_dlacs_all}, depicts this behavior across scaling behavior and rates, showing that all scaling profiles experience less performance as the rate increases, which is consistent with earlier results.
The results indicate that across scaling profiles \augd performs between 9 and 15\% and 31 and 49\% of the offline optimal for $r=1/8$ and $r=1$, respectively.

\vspace{-0.1cm}
\section{Related Work} \label{sec:relwork}

\begin{table}[t]
\caption{Carbon-aware temporal shifting and scaling}
\vspace{-0.3cm}
\label{tab:related}
\resizebox{0.45\textwidth}{!}{%
\begin{tabular}{|c|c|c|c|c|c|}
\hline
Algorithm  & \begin{tabular}[c]{@{}c@{}}Unknown\\Job Length\end{tabular} & \begin{tabular}[c]{@{}c@{}}Forecast Not\\ Required\end{tabular} & Deadline & \begin{tabular}[c]{@{}c@{}}Switching\\ Cost \end{tabular} & \begin{tabular}[c]{@{}c@{}}Decision\\Space \end{tabular} \\ \hline \hline
WaitAwhile-Thr.~\cite{Wiesner:21} &    Yes    &    Yes     &   No    &   No    &     Shift           \\ \hline
WaitAwhile~\cite{Wiesner:21} &   No     &    No     &   Yes   &    No    &    Shift          \\ \hline
$k$-min search~\cite{Lorenz:08}  &  No  &    Yes     &   Yes  &   No   &    Shift    \\ \hline
Double Threshold~\cite{Lechowicz:23}  &  No  &    Yes     &   Yes  &   Yes   &    Shift    \\ \hline
Wait\&Scale~\cite{Souza:23}  & Yes   &    No     &  No   &  No   &    Scale    \\ \hline
CarbonScaler~\cite{Hanafy:23:CarbonScaler}  & No   &    No   & Yes   &   No   &    Scale    \\ \hline
$\mathsf{OWT}$~\cite{ElYaniv:01, SunZeynali:20}  &  No  &    Yes     &   Yes  &   No   &    Scale    \\ \hline
$\RORO$~\cite{Lechowicz:24}  & No   &    Yes   & Yes    &   Yes   &    Scale    \\ \hline\hline
\textbf{This work}  & Yes   &    Yes  &   Yes  & Yes &    Scale    \\ \hline
\end{tabular}
}
\vspace{-0.4cm}
\end{table}

To bridge the gap between the availability of low-carbon energy and demand, carbon-aware schedulers utilize the inherent flexibility of workloads to select an appropriate time and location to execute the workloads~\cite{sukprasert2023quantifying, Hanafy:23:CarbonScaler, Lechowicz:23, Lechowicz:24, Hanafy:23:War, Souza:23, Wiesner:21, Dodge:2022, Kim:2023, Thiede:2023:Containers, Zheng:2020:Curtailment, Souza:2023:Casper}. 
In this paper, we focus on a special case of carbon-aware temporal shifting of batch jobs, where schedulers decide on a scale factor at each time step ranging from 0 (i.e., suspending) to a user-defined max factor. 
In addition, the problem of carbon-aware scheduling has historically been closest to \textit{online search problems} such as $k$-min search~\cite{Lorenz:08, Lee:24}, one-way trading~\cite{ElYaniv:01, SunZeynali:20, mohr2014online, SunLee:21, Damaschke:07}, and   
online knapsack~\cite{Marchetti:95, Zhou:08, Bockenhauer:14, Cygan:16, Zeynali:21, Yang2021Competitive, sun2022online}.  These have seen applications in e.g., cloud pricing~\cite{zhang2017optimal}, EV charging~\cite{SunZeynali:20, Bostandoost:23}, and stock trading~\cite{Lorenz:08}, among others.
Recently, two studies have explicitly extended online search ideas towards carbon-aware problems.  In~\cite{Lechowicz:23}, the authors present online pause and resume, which extends the $k$-min search problem to incorporate switching costs.  Similarly, the authors in~\cite{Lechowicz:24} present the $\mathsf{OCS}$ problem discussed in this paper, which introduces a linear switching cost into the formulation of one-way trading.
In contrast to all of the above literature, the \OCSUmin problem we propose is the first online search-type setting where the job length (i.e., demand) is uncertain.  
We summarize state-of-the-art work in carbon-aware scheduling and applicable methods in online scheduling in  
\sref{Table}{tab:related}, highlighting different assumptions regarding job length, dependency on carbon intensity forecasts, and compliance with deadlines. 

\section{Conclusion} \label{sec:conclusion}
This paper introduces \aug, a novel learning-augmented carbon-aware algorithm for resource scaling of computing workloads with uncertain job lengths. We analyzed the theoretical performance of \lacs using the framework of robustness-consistency of competitive online algorithms. Further, we evaluated the empirical performance of \lacs through extensive experimental, showing its superior performance as compared to an extensive set of baseline algorithms. 
\lacs, to be best of our knowledge, is the first algorithm with both theoretical guarantees and promising practical performance for carbon-aware resource scaling with unknown job lengths. 


\begin{acks}
Mohammad Hajiesmaili acknowledges this work is supported by the U.S. National Science Foundation (NSF) under grant numbers CAREER-2045641, CPS-2136199, CNS-2106299, CNS-2102963, and NGSDI-2105494. Prashant Shenoy's research is supported by NSF grants 2213636, 2105494, 2021693, 2020888, DOE grant DE-EE0010143, and VMware.

This material is based upon work supported by the U.S. Department of Energy, Office of Science, Office of Advanced Scientific Computing Research, Department of Energy Computational Science Graduate Fellowship under Award Number DE-SC0024386.
\end{acks}

\section*{Disclaimers}
This report was prepared as an account of work sponsored by an agency of the United States Government. Neither the United States Government nor any agency thereof, nor any of their employees, makes any warranty, express or implied, or assumes any legal liability or responsibility for the accuracy, completeness, or usefulness of any information, apparatus, product, or process disclosed, or represents that its use would not infringe privately owned rights. Reference herein to any specific commercial product, process, or service by trade name, trademark, manufacturer, or otherwise does not necessarily constitute or imply its endorsement, recommendation, or favoring by the United States Government or any agency thereof. The views and opinions of authors expressed herein do not necessarily state or reflect those of the United States Government or any agency thereof.

\clearpage
\bibliographystyle{ACM-Reference-Format}
\bibliography{paper}
\clearpage
\onecolumn

\appendix
\section*{Appendix}

\section{Deferred Proofs from \sref{Section}{sec:analysis}} \label{apx:analysis}

\subsection{Competitive Proofs for \ROROcmax} \label{apx:comp-alg1}
 In this section, we provide the proof of \autoref{th:alg1-alpha} that states \ROROcmax for \OCSUmin is $\alpha_1$-competitive, where $\alpha_1$ is defined in \autoref{eq:alg1-alpha}.

\begin{proof}[Proof of \autoref{th:alg1-alpha}]
Let $\cali \in \Omega$ denote any valid sequence, and let $w^{(j)}$ denote \ROROcmax's final progress before the compulsory execution that begins at time $j \leq T$. Note that $w^{(j)} \in [0, $\cmax$]$. Assuming a job arrives with length $c$, the worst performance of \ROROcmax happens when $c=\cmin$, and \ROROcmax is $\alpha_1$-competitive. In this scenario, we have two cases:

\noindent\textbf{Case (i).} When the job is finished during the compulsory execution.

In this case, $w^{(j)} < \cmin$, and according to \textit{lemma B.2} in \cite{Lechowicz:24}, the offline optimum is lower-bounded by:
\begin{align}
    \texttt{OPT}(\cali) \geq \cmin (\phi_1(w^{(j)}) - \beta) \label{case2-low-opt}
\end{align}

Additionally, according to \textit{lemma B.3} in \cite{Lechowicz:24}, the online carbon emissions is upper-bounded by:
\begin{align}
    \ROROcmax(\cali) \leq \int_{0}^{w^{(j)}} \phi_1(u) du + w^{(j)} \beta + (\cmin - w^{(j)})U \label{cmax-up-bound} 
\end{align}

By combining inequalities \eqref{case2-low-opt} and \eqref{cmax-up-bound}, we have:
\begin{align}
    \frac{\ROROcmax (\cali)} {\texttt{OPT}(\cali)} \leq \frac{\int_{0}^{w^{(j)}} \phi_1(u) du + w^{(j)} \beta + (\cmin - w^{(j)})U}{\cmin (\phi_1(w^{(j)}) - \beta)} = \alpha
\end{align}

\noindent\textbf{Case (ii).} When the job is done without compulsory execution.
\begin{lemma}
    The carbon emissions of \ROROcmax is upper bounded by: 
    \begin{align}
        \ROROcmax(\cali) \leq \cmin \phi_1(0) + \cmin \beta  = \cmin \frac{U}{\alpha} + \cmin \beta
    \end{align}
\begin{proof}
    According to \textit{lemma B.3} in \cite{Lechowicz:24}, when the job length = c, and the job is done with no compulsory execution, \ROROcmax $\leq \int_{0}^{c} \phi_1(u) du + \beta c$. In the worst-case scenario, the job length is \cmin; hence, we have:
    \begin{align}
       \ROROcmax (\cali) \leq \int_{0}^{\cmin} \phi_1(u) du + \beta \cmin &\leq \cmin \phi_1(0) + \cmin \beta \label{case1-up-bound-alg1} = \cmin \frac{U}{\alpha}+ 2 \beta \cmin 
    \end{align}
\end{proof}
\end{lemma}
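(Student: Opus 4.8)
The plan is to establish this upper bound on \ROROcmax's total emissions by directly invoking the cost-accounting machinery of the underlying \RORO framework and then exploiting the monotonicity of the scaled threshold $\phi_1$. Since we are in Case (ii), the job of length $c=\cmin$ (the worst case for the competitive ratio) finishes \emph{without} triggering the compulsory execution, so every unit of progress is made by solving the pseudo-cost minimization of \autoref{alg:roro} with threshold $\phi_1$, and no unit is charged at the carbon-agnostic rate.

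First I would apply the analogue of \textit{Lemma B.3} from \cite{Lechowicz:24}, which bounds the total emissions (execution plus switching) of a \RORO-type algorithm that completes a job of length $c$ without compulsory execution by $\int_0^{c}\phi_1(u)\,du + \beta c$. The integral term captures the fact that, by the structure of the pseudo-cost objective, each marginal unit of accepted progress is charged at most the current threshold value, while the $\beta c$ term bounds the accumulated switching emissions, including the final ramp-down. Specializing to the worst-case length $c=\cmin$ gives $\ROROcmax(\cali) \leq \int_0^{\cmin}\phi_1(u)\,du + \beta\cmin$.

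Next I would bound the integral using the monotonicity of $\phi_1$. As with $\phi_{\OCS}$ in \sref{Section}{subsec:alg_background}, the coefficient $\left(\tfrac{U}{\alpha}-U+2\beta\right)$ is negative under the definition of $\alpha$ in \autoref{eq:ocs-alpha}, so $\phi_1$ is monotonically decreasing on $[0,\cmax]\supseteq[0,\cmin]$. Hence $\phi_1(u)\leq\phi_1(0)$ for all $u\in[0,\cmin]$, yielding $\int_0^{\cmin}\phi_1(u)\,du \leq \cmin\,\phi_1(0)$ and therefore $\ROROcmax(\cali)\leq\cmin\,\phi_1(0)+\cmin\beta$. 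Finally I would evaluate $\phi_1(0)$ by substituting $w=0$ into \autoref{eq:alg1-thresh}: the exponential becomes $1$, the $U$ and $-U$ terms cancel, and $-\beta+2\beta=\beta$, giving $\phi_1(0)=\tfrac{U}{\alpha}+\beta$; substituting back produces the claimed closed form $\cmin\tfrac{U}{\alpha}+2\beta\cmin$.

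The main obstacle is not the arithmetic but the careful transfer of \textit{Lemma B.3} from its original \OCS setting (where $c=1$ and the threshold is $\phi_{\OCS}$) to the present \OCSUmin setting with the rescaled threshold $\phi_1$ and a general length $c$. I would need to verify that the accounting argument depends only on the monotone, bounded structure of the threshold and on the fact that progress is driven by the pseudo-cost rule, not on the normalization $c=1$, so that replacing $\phi_{\OCS}$ by $\phi_1$ and carrying $c$ symbolically leaves the bound intact. A secondary point to confirm is that the sign condition guaranteeing $\phi_1$ is decreasing on $[0,\cmin]$ is exactly the one already ensuring $\phi_{\OCS}$ is decreasing, so no new constraint on $\alpha$, $\beta$, $U$, or $L$ is introduced.
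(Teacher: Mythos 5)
Your proposal is correct and takes essentially the same route as the paper's proof: invoke Lemma B.3 of \cite{Lechowicz:24} with $c=\cmin$, bound $\int_0^{\cmin}\phi_1(u)\,du \leq \cmin\,\phi_1(0)$ via the monotone decrease of $\phi_1$, and evaluate $\phi_1(0)=\frac{U}{\alpha}+\beta$. Your final expression $\cmin\frac{U}{\alpha}+2\beta\cmin$ matches what the paper's proof actually derives and later uses (the ``$\cmin\frac{U}{\alpha}+\cmin\beta$'' in the displayed lemma statement is a typo, since $\cmin\,\phi_1(0)+\cmin\beta = \cmin\frac{U}{\alpha}+2\beta\cmin$).
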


The right hand side of inequality \eqref{case1-up-bound-alg1} stands correct because $\phi_1(u)$ is monotonically decreasing, and $\phi_1(u) \leq \phi_1(0)=\frac{U}{\alpha}+\beta$. 

The lower bound of $\texttt{OPT}(\cali)$ is:
\begin{align}
    \texttt{OPT}(\cali) \geq \cmin L \label{case1-low-bound-opt1}
\end{align}

Combining inequalities \eqref{case1-up-bound-alg1} and \eqref{case1-low-bound-opt1}, we have:
\begin{align}
    \frac{\ROROcmax (\cali)}{\texttt{OPT}(\cali)} \leq \frac{\cmin \frac{U}{\alpha}+ 2 \beta \cmin}{\cmin L} = \frac{U}{\alpha L} + \frac{2\beta}{L} = \alpha_{1b}
\end{align}
Comparing $\alpha_{1b}$ against $\alpha$, we note that when \ROROcmax completes the job without compulsory execution, it has \textit{``overspent''} compared to the \RORO algorithm which knows the actual job length.  This follows by recalling that the threshold functions of both algorithms are monotonically decreasing, and observing their respective threshold values at $w^{(j)} = \cmin$, which can be expressed as follows: $\phi_{\OCSUmin}(\cmin) = L + \beta < \phi_1(\cmin)$.  Thus, \ROROcmax allows a strictly higher carbon intensity to run the job with length \cmin, and the competitive ratio $\alpha_{1b}$ is worse.

Therefore, since $\alpha_1 \coloneqq \max \{\alpha_{1b}, \alpha\}$, we obtain the following final competitive bound:
\begin{align*}
    \alpha_1 = \frac{U}{\alpha L} + \frac{2\beta}{L}
\end{align*}
\end{proof}

\subsection{Competitive Proofs for \ROROcmin} \label{apx:comp-alg2}
In this section, we provide the proof of \autoref{th:alg2-alpha} that states the instantiation of \ROROcmin for \OCSUmin is $\alpha_2$-competitive, where $\alpha_2 = \alpha'$ and $\alpha'$ is defined in \autoref{eq:alphaprime}.

\begin{proof}[Proof of \autoref{th:alg2-alpha}]

\noindent\textbf{Roadmap of the proof:} We initially consider an intermediate algorithm, named \tmpalgg. This algorithm operates under the assumption that the length of each job is \cmin  and schedules the job based on a threshold function $\hat{\phi}(\hat{w}), \hat{w} \in [0, \cmin]$ (\sref{Definition}{def:alg2-thresh-hat}). We first compute this threshold function and the competitive ratio, preparing for the worst-case scenario where the actual job length is \cmax (\textbf{Step 1}).

\noindent Subsequently, we adapt the threshold function $\hat{\phi}(\hat{w})$, scaling it up by a factor of \cmax/\cmin to obtain a new threshold function $\phi_2(w), w \in [0,\cmax]$ (\autoref{eq:alg2-thresh}), which defines the operation of \ROROcmin -- we proceed to determine the competitive ratio ($\alpha_2$) for \ROROcmin, finding that it is less than that of \tmpalgg. This outcome reinforces the rationale behind choosing \ROROcmin over \tmpalgg, as \ROROcmin proves to be a more strategically sound choice under the given conditions.(\textbf{Step 2})

\noindent\textbf{Step 1}: Let $\cali \in \Omega$ denote any valid \OCSUmin sequence, and let $\hat{w}^j$ be the \tmpalgg's final progress before the compulsory execution, which begins at time step $j \leq T$, and $\hat{w}^{(j)} \in [0, \cmin]$.

\begin{definition}\label{def:alg2-thresh-hat}
Threshold function $\hat{\phi}$ for \OCSUmin solved by \ROROcmin for any progress $\hat{w} \in [0, \cmin]$:
    \begin{align}
        \hat{\phi}(\hat{w}) = U-\beta + \left( \frac{U}{ \alpha'}-U+2\beta \right) \exp \left( \frac{\hat{w}}{\cmin\alpha'} \right) \label{eq:alg2-thresh-hat} 
    \end{align}
\end{definition}

\begin{lemma}\label{lem:alg2-opt-low-bound}
    The offline optimum is lower bounded by $\texttt{OPT}(\cali) \ge (\hat{\phi}(\hat{w}^{(j)})-\beta)\cmin$.
\end{lemma}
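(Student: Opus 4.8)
The plan is to lower-bound the offline optimum $\texttt{OPT}(\cali)$ by reasoning about the threshold value $\hat{\phi}(\hat{w}^{(j)})$ that \tmpalgg\ has reached at the moment $j$ when compulsory execution begins. The key structural fact is that \tmpalgg\ uses a monotonically decreasing threshold function, so a decision to accept an amount of work at progress level $w$ certifies that the marginal cost at that point was at most the threshold value $\hat{\phi}(w)$. Conversely, the progress $\hat{w}^{(j)}$ having stalled at some level below \cmin\ certifies that all remaining cost functions in the sequence had marginal costs \emph{exceeding} the current threshold $\hat{\phi}(\hat{w}^{(j)})$ --- otherwise \tmpalgg\ would have accepted more work and made further progress. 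This is the standard pseudo-cost certificate underlying the \RORO\ framework, so I would invoke the analogous lemma (\textit{Lemma B.2}) from \cite{Lechowicz:24} that was already used in the \ROROcmax\ proof.

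First I would formalize the certificate: because $\hat{\phi}$ is monotonically decreasing on $[0,\cmin]$ and \tmpalgg\ stopped making progress at level $\hat{w}^{(j)}$, every cost function in the input sequence must have had a marginal cost of at least $\hat{\phi}(\hat{w}^{(j)})$ at the relevant allocation level (net of the switching term $\beta$). Next I would observe that the offline optimum must itself complete a job of length at least \cmin\ (the smallest possible actual length in the worst case being analyzed), and in doing so it must pay, for each unit of work, a marginal execution cost that is no smaller than the lowest marginal cost available in the sequence. Since the certificate establishes that no cost function offered a marginal cost below $\hat{\phi}(\hat{w}^{(j)}) - \beta$ (after accounting for the switching coefficient, which can lower the effective acceptance threshold by at most $\beta$), the optimal solution pays at least $(\hat{\phi}(\hat{w}^{(j)}) - \beta)$ per unit over at least \cmin\ units. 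Multiplying these gives the claimed bound $\texttt{OPT}(\cali) \ge (\hat{\phi}(\hat{w}^{(j)}) - \beta)\cmin$.

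Concretely, the execution would mirror the structure of \eqref{case2-low-opt} in the \ROROcmax\ proof, where the identical bound $\texttt{OPT}(\cali) \ge \cmin(\phi_1(w^{(j)}) - \beta)$ was asserted by direct appeal to \textit{Lemma B.2} of \cite{Lechowicz:24}. The only adaptation needed is to substitute the \tmpalgg\ threshold $\hat{\phi}$ and its stopping progress $\hat{w}^{(j)}$ for $\phi_1$ and $w^{(j)}$, and to confirm that the worst-case job length being prepared against is \cmin\ rather than \cmax\ (which is precisely the design assumption of \tmpalgg). Because the lemma from \cite{Lechowicz:24} is stated generically for any monotone threshold function arising in the \RORO\ pseudo-cost minimization, it applies verbatim once the threshold and progress variables are matched up.

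The main obstacle I anticipate is not the inequality chain itself --- which is essentially a direct invocation of the imported lemma --- but rather justifying cleanly that the $\beta$ correction enters exactly as a single additive $-\beta$ term. The subtlety is that the pseudo-cost minimization folds the switching penalty $\beta|x - x_{t-1}|$ into each decision, so the effective marginal threshold that a cost function must beat to be accepted is $\hat{\phi}(w) - \beta$ rather than $\hat{\phi}(w)$ itself; one must verify that this adjustment is tight and that the optimal offline solution cannot somehow exploit the switching structure to pay less than $(\hat{\phi}(\hat{w}^{(j)}) - \beta)$ per unit. I would handle this by appealing directly to the form of \textit{Lemma B.2} in \cite{Lechowicz:24}, which already establishes precisely this $-\beta$ accounting for the general \RORO\ threshold, rather than re-deriving it from scratch.
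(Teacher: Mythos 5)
Your proposal is correct and takes essentially the same approach as the paper: both arguments invoke Lemma B.2 of \cite{Lechowicz:24} to certify that, because \tmpalgg's progress stalled at $\hat{w}^{(j)}$ before compulsory execution, no cost function in the sequence offered a marginal cost below $\hat{\phi}(\hat{w}^{(j)})-\beta$, and both then lower-bound $\OPT(\cali)$ by this per-unit cost multiplied by the smallest possible job length $\cmin$. The paper likewise handles the $-\beta$ accounting by direct appeal to that imported lemma rather than re-deriving it, so there is no gap to report.
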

\begin{proof}
    According to \textit{lemma B.2} in \cite{Lechowicz:24}, 
    the optimal offline strategy, setting aside any additional switching emissions, involves completing the job at the most favorable cost function within the sequence $\{g_t(\cdot)_{t \in [T]}\}$. Suppose the best cost function occurs at an arbitrary step $m$ ($m \in [T], m \leq j$), denoted by $g_m(.)$. 
    
    \textit{Lemma B.2} in \cite{Lechowicz:24} further states that for any job length c, $\OPT(\cali) = c(\diffp{g_m}{x}) \geq c(\hat{\phi}(\hat{w}^{(j)})-\beta),\quad \hat{w}^{(j)} \in [0, c]$, where $\hat{w}^{(j)}$ is the progress before the compulsory execution. Given that $c \in [\cmin, \cmax]$ in \OCSUmin problem; the lower bound of \OPT is when $c = \cmin$ , mirroring the assumption made in the \tmpalgg algorithm; therefore, $\OPT(\cali) \geq (\hat{\phi}(\hat{w}^{(j)})-\beta)\cmin,\quad \hat{w}^{(j)} \in [0, \cmin]$   
\end{proof}

\begin{lemma} \label{lem:alg2-up-bound}
    The carbon emissions of $\tmpalgg(\cali)$ is upper-bounded by:
    \begin{align}
        \tmpalgg(\cali) \leq \int_{0}^{\hat{w}^{(j)}} \hat{\phi} (u) du + \beta \hat{w}^{(j)} + (\cmax-\hat{w}^{(j)})U 
    \end{align}
\end{lemma}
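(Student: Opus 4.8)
The plan is to split the total carbon emissions of \tmpalgg into the cost incurred during the \emph{decision phase} (the time steps $t \le j$ that precede compulsory execution) and the cost incurred during the \emph{compulsory phase} (the steps after $j$), and to bound each piece separately before summing. Structurally this statement is the \cmax-analog of Lemma B.3 in~\cite{Lechowicz:24}: \tmpalgg is nothing but a \RORO instance that plans for a job of length \cmin using the threshold $\hat{\phi}$, but because the true length $c$ may be as large as \cmax, the compulsory term must budget for completing up to $\cmax - \hat{w}^{(j)}$ remaining units of work rather than $c - \hat{w}^{(j)}$.

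For the decision phase I would invoke the defining pseudo-cost minimization of the \RORO framework with threshold $\hat{\phi}$. At each step $t \le j$, the decision minimizes $g_t(x) + \beta\lvert x - x_{t-1}\rvert - \int_{\hat{w}^{(t-1)}}^{\hat{w}^{(t-1)}+x}\hat{\phi}(u)\,du$ over a feasible set that always contains $x = 0$. Comparing the realized pseudo-cost against a reference decision at each step and then telescoping the threshold integral over the monotone progress $\hat{w}^{(0)} = 0 \to \hat{w}^{(j)}$ yields that the combined execution-plus-switching cost of the decision phase is at most $\int_0^{\hat{w}^{(j)}}\hat{\phi}(u)\,du + \beta\hat{w}^{(j)}$. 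This step follows essentially verbatim from the accounting in~\cite{Lechowicz:24}, since the $2\beta$ term built into $\hat{\phi}$ (see \sref{Definition}{def:alg2-thresh-hat}) is designed precisely to absorb the per-unit ramp-on/ramp-off switching, leaving the residual $\beta\hat{w}^{(j)}$ to account for the eventual turn-off overhead.

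For the compulsory phase I would use two structural facts of \OCSUmin. First, once compulsory execution begins at step $j$ the remaining work is $c - \hat{w}^{(j)}$, which is bounded above by $\cmax - \hat{w}^{(j)}$ because $c \le \cmax$ (Assumption 1). Second, by Assumption 2 the marginal cost obeys $dg_t/dx \le U$, so by convexity and $g_t(0)=0$ each unit completed while running at the maximum available resources costs at most $U$. Together these give a compulsory-phase bound of $(\cmax - \hat{w}^{(j)})U$, and summing with the decision-phase bound produces the claimed inequality. I expect the main obstacle to be the switching-cost bookkeeping at the boundary between the two phases: one must verify that the $\beta\hat{w}^{(j)}$ term genuinely covers the turn-off cost and that the switching incurred when ramping up to the maximum rate during compulsory execution is subsumed within the $(\cmax - \hat{w}^{(j)})U$ budget rather than leaving an uncharged surplus. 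This ramp-on/ramp-off accounting is the technical heart of the \RORO analysis, and the adaptation here is to carry it out in the regime where the realized job length can exceed the planning length \cmin.
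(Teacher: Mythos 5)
Your proposal takes essentially the same route as the paper: the paper's proof simply invokes Lemma B.3 of \cite{Lechowicz:24} to bound the pre-compulsory phase by $\int_{0}^{\hat{w}^{(j)}}\hat{\phi}(u)\,du + \beta\hat{w}^{(j)}$ and then observes that, since \tmpalgg does not know the true length $c$, the compulsory term $(c-\hat{w}^{(j)})U$ must be replaced by the worst-case budget $(\cmax-\hat{w}^{(j)})U$. The extra detail you supply (pseudo-cost telescoping for the decision phase, and the convexity/bounded-derivative argument giving cost at most $U$ per unit during compulsory execution) is precisely the content of that cited lemma, so the two arguments coincide.
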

\begin{proof}
    According to \textit{lemma B.3} in \cite{Lechowicz:24}, $\RORO(\cali)$ incurred carbon emissions for a job with a length of $c$ is upper-bounded by:
    \begin{align}
       \RORO(\cali) \leq \int_{0}^{{w}^{(j)}} \phi_{\texttt{OCS-min}} (u) du + \beta {w}^{(j)} + (c-{w}^{(j)})U, {w}^{(j)} \in [0, c]
    \end{align}
    where ${w}^{(j)}$ is the progress at time $j$ ($j \leq T$) before the compulsory execution. The term $(c-{w}^{(j)})U$ denotes the maximum carbon emissions while doing the compulsory execution to satisfy constraint \eqref{align:deadline}. Since \tmpalgg is not aware of the actual job length ($c$), the maximum carbon emitted during the compulsory execution is $(\cmax-\hat{w}^{(j)})U$. Therefore $\tmpalgg(\cali)$ is upper-bounded by $\int_{0}^{\hat{w}^{(j)}} \hat{\phi} (u) du + \beta \hat{w}^{(j)} + (\cmax-\hat{w}^{(j)})U, \ : \ \hat{w}^{(j)} \in [0, \cmin]$
\end{proof}

Combining \sref{Lemma}{lem:alg2-opt-low-bound} and \sref{Lemma}{lem:alg2-up-bound} gives:
\begin{align}
    \frac{\tmpalgg(\cali)}{\OPT(\cali)} &\leq \frac{\int_{0}^{\hat{w}^{(j)}} \hat{\phi} (u) du + \beta \hat{w}^{(j)} + (\cmax-\hat{w}^{(j)})U}{(\hat{\phi}(\hat{w}^{(j)})-\beta)\cmin} \leq \alpha' + \frac{U(\cmax-\cmin)}{\cmin(\hat{\phi}(\hat{w}^{(j)})-\beta)} = \alpha_{2a}
\end{align}

where the last inequality holds since for any $\hat{w} \in [0,\cmin]$:
\begingroup
\allowdisplaybreaks
\begin{align}
   &\int_{0}^{\hat{w}^{(j)}} \hat{\phi} (u) du + \beta \hat{w}^{(j)} + (\cmax-\hat{w}^{(j)})U \\
   &= \int_{0}^{\hat{w}^{(j)}} \left[ U-\beta+ \left(\frac{U}{\alpha'}-U+2\beta \right) \exp \left(\frac{\hat{w}^{(j)}}{\alpha' \cmin} \right) \right] + \beta \hat{w}^{(j)} + (\cmax - \hat{w}^{(j)})U  \\
   &= \alpha' \cmin \left( \frac{U}{\alpha'}-U+2\beta \right) \left[ \exp \left(\frac{w}{\alpha' \cmin} \right)-1 \right] + \beta \hat{w}^{(j)} + (\cmax - \hat{w}^{(j)})U + (U - \beta)\hat{w}^{(j)} \\
   &= \alpha' \cmin \left(\frac{U}{\alpha'}-U+2\beta \right) \left[ \exp \left(\frac{w}{\alpha' \cmin} \right)-1 \right] + U \cmax \\
   &= \alpha' \cmin \left[ U - 2\beta + \left( \frac{U}{\alpha'}+2\beta-U \right) \exp \left( \frac{\hat{w}^{(j)}}{\alpha' \cmin} \right) \right] + U (\cmax - \cmin) \\ 
   &= \alpha' \cmin (\hat{\phi}(\hat{w}^{(j)}) - \beta) + U (\cmax - \cmin)
\end{align}
\endgroup

In what follows, we will calculate the optimal $\alpha'$ (as in~\cite[Theorem 3.3]{Lechowicz:24}) to define the threshold function based on the assumptions established for \tmpalgg (namely, that \tmpalgg assumes the job has length \cmin and that it must complete the job during the compulsory execution if it has length $>$ \cmin).  
It is known that worst-case instances for online search problems such as \OCSUmin occur when inputs arrive in decreasing order of cost (i.e., carbon intensity)~\cite{SunZeynali:20, SunLee:21, Lechowicz:24}.  We formalize these \textit{$x$-instances} below.

\begin{definition}[$x$-instance for \OCSUmin]
For sufficiently large $m, n \in \mathbb{Z}$, we let $\delta \coloneqq \nicefrac{(U-L)}{m}$.  Given $x \in [L,U]$, $\cali_x \in \Omega$ is an $x$-instance for \OCSUmin which consists of $m_x \coloneqq 2 \lceil \nicefrac{(x-L)}{\delta} \rceil + 1$ alternating blocks of cost functions.  For $i \in [m_x -2]$, the $i$th block contains $n$ linear cost functions with coefficient $U$ if $i$ is odd, or a single linear cost function with coefficient $U - \lceil \nicefrac{i}{2} \rceil \delta $ when $i$ is even.  The last $2$ blocks consist of $n$ linear cost function with coefficients $(x+ \epsilon)$, followed by $n$ cost functions with coefficients $U$.
\end{definition}

As $m \rightarrow \infty$, the alternating blocks of single ``good'' cost functions continuously decrease down to $x$, and each of these blocks is interrupted by a long block of worst-case $U$ functions.  Note that $\cali_U$ is a simple stream of $n$ cost functions, all with coefficient $U$, and that the last cost function for any $\cali_x$ are always $U$ (i.e., the marginal emission is maximized during the compulsory execution).  

\begin{lemma}
Any deterministic online algorithm $\texttt{ALG}$ for \OCSUmin which assumes the job has length $\cmin$ (and is thus forced to complete longer jobs during the compulsory execution) has a competitive ratio of at least $\alpha'$ (where $\alpha'$ is as defined in \eqref{alg2-alpha-prime} ).
\end{lemma}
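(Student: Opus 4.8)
The plan is to adapt the standard adversarial lower-bound technique for online search problems (as in \cite{ElYaniv:01, SunZeynali:20, Lechowicz:24}) to the job-length-uncertain setting, using the family of $x$-instances $\{\cali_x\}_{x \in [L,U]}$ just defined, which realize the decreasing-order worst case. I would fix an arbitrary deterministic $\texttt{ALG}$ in the class that assumes length $\cmin$. Because $\cali_x$ and $\cali_{x'}$ share a common prefix whenever $x < x'$ (the good cost functions are strictly decreasing and only diverge once they drop below $x'$), $\texttt{ALG}$ cannot distinguish them until that divergence point; this lets me define a well-defined, monotone conversion function $w(x)$ equal to the total progress $\texttt{ALG}$ has committed by the time the best marginal cost revealed so far equals $x$. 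Monotonicity ($w$ non-increasing in $x$) follows since $\texttt{ALG}$'s progress can only grow as strictly better cost functions arrive.

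The next step is to bound both quantities in the ratio on $\cali_x$. For the numerator, I would lower-bound $\texttt{ALG}(\cali_x)$ by the cost of the work it converts against the decreasing prices, $\int u \, \lvert dw(u) \rvert$, plus the switching penalty it pays to ramp this allocation up and down (the $\beta$ terms), plus the unavoidable compulsory-execution cost incurred at rate $U$ on the portion of the (adversarially chosen) job that remains once the good prices vanish. For the denominator, I would upper-bound $\OPT(\cali_x)$ by an offline solution that defers entirely to the single cheapest block at rate $x$, giving an expression proportional to $x$ up to lower-order switching terms; here the adversarial choice of actual job length relative to $\cmin$ and $\cmax$ is what injects the $\nicefrac{\cmin}{\cmax}$ discrepancy between what $\texttt{ALG}$ can usefully convert at good prices and what $\OPT$ completes cheaply.

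I would then impose that $\texttt{ALG}$ be $\rho$-competitive uniformly over all $x \in [L,U]$. Writing the inequality $\texttt{ALG}(\cali_x) \le \rho\, \OPT(\cali_x)$ as a relation between $w$ and its inverse $\psi$ (price as a function of committed progress) and differentiating in $x$ yields a first-order ordinary differential inequality of the form $\psi(w) - U \le \rho\, c_{\text{eff}}\, \psi'(w)$, with an effective budget $c_{\text{eff}}$ determined by the length mismatch. Solving this ODE produces the exponential threshold shape matching \sref{Definition}{def:alg2-thresh-hat}, and the boundary condition --- that $\texttt{ALG}$ must have converted enough to cover its committed work by the time prices reach $L$, encoded through the $\nicefrac{\cmin}{\cmax}$ factor --- forces a transcendental equation in $\rho$. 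Recognizing this as a Lambert-$W$ identity (with $W$ the inverse of $ye^y$) gives exactly $\rho \ge \alpha'$ as defined in \eqref{eq:alphaprime}, completing the argument.

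The step I expect to be the main obstacle is the careful formulation of the boundary condition together with the bookkeeping of the switching cost $\beta$. I must argue precisely how the assumption of length $\cmin$, combined with the compulsory execution reserving capacity for the worst-case $\cmax$, constrains the admissible conversion functions so that the $\nicefrac{\cmin}{\cmax}$ ratio enters the exponent, while simultaneously ensuring that the $\beta$ terms introduced by ramping within the $x$-instances are accounted for consistently on both sides of the ratio. Getting these two ingredients right is exactly what separates the bound $\alpha'$ here from the known $\alpha$ of the job-length-known $\mathsf{OCS}$ setting.
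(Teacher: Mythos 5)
Your proposal is correct and takes essentially the same route as the paper's proof: the adversarial $x$-instances, a monotone conversion function describing the deterministic algorithm's progress as prices decrease, the competitive constraint pitting the algorithm's conversion cost plus the compulsory-execution term $(\cmax - h(x))U$ against $\OPT \to \cmax x$, and the boundary condition that at price $L$ the converted amount cannot exceed $\cmin$, which produces the transcendental equation solved by Lambert $W$. The only difference is technical rather than conceptual: you propose differentiating the competitive inequality into an ODE for the inverse (threshold) function, whereas the paper keeps the inequality in integral form for $h(x)$ and applies Gronwall's inequality --- two standard, interchangeable ways of solving the same functional inequality.
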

\begin{proof}
On any $x$-instance $\cali_x$, we may fully describe the actions of any deterministic algorithm $\texttt{ALG}$ via a conversion function $h(x) : [L, U] \rightarrow [0, \cmin]$.  Note that this function is unidirectional (irrevocable), and non-increasing in $[L,U]$ such that $h(x-\delta) \geq h(x)$, since processing $\cali_{x-\delta}$ is equivalent to first processing $\cali_x$ (besides the final two blocks) and then processing blocks with marginal emissions of $x - \delta$ and $U$.
The total emission of $\texttt{ALG}$ 
described by the conversion function $h(x)$ on instance $\cali_x$ is expressed as follows: 
\begin{align}
    \texttt{ALG}(\mathcal{I}_x) &= h(U/\alpha')U/\alpha' - \int_{U/\alpha'}^{x}udh(u)+(c-h(x))U \leq h(U/\alpha')U/\alpha' - \int_{U/\alpha'}^{x}udh(u)+(\cmax-h(x))U \label{eq7}
\end{align}

We note that on an instance $\cali_x$, $\texttt{OPT}(I_x) \rightarrow \cmax x$ as $\epsilon \to 0$ and $n$ is sufficiently large. Letting \texttt{ALG} be $\alpha'$-competitive, we then have the following necessary condition on the conversion function when considering equation \eqref{eq7}:
\begin{align}
    h(U/\alpha')U/\alpha' - \int_{U/\alpha'}^{x}udh(u)+(\cmax-h(x))U \leq \alpha' \cmax x \label{eq:integ-by-parts}
\end{align}
By integral by parts, \eqref{eq:integ-by-parts} implies that $h(x)$ must satisfy:
\begin{align}
    h(x) \geq \frac{\cmax U-\alpha' \cmax x}{U-x-2\beta} + \frac{1}{U-x-2\beta} \int_{U/\alpha'}^{x} h(u)du
\end{align}

By Gronwall's Inequality~\cite[p. 356, Theorem 1]{Mitrinovic:91}, we have:
\begin{align}
    h(x) &\geq \frac{U \cmax-\alpha'\cmax x}{U-x-2\beta} +  \Bigg[ \frac{U \alpha' \cmax-U \cmax-2\beta \cmax}{u+2\beta-U} - \alpha' \cmax \ln(u+2\beta-U) \Bigg ]_{U/\alpha'}^{x}\\
    &h(x) \geq \alpha' \cmax \ln(U/\alpha'+2\beta-U) - \alpha' \cmax \ln(x+2\beta-U)
\end{align}

By the problem definition, the job with length \cmin should be completed upon observing the best carbon intensity $L$, i.e., $h(x) \leq h(L) \leq \cmin$, giving the following:
\begin{align}
    \cmin/\cmax &\geq \alpha' \ln(U/\alpha'+2\beta-U) - \alpha' \ln(L+2\beta-U).
\end{align}
The optimal $\alpha'$ is obtained when the above inequality is binding, which gives the following:
{\small\begin{align}
\alpha' &= \left[ \frac{\cmax}{\cmin}  W \left[ \frac{\cmin}{\cmax} \left( \frac{2\beta}{U} + \frac{L}{U} - 1 \right) \exp \left( \frac{\cmin}{\cmax} \left(\frac{2\beta}{U} - 1 \right) \right) \right] - \frac{2\beta}{U} + 1 \right]^{-1} \label{alg2-alpha-prime}.
\end{align}}
\end{proof}

\noindent\textbf{Step 2}: By scaling up $\hat{\phi(w)}, w \in [0, \cmin]$ to the factor of \cmax/\cmin, we will have $\phi_2(w), w \in [0, \cmax]$ in \eqref{eq:alg2-thresh}. The competitive ratio of $\ROROcmin$ that utilizes $\phi_2(w)$ can be derived using the following two cases:

\noindent $\bullet$ \textbf{If $\ROROcmin$ completes any amount of the job before the compulsory execution.}  In this case, the analysis from \autoref{subsec:alg1-comp-amalysis} exactly translates to the $\ROROcmin$ setting. Substituting $\phi_2$ for $\phi_1$ gives the following competitive bound for this case:
\begin{align}
   \frac{\ROROcmin(\cali)}{\OPT(\cali)} = \frac{U}{\alpha' L} + \frac{2 \beta}{L}.
\end{align}

\noindent $\bullet$ \textbf{If $\ROROcmin$ completes none of the job before the compulsory execution.} 
In this case, we know that $\OPT(\cali)$ is lower-bounded by $\phi_2(0) - \beta$, because if a cost function better than $\phi_2(0) - \beta$ arrived during the instance $\cali$, $\ROROcmin$ would have completed a non-zero amount of the job before the compulsory execution.  This gives the following competitive bound:
\begin{align}
   \frac{\ROROcmin(\cali)}{\OPT(\cali)} = \frac{U \cmin}{\left[ \phi_2(0) - \beta \right] \cmin} = \frac{U}{\nicefrac{U}{\alpha'}} = \alpha'.
\end{align}

Because $\alpha'$ approaches $\nicefrac{U}{L}$ as $\nicefrac{\cmax}{\cmin}$ grows, the competitive ratio in the second case is the worst-case bound, yielding the following competitive bound for $\ROROcmin$:
\begin{align}
   \alpha_2 = \alpha'.
\end{align}

Here we note that $\alpha_{2}$ does not contain an extra linear dependence on $\nicefrac{\cmax}{\cmin}$ which \textit{is} present in $\alpha_{2a}$, implying that $\alpha_{2} \leq \alpha_{2a}$. This is intuitive, since even if \tmpalgg completes some fraction of the job before the compulsory execution, it must complete $(\cmax - \cmin)$ of the job during the compulsory execution, whereas the scaled threshold in \ROROcmin allows it to be more flexible. In the rest of the paper, we use the design of \ROROcmin as our baseline based on its improved theoretical bounds and its superior performance in experiments.

\end{proof}

\subsection{Analyzing the impact of rate constraints} \label{apx:rate}

We note that the rate constraint $d_t : t \in [T]$ surprisingly does not appear in the worst-case analysis of \ROROcmax and \ROROcmin.  In this section, we give intuitive justification to explain this dynamic for completeness.

Specifically, we show that when a threshold-based algorithm (i.e., \ROROcmax or \ROROcmin) achieves a certain competitive ratio for \OCSUmin when $d_t = c \ \ \forall t \in [j]$ (i.e., when the rate allows completing the entire job in a single time slot), the worst-case competitive ratio will stay constant if $d_t < c \ \forall t \in [j]$.

\begin{lemma} \label{lem:rate-const}
Let $\texttt{ALG}$ denote a threshold-based algorithm for \OCSUmin which uses a threshold function $\psi(w)$.  Suppose $\texttt{ALG}$ is $\eta$-competitive when $d_t = c \ \forall t \in [j]$. 
If $d_t < c \ \forall t \in [j]$, the competitive ratio of $\texttt{ALG}$ is still upper bounded by $\eta$.
\end{lemma}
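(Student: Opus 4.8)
The plan is to prove the bound by a reduction that converts any rate-constrained instance into an \emph{equivalent} unconstrained instance (one with $d_t = c$) on which $\texttt{ALG}$ makes identical decisions while $\OPT$ can only do better, so that the $\eta$-competitiveness already assumed for the $d_t = c$ regime carries over directly. I would fix an arbitrary instance $\cali$ of \OCSUmin with $d_t < c$ for all $t$, and build a companion instance $\cali'$ that keeps every cost function unchanged on its original domain but extends it past the rate limit by a maximal-slope linear piece --- that is, $g_t'(x) = g_t(x)$ for $x \le d_t$ and $g_t'(x) = g_t(d_t) + U(x - d_t)$ for $x \in (d_t, c]$ --- while relaxing the rate to $d_t' = c$. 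Each $g_t'$ is convex with marginals confined to $[L,U]$, so $\cali'$ is a valid instance of the $d_t = c$ setting in which $\texttt{ALG}$ is $\eta$-competitive by hypothesis.

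The first step is to argue $\texttt{ALG}(\cali') = \texttt{ALG}(\cali)$. Because the threshold obeys $\psi(w) \le \psi(0) < U$ for all $w$, any increment past $d_t$ in $\cali'$ carries a positive marginal pseudo-cost of $U - \psi(\cdot) > 0$ (modulo the bounded switching term), so the pseudo-cost minimizer over the relaxed domain $[0,\, c - w^{(t-1)}]$ coincides with the minimizer over $[0,\, \min(c - w^{(t-1)}, d_t)]$ used on $\cali$; thus $\texttt{ALG}$ selects the same $x_t$ at every step, and since $g_t'$ agrees with $g_t$ below $d_t$ its execution and switching emissions are unchanged. The second step is $\OPT(\cali') \le \OPT(\cali)$: relaxing the rate only enlarges the offline feasible set, and the sole new option --- running more than $d_t$ in a slot at the worst marginal emission $U$ --- is never strictly beneficial, so the offline optimum cannot rise.

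Chaining these gives $\nicefrac{\texttt{ALG}(\cali)}{\OPT(\cali)} = \nicefrac{\texttt{ALG}(\cali')}{\OPT(\cali)} \le \nicefrac{\texttt{ALG}(\cali')}{\OPT(\cali')} \le \eta$, and since $\cali$ was arbitrary the rate-constrained competitive ratio is at most $\eta$. The main obstacle I expect is the first step --- rigorously showing the encoded penalty deters $\texttt{ALG}$ from ever crossing $d_t$ in $\cali'$ --- which hinges on verifying $\psi(w) < U$ throughout (a consequence of $\beta < \nicefrac{(U-L)}{2}$ and $\alpha > 1$) and on controlling the switching term $\beta\lvert x - x_{t-1}\rvert$ so it cannot render an increment beyond $d_t$ attractive. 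Should that argument prove delicate, a direct fallback is to bound $\OPT$ from below by observing that, under the slackness of \textit{Assumption 6}, the total capacity priced below $\psi(w^{(j)}) - \beta$ is at most $w^{(j)}$ (otherwise $\texttt{ALG}$ would have absorbed it and advanced further), which forces $\OPT$ to finish the remaining $c - w^{(j)}$ units at marginal emission at least $\psi(w^{(j)}) - \beta$ and recovers the same ratio.
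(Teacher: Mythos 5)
Your proposed reduction --- extend each $g_t$ past $d_t$ at the maximal slope $U$, relax the rate to $c$, and argue $\texttt{ALG}$ never crosses $d_t$ on the relaxed instance --- is a genuinely different route from the paper's proof, which is a local exchange argument: it fixes a single slot $m$ where the rate binds, compares the instance against a reference whose cost function at $m$ sits exactly at the threshold value, and shows that moving to a ``good'' cost function $g_m'$ decreases $\texttt{ALG}$ by $\int_{w^{(m-1)}}^{w^{(m)}}\psi(u)\,du - g_m'(d_m)$ but decreases $\OPT$ only by $[\psi(w^{(m)})-\beta]d_m - g_m'(d_m)$, so monotonicity of $\psi$ implies the ratio can only improve. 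Parts of your plan are sound: since the exponential coefficient in the paper's thresholds is negative (equivalently, $\psi$ is decreasing), one gets $\psi(w)\le\psi(0)=U/\alpha+\beta<U-\beta$, so the marginal pseudo-cost beyond $d_t$ is at least $U-\beta-\psi(\cdot)>0$ even when the switching term works against you; and $\OPT(\cali')\le\OPT(\cali)$ is immediate from feasible-set inclusion.

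The genuine gap is in your first step: the claim that $\texttt{ALG}$ ``selects the same $x_t$ at every step'' on $\cali$ and $\cali'$ is false, because the compulsory execution trigger depends on the rates. Under Assumption 4 (and as implemented in the experiments), compulsory execution begins roughly when the remaining capacity $\sum_{t>j}d_t$ barely covers $\cmax-w^{(j)}$; with $d_t<c$ this happens many slots earlier than with $d_t'=c$. Inside that window the constrained run abandons the pseudo-cost rule and executes at full rate $d_t$ regardless of price, while the run on $\cali'$ is still in threshold mode and may wait --- so the two runs diverge, and the divergence can go the wrong way for your chaining. Concretely, take $c=\cmin$, prices at slope $U$ throughout the constrained run's (early) compulsory window, and prices at slope $L$ afterwards: the constrained run front-loads its compulsory phase and pays roughly $Uc$, while on $\cali'$ the algorithm rejects the $U$-slots, later absorbs the $L$-priced capacity within the caps $d_t$, and pays roughly $Lc$. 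Then $\texttt{ALG}(\cali)>\texttt{ALG}(\cali')$, and the inequality $\texttt{ALG}(\cali)\le\eta\,\OPT(\cali')\le\eta\,\OPT(\cali)$ you need cannot be concluded. Your fallback inherits the same defect: the observation that ``capacity priced below $\psi(w^{(j)})-\beta$ is at most $w^{(j)}$'' only constrains slots seen while the algorithm is still in threshold mode, whereas $\OPT$ may exploit good prices arriving during the constrained run's compulsory window. Any repair of the reduction has to treat the compulsory phase explicitly; note that the paper's own proof sidesteps it by restricting attention to what happens before compulsory execution (this is also why the lemma quantifies the rates only over $t\in[j]$), so a faithful fix of your argument would either impose the same restriction or couple the two runs' compulsory phases directly.
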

\begin{proof}
We consider the case where a rate constraint $d_t < c$ causes $\texttt{ALG}$ to make a decision which violates its worst-case competitive ratio of $\eta$.  At any arbitrary time slot $t$, the disconnect between the setting where $x_t \in [0, c]$ and the setting with rate constraints $< c$, where $x_t \in [0, d_t]$, is that $x_t$ cannot be~$> d_t$.  Intuitively, a challenging situation for $\texttt{ALG}$ under such a constraint is the case where $\texttt{ALG}$ would otherwise run more than $> d_t$ of the job during a period of ``good'' carbon intensity (before the compulsory execution), but it is now constrained from doing so.

We now show that such a situation implies that $\texttt{ALG}$ achieves a worst-case competitive ratio which is equal to or better than $\eta$.  Recall that $w^{(t)}$ denotes the progress of $\texttt{ALG}$ after time step $t$.

For an instance $\mathcal{I} \in \Omega$ and an arbitrary time step $m$, let $w^{(m)} = w^{(m-1)} + d_m$, implying that $x_m = d_m$.  For the sake of comparison, we first consider this time step with a cost function $g_m(\cdot)$ such that $g_m(x_m) + \beta \lvert x_m - x_{m-1} \rvert = \psi(w^{(m)}) x_m$, implying that even if the rate constraint was not present, $\texttt{ALG}$ would set $x_m = d_m$.  If no more of the job is completed by $\texttt{ALG}$ after time step $m$ (ignoring the compulsory execution), we know that $\texttt{ALG}$ is $\eta$-competitive (e.g., for \ROROcmax, $\eta = \alpha_1$, and for \ROROcmin, $\eta = \alpha_2$).

Consider the exact same setting, except with a substituted cost function $g_m'(\cdot)$, such that $g_m'(x_m) + \beta \lvert x_m - x_{m-1} \rvert < \psi(w^{(m)}) \cdot x_m$.  We denote this new instance by $\mathcal{I}'$.  This implies that without the presence of a rate constraint, $\texttt{ALG}$ would set $x_m > d_m$.  In other words, $g_m'( \cdot)$ has a ``good carbon intensity'', but $\texttt{ALG}$ cannot run as much of the job as it otherwise should due to the rate constraint.  

Note that $\OPT$ is also subject to the same rate constraint $d_m$.  Thus, we know that $\OPT(\mathcal{I}')$ is lower bounded by $[ \psi(w^{(m)}) - \beta ] (1 - d_m) + g_m'(d_m)$ -- the rest of the optimal solution is bounded by the final threshold value, since we assume that no more of the job is completed by $\texttt{ALG}$ after time step $m$.  

The worst-case carbon emission of $\texttt{ALG}$ is upper bounded by $\texttt{ALG}(\mathcal{I}') \le \texttt{ALG}(\mathcal{I}) - \int^{w^{(m)}}_{w^{(m-1)}} \psi(u) du + g_m'(d_m)$, which follows since we substitute the last portion of the threshold function (of ``width'' $d_m$) with the new cost function $g_m'(d_m)$.

Compared to the previous setting of $\mathcal{I}$, the $\OPT$ and $\texttt{ALG}$ solutions have both decreased -- $\OPT(\mathcal{I}')$ has decreased by a factor of $g_m'(d_m) - [ \psi(w^{(m)}) - \beta ] d_m$, while $\texttt{ALG}(\mathcal{I}')$ has decreased by a factor of $g_m'(d_m) - \int^{w^{(m)}}_{w^{(m-1)}} \psi(u) du$.

However, note that since $\psi$ is monotonically decreasing in $w$, by definition, $[ \psi(w^{(m)}) - \beta ] d_m < \int^{w^{(m)}}_{w^{(m-1)}} \psi(u) du$.  Thus, the cost of $\texttt{ALG}$ has improved \textit{more} than the cost of $\OPT$.  This then implies the following:
\begin{align*}
\frac{\texttt{ALG}(\mathcal{I})}{\OPT(\mathcal{I})} \leq \frac{\texttt{ALG}(\mathcal{I}) - \int^{w^{(m)}}_{w^{(m-1)}} \psi(u) du + g_m'(d_m)}{[ \psi(w^{(m)}) - \beta ] (c - d_m) + g_t'(d_m)} < \eta.
\end{align*}
At a high-level, this result shows that even if there is a rate constraint which prevents $\texttt{ALG}$ from accepting a good carbon intensity, the worst-case competitive ratio does not change.
\end{proof}

\subsection{\aug Consistency and Robustness for \OCSUmin} \label{apx:aug-con-rob}
In the following, we prove \autoref{th:aug-con-rob}, which states that the instantiation of \aug for \OCSUmin is $(\alpha + \gamma)$-consistent and $\left[ \left( 1-\frac{\gamma}{\alpha_1 - \texttt{sign}(\alpha_1 - \alpha_2)\epsilon - \alpha} \right) \alpha^{\max}_{\ROROpred} + \left( \frac{\gamma(\alpha_1 - \texttt{sign}(\alpha_1 - \alpha_2)\epsilon)}{\alpha_1 - \texttt{sign}(\alpha_1 - \alpha_2)\epsilon - \alpha} \right) \right]$-robust.
We note that consistency and robustness in learning-augmented algorithm design describe an algorithm's performance when the predictions are exactly accurate and entirely incorrect, respectively.  See \sref{Definition}{def:const-rob} for the formal definitions of both consistency and robustness.

\begin{proof}[Proof of \autoref{th:aug-con-rob}]
Initially, we start by noting that the online solutions given by \combalg and \aug are always feasible considering the constraint in \autoref{align:deadline}. Let $\cali \in \Omega$ be an arbitrary valid \OCSUmin sequence, and for a job with length $c$:
\begin{align*}
   \combalg(\cali):& \sum_{t=1}^{T}\Tilde{x}_t = \sum_{t=1}^{T} [kx_{1t} + (1-k)x_{2t}] > kc + (1-k)c > c  \\
   \aug(\cali):& \sum_{t=1}^{T}{x}_t = \sum_{t=1}^{T} [\lambda\hat{x}_t + (1-\lambda)\Tilde{x}_t] > \lambda c + (1-\lambda)c > c
\end{align*}

\begin{lemma}\label{lem:bounded-cost}
    The carbon emissions of \combalg is bounded by:
    \begin{align}
        \combalg(\cali) \leq k \ROROcmax(\cali) + (1-k) \ROROcmin(\cali)
    \end{align}
\end{lemma}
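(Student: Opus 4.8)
The plan is to decompose $\combalg(\cali)$ into its execution-emissions and switching-emissions components and bound each separately, exploiting the fact that each $\Tilde{x}_t = k x_{1t} + (1-k) x_{2t}$ is a convex combination of the two baseline decisions with nonnegative weights $k, 1-k \in [0,1]$. Recall that for any feasible decision sequence the total carbon emissions equal $\sum_{t=1}^T g_t(x_t) + \sum_{t=1}^{T+1} \beta \lvert x_t - x_{t-1} \rvert$, so I would first write out this expression for each of $\combalg$, $\ROROcmax$, and $\ROROcmin$. Here I would note the shared boundary conditions $x_{1,0} = x_{2,0} = 0$ and $x_{1,T+1} = x_{2,T+1} = 0$, which force $\Tilde{x}_0 = \Tilde{x}_{T+1} = 0$ as well, so all three solutions agree on their ``turn-on'' and ``turn-off'' endpoints. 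I would also remark that $\Tilde{x}_t \in [0, d_t]$ remains feasible since $[0,d_t]$ is convex.

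For the execution-emissions term, I would invoke the convexity of each cost function $g_t(\cdot)$ guaranteed by the problem setup. Convexity applied to the two-point combination gives $g_t(\Tilde{x}_t) = g_t(k x_{1t} + (1-k) x_{2t}) \leq k\, g_t(x_{1t}) + (1-k)\, g_t(x_{2t})$ for every $t \in [T]$, and summing over $t$ bounds the total execution emissions of \combalg by $k$ times that of \ROROcmax plus $(1-k)$ times that of \ROROcmin. For the switching-emissions term, I would use the triangle inequality together with the nonnegativity of the weights: writing $\Tilde{x}_t - \Tilde{x}_{t-1} = k(x_{1t} - x_{1,t-1}) + (1-k)(x_{2t} - x_{2,t-1})$ and applying the triangle inequality yields $\lvert \Tilde{x}_t - \Tilde{x}_{t-1} \rvert \leq k \lvert x_{1t} - x_{1,t-1} \rvert + (1-k) \lvert x_{2t} - x_{2,t-1} \rvert$, where the endpoint terms at $t=1$ and $t=T+1$ line up because of the shared initial and terminal conditions above. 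Multiplying through by $\beta \geq 0$ and summing over $t$ bounds the total switching emissions of \combalg by the same convex combination of the two baselines.

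Adding the two bounds term-by-term then yields the claimed inequality $\combalg(\cali) \leq k \ROROcmax(\cali) + (1-k)\ROROcmin(\cali)$. The argument is essentially routine: convexity handles the nonlinear execution cost and the triangle inequality handles the switching cost, and both go in the right direction precisely because the mixing weights are nonnegative and the endpoints coincide. I do not anticipate a substantive obstacle; the only care needed is the bookkeeping that confirms the boundary decisions $x_0 = x_{T+1} = 0$ are identical across all three solutions, so that the switching-cost comparison is valid block-by-block and no stray boundary term is left unmatched.
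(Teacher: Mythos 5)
Your proposal is correct and follows essentially the same route as the paper's proof: apply convexity of $g_t$ to bound the execution emissions of the convex combination, apply the triangle inequality (with nonnegative weights $k$, $1-k$) to bound the switching emissions, and sum. Your explicit check that the boundary decisions $x_0 = x_{T+1} = 0$ coincide across all three solutions is a point the paper leaves implicit, but it changes nothing substantive.
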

\begin{proof}
\begin{align*}
    \combalg(\cali) &= \sum_{t=1}^{T} g_t(\Tilde{x}_t) + \sum_{t=1}^{T+1} \beta |\Tilde{x}_t - \Tilde{x}_{t-1}| \\
    &= \sum_{t=1}^{T} g_t(k x_{1t} + (1 - k) x_{2t}) + \sum_{t=1}^{T+1} \beta |k x_{1t} + (1 - k) x_{2t} - k x_{2(t-1)} - (1 - k) x_{2(t-1)}| \\
    &\leq k \sum_{t=1}^{T} g_t(x_{1t}) + (1 - k) \sum_{t=1}^{T} g_t(x_{2t}) + \sum_{t=1}^{T+1} \beta |k x_{1t} - k x_{1(t-1)}| + \sum_{t=1}^{T+1} \beta |(1 - k) x_{2t}  - (1 - k) x_{2(t-1)}| \\
    &\leq k(\sum_{t=1}^{T} g_t(x_{1t}) + \beta |x_{1t} - x_{1(t-1)}|) + (1-k)(\sum_{t=1}^{T} g_t(x_{2t}) + \beta |x_{2t} - x_{2(t-1)}|)\\
    &\leq k \ROROcmax(I) + (1 - k) \ROROcmin(I)
\end{align*}    
\end{proof}
Since $\ROROcmax(\cali) \leq \alpha_1 \OPT(\cali)$ and $\ROROcmin(\cali) \leq \alpha_2 \OPT(\cali)$ by definition, we have:
\begin{align}
    \combalg(\cali) &\leq (k \alpha_1 + (1-k)\alpha_2) \OPT(\cali) \\
    \combalg(\cali) &\leq (k(\alpha_1 - \alpha_2) + \alpha_2) \OPT(\cali)
\end{align}
We  denote $\epsilon \in [0, |\alpha_1-\alpha_2|]$, and we set $k = 1- \frac{\epsilon}{|\alpha_1 - \alpha_2|}$; therefore, we have:
\begin{align}
    \combalg(\cali) \leq (\alpha_1 - \texttt{sign}(\alpha_1 - \alpha_2) \epsilon) \OPT(\cali) = \alpha_{\combalg} \OPT(\cali) \label{eq:alg-rob-alpha}    
\end{align}
where \texttt{sign}$(x)$ is the sign function.

By using the same proof in \sref{Lemma}{lem:bounded-cost}, we can show that:
\begin{align}
    \aug(\cali) \leq \lambda \ROROpred(\cali) + (1-\lambda) \combalg(\cali) \label{eq:alg-aug-up-bound}
\end{align}

\begin{lemma} \label{lem:const}
\aug is $(\alpha + \gamma)$-consistent with accurate predictions.
\end{lemma}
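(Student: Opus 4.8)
The plan is to leverage the convex-combination upper bound on $\aug(\cali)$ already established in \eqref{eq:alg-aug-up-bound}, namely $\aug(\cali) \leq \lambda \ROROpred(\cali) + (1-\lambda)\combalg(\cali)$, and to bound each of the two terms separately under the assumption that the job-length prediction is exact (so $\hat{c} = c$). Once both terms are expressed as multiples of $\OPT(\cali)$, the claimed bound $\alpha + \gamma$ should fall out directly from the definition of the augmentation factor $\lambda$ in \sref{Definition}{def:def-aug-con-rob-params}.

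The conceptual crux is the bound on \ROROpred. I would argue that when the prediction is accurate, \ROROpred is \emph{identical} to the \RORO algorithm run with perfect knowledge of the job length $c$: by construction it instantiates the dynamic threshold \eqref{eq:ocs-thresh} with $\hat{c} = c$, so its pseudo-cost minimization decisions coincide with those of \RORO on \OCS. Invoking the optimal competitive result for \OCS from~\cite{Lechowicz:24}, this yields $\ROROpred(\cali) \leq \alpha\, \OPT(\cali)$. The only point requiring care is that \ROROpred triggers compulsory execution based on $\cmax$ rather than on $c$; under Assumption~6 (sufficient slackness) together with an accurate prediction, the compulsory region is reached only after the threshold has been essentially exhausted, so this mismatch does not introduce an extra cost term that would degrade the $\alpha$ bound.

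For the second term I would reuse \eqref{eq:alg-rob-alpha} verbatim, which already gives $\combalg(\cali) \leq (\alpha_1 - \texttt{sign}(\alpha_1 - \alpha_2)\epsilon)\,\OPT(\cali)$. Combining the two bounds,
\[
\aug(\cali) \leq \big[\lambda\,\alpha + (1-\lambda)(\alpha_1 - \texttt{sign}(\alpha_1 - \alpha_2)\epsilon)\big]\,\OPT(\cali).
\]
The remaining step is purely algebraic: writing the bracket as $\alpha + (1-\lambda)\big(\alpha_1 - \texttt{sign}(\alpha_1 - \alpha_2)\epsilon - \alpha\big)$ and substituting $1-\lambda = \frac{\gamma}{\alpha_1 - \texttt{sign}(\alpha_1 - \alpha_2)\epsilon - \alpha}$ collapses the second summand to exactly $\gamma$, so the coefficient multiplying $\OPT(\cali)$ is $\alpha + \gamma$. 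This establishes $(\alpha+\gamma)$-consistency.

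I expect the main obstacle to be the rigorous justification of the reduction of \ROROpred to \RORO under accurate predictions — specifically, confirming that the differing compulsory-execution trigger ($\cmax$ versus $c$) and the deadline-feasibility guarantee do not add any cost beyond the $\alpha\,\OPT(\cali)$ term when $\hat{c} = c$. Everything following that bound is a one-line substitution of the definition of $\lambda$.
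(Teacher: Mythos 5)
Your proof is correct and takes essentially the same route as the paper's: both start from $\aug(\cali) \leq \lambda\,\ROROpred(\cali) + (1-\lambda)\,\combalg(\cali)$, identify $\ROROpred$ with $\RORO$ under an accurate prediction (modulo the compulsory-execution trigger) to obtain $\ROROpred(\cali) \leq \alpha\,\OPT(\cali)$, reuse the bound $\combalg(\cali) \leq (\alpha_1 - \texttt{sign}(\alpha_1-\alpha_2)\epsilon)\,\OPT(\cali)$, and substitute the definition of $\lambda$ so the combination collapses to $(\alpha+\gamma)\,\OPT(\cali)$. If anything, your discussion of why the $\cmax$-based compulsory-execution trigger does not degrade the $\alpha$ bound is more explicit than the paper's, which dismisses it as ``minor differences in the compulsory execution.''
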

\begin{proof}
We assume $\ROROpred(\cali)$ has the perfect prediction of the job length ($\hat{c} = c$), and by leveraging the perfect prediction (excepting minor differences in the compulsory execution), we have that $\ROROpred(\cali) = \RORO(\cali)$.  Therefore, by definition, $\ROROpred(\cali) \leq \alpha \OPT(\cali)$

Considering \autoref{eq:alg-rob-alpha} and \autoref{eq:alg-aug-up-bound}, we have:
\begin{align}
    \aug(\cali) &\leq \lambda \ROROpred(\cali) + (1-\lambda) \combalg(\cali) \\
    \aug(\cali) &\leq \lambda \alpha \OPT(\cali) + (1-\lambda) \alpha_{\combalg} \OPT(\cali) \\
    \aug(\cali) &\leq (\lambda \alpha + (1-\lambda)\alpha_{\combalg}) \OPT(\cali) \\
    \aug(\cali) &\leq (\lambda \alpha + (1-\lambda)(\alpha_1 - \texttt{sign}(\alpha_1 - \alpha_2)\epsilon)) \OPT(\cali)
\end{align}

Since $\alpha \leq \alpha_1 - \texttt{sign}(\alpha_1 - \alpha_2)\epsilon \quad \forall \epsilon \in [0, |\alpha_1-\alpha_2|]$, and we have $\gamma \in [0, \alpha_1 - \texttt{sign}(\alpha_1 - \alpha_2)\epsilon - \alpha ]$; therefore we set $\lambda = 1 - \frac{\gamma}{\alpha_1 - \texttt{sign}(\alpha_1 - \alpha_2)\epsilon - \alpha}$, and we have:
\begin{align}
    \aug(\cali) &\leq (\alpha + \gamma) \OPT(\cali)
\end{align}
\end{proof}

\begin{lemma} \label{lem:rob}
\aug is $\bigg( \left( 1-\frac{\gamma}{\alpha_1 - \texttt{sign}(\alpha_1 - \alpha_2)\epsilon - \alpha} \right) \alpha^{\max}_{\ROROpred}$ $ + \left( \frac{\gamma(\alpha_1 - \texttt{sign}(\alpha_1 - \alpha_2)\epsilon)}{\alpha_1 - \texttt{sign}(\alpha_1 - \alpha_2)\epsilon - \alpha} \right) \bigg)$-robust for any prediction.
\end{lemma}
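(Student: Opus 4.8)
The plan is to follow exactly the same convex-combination scaffolding used in the consistency proof (\sref{Lemma}{lem:const}), but to replace the favorable bound on \ROROpred with its worst-case counterpart. The starting point is the upper bound \autoref{eq:alg-aug-up-bound}, namely $\aug(\cali) \le \lambda\,\ROROpred(\cali) + (1-\lambda)\,\combalg(\cali)$, which I would reuse verbatim since its derivation (via the triangle inequality and the convexity argument of \sref{Lemma}{lem:bounded-cost}) does not depend on the prediction quality. For robustness the prediction may be arbitrarily wrong, so I cannot invoke $\ROROpred(\cali) \le \alpha\,\OPT(\cali)$; instead I substitute the worst-case guarantee $\ROROpred(\cali) \le \alpha^{\max}_{\ROROpred}\,\OPT(\cali)$, where $\alpha^{\max}_{\ROROpred}$ denotes the competitive ratio of \ROROpred under a maximally incorrect job-length prediction.

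The second ingredient is the bound on \combalg already established in \autoref{eq:alg-rob-alpha}, namely $\combalg(\cali) \le \big(\alpha_1 - \texttt{sign}(\alpha_1-\alpha_2)\epsilon\big)\,\OPT(\cali)$. Combining the two and factoring out $\OPT(\cali)$ yields
\[
\aug(\cali) \le \Big[ \lambda\,\alpha^{\max}_{\ROROpred} + (1-\lambda)\big(\alpha_1 - \texttt{sign}(\alpha_1-\alpha_2)\epsilon\big) \Big]\,\OPT(\cali).
\]
The final step is purely algebraic: substituting the chosen augmentation factor $\lambda = 1 - \frac{\gamma}{\alpha_1 - \texttt{sign}(\alpha_1-\alpha_2)\epsilon - \alpha}$ (so that $1-\lambda = \frac{\gamma}{\alpha_1 - \texttt{sign}(\alpha_1-\alpha_2)\epsilon - \alpha}$) reproduces the two terms in the claimed robustness bound exactly, with the coefficient $\lambda$ multiplying $\alpha^{\max}_{\ROROpred}$ and the term $\frac{\gamma(\alpha_1 - \texttt{sign}(\alpha_1-\alpha_2)\epsilon)}{\alpha_1 - \texttt{sign}(\alpha_1-\alpha_2)\epsilon - \alpha}$ arising from $(1-\lambda)\,\alpha_{\combalg}$.

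The one point requiring genuine care -- and the main obstacle -- is arguing that $\alpha^{\max}_{\ROROpred}$ is a finite, well-defined constant in the first place, since robustness demands a bound that holds regardless of prediction error. Here I would appeal to \autoref{alg:aug}'s use of compulsory execution triggered by the worst-case length \cmax: because \ROROpred always reserves enough remaining time to finish a job of length \cmax, it completes every feasible instance by the deadline even when $\hat{c}$ is badly misestimated, so its cost is bounded by finishing during compulsory execution and $\alpha^{\max}_{\ROROpred}$ exists. Given this, the feasibility of the combined solution (already verified at the start of the proof of \autoref{th:aug-con-rob}) together with the monotone structure of the convex combination make the remaining steps routine, and the stated robustness bound follows immediately.
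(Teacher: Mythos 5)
Your proof is correct for the lemma as stated, and its final assembly is the same as the paper's: start from $\aug(\cali) \le \lambda\,\ROROpred(\cali) + (1-\lambda)\,\combalg(\cali)$ (\autoref{eq:alg-aug-up-bound}), apply \autoref{eq:alg-rob-alpha} to \combalg, bound \ROROpred by $\alpha^{\max}_{\ROROpred}\,\OPT(\cali)$, and substitute $\lambda = 1 - \frac{\gamma}{\alpha_1 - \texttt{sign}(\alpha_1-\alpha_2)\epsilon - \alpha}$. The genuine difference is in how $\alpha^{\max}_{\ROROpred}$ is handled. You treat it as a black-box constant and argue only that it is finite and well defined (compulsory execution guarantees feasibility, so $\ROROpred(\cali)$ is at most on the order of $U\cmax$ plus switching emissions while $\OPT(\cali) \ge L\cmin$); this is sound, and it suffices provided $\alpha^{\max}_{\ROROpred}$ is read as the supremum of \ROROpred's competitive ratio over all predictions. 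The paper instead devotes the bulk of its proof to \emph{computing} this constant via the two maximally incorrect predictions: for $\hat{c}=\cmin,\ c=\cmax$ it shows \ROROpred exhausts its $\cmin$-scaled threshold and then pays $(\cmax-\cmin)U$ during compulsory execution, giving $\alpha^{1}_{\ROROpred} = \frac{\alpha}{\cmax} + \frac{\cmax-\cmin}{\cmax}\frac{U}{L}$; for $\hat{c}=\cmax,\ c=\cmin$, \ROROpred coincides with \ROROcmax and inherits $\alpha^{2}_{\ROROpred} = \frac{U}{\alpha L} + \frac{2\beta}{L}$; it then sets $\alpha^{\max}_{\ROROpred} = \max\{\alpha^{1}_{\ROROpred}, \alpha^{2}_{\ROROpred}\}$. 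What each buys: your route is shorter and modular, but it leaves the robustness guarantee uncomputable; the paper's explicit formulas are what make the bound numerically evaluable (e.g., in \autoref{fig:alpha}). One caveat cuts both ways: robustness requires $\ROROpred(\cali) \le \alpha^{\max}_{\ROROpred}\,\OPT(\cali)$ for \emph{every} prediction, not only the two extremes; your definitional reading gets this for free at the price of an uncharacterized constant, while the paper's two-case computation pins down the constant but leaves implicit that intermediate mispredictions are no worse than the extreme ones.
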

\begin{proof}
To calculate the competitive ratio of $\ROROpred(\cali)$ when the job length prediction error is maximized, we consider two cases:

\noindent\textbf{Case (i)} [$\hat{c}=\cmin, c=\cmax$]: Since \ROROpred assumes the job length is \cmin it will utilize the threshold function below:
\begin{align}
    \phi(w) = U-\beta + \left( \frac{U}{\alpha}-U+2\beta \right) \exp \left(\frac{w}{\cmin\alpha} \right) \label{eq:pred-rob-case1-thresh}
\end{align}

Let $w^{(j)}$ be the final progress before the compulsory execution at time $j \leq T$, and let $\cali \in \Omega$ be a \OCSUmin sequence that the minimum carbon intensity $L$ is revealed at time $m \leq j \leq T$. By disregarding extra switching emissions, $\OPT(\cali) \rightarrow \cmax L$. In \cite{Lechowicz:24}, by the definition of the threshold function for any job length, we use the threshold function for \cmin in \autoref{eq:pred-rob-case1-thresh}, when the minimum carbon intensity $L$ arrives at time $m$, the remained amount of the job until $w^{(j)}$ would be scheduled; hence, $w^m = w^{(j)}$ and $m=j$, and according to \textit{Lemma B.2} in \cite{Lechowicz:24}, $L = \phi(w^{(j)}) - \beta$ which means no other carbon intensities are accepted and the rest of the job ($\cmin-w^{(j)})$ should be done during compulsory execution.
By \textit{Lemma B.3} in \cite{Lechowicz:24} and observing that the rest of the job must be completed in the compulsory execution, $\ROROpred(\cali)$ is upper-bounded by:
\begin{align}
    \ROROpred(\cali) &\leq \left[ \int_{0}^{w^{(j)}} \phi(w) + \beta w^{(j)} + (\cmin - w^{(j)})U \right] + (\cmax - \cmin)U  \label{eq:pred-rob-case1-up-bound-eq1} \\ 
    &\leq [\alpha (\phi(w^{(j)}) - \beta] + (\cmax - \cmin)U \label{eq:pred-rob-case1-up-bound-eq2} \\
    & \leq \alpha L + (\cmax - \cmin)U \label{eq:pred-rob-case1-up-bound-eq3}
\end{align}

The term $(\cmax - \cmin)U$ in \eqref{eq:pred-rob-case1-up-bound-eq1} is the amount of remaining job that must be done during compulsory execution since $c = \cmax$.

Considering \eqref{eq:pred-rob-case1-up-bound-eq3} and the lower bound of $\OPT(\cali)$, we have:
\begin{align}
    \frac{\ROROpred(\cali)}{\OPT(\cali)} &\leq \frac{\alpha L + (\cmax - \cmin)U}{\cmax L} \leq \frac{\alpha}{\cmax} + \frac{\cmax - \cmin}{\cmax} \frac{U}{L} = \alpha^{1}_{\ROROpred} \label{eq:pred-rob-case1-alpha}
\end{align}

\noindent\textbf{Case (ii)} [$\hat{c}=\cmax, c=\cmin$]: In this case, 
\ROROpred uses the exact same threshold function as \ROROcmax. Thus, $\ROROcmax(\cali) = \ROROpred(\cali)$, and we inherit the following competitive bound:
\begin{align}
    \frac{\ROROpred(\cali)}{\OPT(\cali)} &\leq \frac{U}{\alpha L} + \frac{2 \beta}{L} = \alpha^{2}_{\ROROpred} \label{eq:pred-rob-case2-alpha}
\end{align}

Considering both \textbf{Case (i)} and \textbf{Case (ii)}, we let $\alpha^{\max}_{\ROROpred} = \max \{\alpha^{1}_{\ROROpred}, \alpha^{2}_{\ROROpred}\}$ to reflect the worst-case in either of these cases. By \autoref{eq:alg-aug-up-bound}, we have the following robustness bound:
\begin{align}
    \aug(\cali) &\leq \lambda \ROROpred(\cali)+ (1-\lambda) \combalg(\cali) \\
    \aug(\cali) &\leq \lambda \alpha^{\max}_{\ROROpred} \OPT(\cali) + (1-\lambda) \alpha_{\combalg} \OPT(\cali) \\
    \aug(\cali) &\leq (\lambda \alpha^{\max}_{\ROROpred} + (1-\lambda) (\alpha_1 - \epsilon)) \OPT(\cali) \\
    \aug(\cali) &\leq \Bigg( \left( 1-\frac{\gamma}{\alpha_1 - \texttt{sign}(\alpha_1 - \alpha_2)\epsilon - \alpha} \right) \alpha^{\max}_{\ROROpred} + \left( \frac{\gamma(\alpha_1 - \texttt{sign}(\alpha_1 - \alpha_2)\epsilon)}{\alpha_1 - \texttt{sign}(\alpha_1 - \alpha_2)\epsilon - \alpha} \right) \Bigg) \OPT(\cali)
\end{align}
\end{proof}
By combining the results of \sref{Lemma}{lem:const} and \sref{Lemma}{lem:rob}, the statement of \autoref{th:aug-con-rob} follows, and we conclude that \aug for \OCSUmin is $(\alpha + \gamma)$-consistent and $\left[ \left( 1-\frac{\gamma}{\alpha_1 - \texttt{sign}(\alpha_1 - \alpha_2)\epsilon - \alpha} \right) \alpha^{\max}_{\ROROpred} + \left( \frac{\gamma(\alpha_1 - \texttt{sign}(\alpha_1 - \alpha_2)\epsilon)}{\alpha_1 - \texttt{sign}(\alpha_1 - \alpha_2)\epsilon - \alpha} \right) \right]$-robust.
\end{proof}

\end{document}